\newtheorem{thm}{Theorem}[section]
\newtheorem*{mainthm}{Main Theorem}
\newtheorem{lem}[thm]{Lemma}
\newtheorem{prop}[thm]{Proposition}
\newtheorem{conj}[thm]{Conjecture}
\theoremstyle{definition}
\newtheorem{Def}[thm]{Definition}
\newtheorem{rem}[thm]{Remark}
\newtheorem*{ack}{Acknowledgement}
\numberwithin{equation}{section}
\numberwithin{figure}{section}
\numberwithin{table}{section}
\def\rchi{{\hbox{\raise1.5pt\hbox{$\chi$}}}}
\def\PI{(P_{\rm I})}
\def\p{\partial}
\newcommand{\bea}{\begin{eqnarray}}
\newcommand{\eea}{\end{eqnarray}}
\newcommand{\be}{\begin{equation}}
\newcommand{\ee}{\end{equation}}
\newcommand{\bC}{{\mathbb{C}}}
\newcommand{\bZ}{{\mathbb{Z}}}
\newcommand{\Res}{\mathop{\rm Res}}
\begin{document}
\large
\setcounter{section}{0}

\title[2-parameter $\tau$-function of Painlev\'e I]
{2-parameter $\tau$-function
for the first Painlev\'e equation \\[+.5em]
---Topological recursion and direct monodromy problem \\
via exact WKB analysis---}

\author[K.\ Iwaki]{Kohei Iwaki}
\address{Graduate School of Mathematics\\
Nagoya University\\
Nagoya, 464-8602, Japan}
\email{iwaki@math.nagoya-u.ac.jp}


\dedicatory{To the memory of Tatsuya Koike}

\begin{abstract}
We show that a 2-parameter family of 
$\tau$-functions for the first Painlev\'e equation
can be constructed by the discrete Fourier transform 
of the topological recursion partition function
for a family of elliptic curves. 
We also perform an exact WKB theoretic computation 
of the Stokes multipliers of associated 
isomonodromy system assuming certain conjectures. 
\end{abstract}


\subjclass[2010]{Primary: 34M55; 81T45.
Secondary: 34M60; 34M56}

\keywords{First Painlev\'e equation;
topological recursion; 
isomonodoromic deformation; 
exact WKB analysis}

\maketitle

\allowdisplaybreaks

\tableofcontents

\setlength{\parskip}{0.1ex}

\section{Introduction}

{\em Painlev\'e transcendents} are remarkable special functions 
which appear in many areas of mathematics and physics. 
These are solutions of certain nonlinear ODEs 
known as {\em Painlev\'e equations} (\cite{Painleve}). 
Originally, there were six Painlev\'e equations 
(from Painlev\'e I to Painelv\'e VI), 
but now several generalizations of the Painlev\'e equations 
(($q$-)discrete, elliptic, higher order analogues) are discovered
(see \cite{FS, GJP, KNS, Noumi-Yanada, RGH, Sakai} for instance).
More than 100 years have passed since the discovery of 
the Painlev\'e equations, 
and lots of beautiful properties have been revealed through various approaches 
(isomonodromic deformation, Riemann-Hilbert method, 
affine-Weyl symmetry, the space of initial conditions, 
and so on). However, even for the classical six Painlev\'e equations,
there are still several open problems 
to be considered (see \cite{Clarkson2} for example).

A surprising development has been achieved in 2012 
by Gamayun-Iorgov-Lisovyy (\cite{GIL}).
They studied the {\em Painlev\'e VI equation} 
\begin{align}
\frac{d^{2} q}{dt^{2}} & = 
\frac{1}{2} \biggl( 
\frac{1}{q} + \frac{1}{q - 1} 
+ \frac{1}{q - t} \biggr) \cdot
\biggl( \frac{d q}{dt} \biggr)^{2} - 
\biggl(\frac{1}{t} + \frac{1}{t-1} + \frac{1}{q - t}  
\biggr) \cdot \frac{d q}{dt} \notag \\[+.2em]
& \quad  
+ \frac{2 q (q - 1) (q - t)}{t^{2}(t - 1)^{2}} \cdot
\left( 
\Bigl(\theta_{\infty} - \frac{1}{2} \Bigr)^2
-  \frac{\theta_{0}^2 \, t}{q^{2}}  
+  \frac{\theta_{1}^2 \, (t - 1)}{(q - 1)^{2}} 
-  \frac{\bigl( \theta_{t}^2 - \frac{1}{4} \bigr) \, 
    t (t - 1)}{(q - t)^{2}}
 \right), %
 \label{eq:Painleve-VI-intro}
\end{align}
and obtained the following formula 
for the {\em $\tau$-function} (\cite{JMU, Okamoto}) 
corresponding to 2-parameter general solutions of Painlev\'e VI: 
\begin{equation} \label{eq:tau-Painleve-VI-intro}
\tau_{P_{\rm VI}}(t, \vec{\theta}, \nu,\rho) =
\sum_{k \in {\mathbb Z}} e^{2 \pi i k \rho} \cdot 
B(t,  \vec{\theta}, \nu+k).
\end{equation}
Here $B$ is a 4-point Virasoro conformal block with $c=1$ 
which has an explicit combinatorial formula 
(see \cite[(1.8)--(1.12)]{GIL}); this is a consequence of the 
Alday-Gaiotto-Tachikawa correspondence \cite{AGT} 
and the Nekrasov's formula \cite{Nekrasov}. 
The constants $\vec{\theta} = (\theta_{0}, \theta_{1}, 
\theta_{t}, \theta_{\infty})$ are related to characteristic 
exponents of the associated isomonodromy system. 
The two parameters $\nu$ and $\rho$ are the integration constants
which specify initial conditions in solving Painlev\'e VI. 
It is very interesting that 
the general solution can be expressed as  
a discrete Fourier transform, 
and it is highly non-trivial 
from a glance of \eqref{eq:Painleve-VI-intro}.  

After the work \cite{GIL}, the Painlev\'e equations attract 
much more researchers in the community of the conformal field theory, 
gauge theory, topological string theory, and so on. 
Similar formulas to \eqref{eq:tau-Painleve-VI-intro} 
are also obtained or conjectured, for other Painlev\'e equations. 
For instance, Gamayun-Iorgov-Lisovyy also find a similar 
explicit formula for Painlev\'e V and III in \cite{GIL13}. 
Nagoya gave a conjectural expression of Painlev\'e II--V as 
a discrete Fourier transform of irregular conformal blocks 
in \cite{Nagoya1, Nagoya2} (see also \cite{LNR}). 
Bonelli-Lisovyy-Maruyoshi-Sciarappa-Tanzini (\cite{BLMST}) 
also gave conjectural expressions for Painlev\'e I, II, and IV 
as a discrete Fourier transform of a partition function of 
Argyres-Douglas theories. 
Regarding on higher order and $q$-analogues, we refer 
\cite{BS, BGT, Gav, GavIL, JNS, MN}.

On the other hand, Eynard-Orantin's {\it topological recursion}
(\cite{EO07}) is a recursive algorithm to compute the 1/N-expansion of 
the correlation functions and the partition function of matrix models 
from its spectral curve (c.f., \cite{CEO}), 
and it is generalized to any algebraic
curve which may not come from a matrix model. 
The topological recursion attracts both mathematicians and physicists 
since it is expected to encode the information of 
various enumerative and quantum invariants (e.g., Gromov-Witten invariants, 
Hurwitz numbers, Jones polynomials, and so on) in a universal way.
It is also related to integrable systems 
(\cite{Borot-Eynard}), the WKB analysis through 
the theory of {\em quantum curves} (\cite{BE16, DM14, Nor} etc.), 
and several topics in mathematics and physics. 
See the review article \cite{EO09} for further information.

Relationships between the topological recursion 
and Painlev\'e equations has been discussed in 
\cite{EO07, IS} for Painlev\'e I, 
\cite{IM} for Painlev\'e II, and 
\cite{IMS} for Painlev\'e I--VI.
In these articles, the Painlev\'e $\tau$-functions 
are constructed as the partition function of 
the topological recursion applied to spectral curves 
arising in the WKB analysis of associated isomonodromy systems. 
These results are generalized to a class of 
Painlev\'e hierarchies with rank 2 isomonodromy systems 
by \cite{MO19} recently. 

However, the Painlev\'e $\tau$-functions constructed by 
topological recursion are particular solutions. 
More precisely, such solutions admit a power series expansion 
(without any exponential terms). 
Then, it is easy to see that the expansion coefficients 
are recursively determined; that is, such solutions 
cannot contain any of integration constants  
(thus they are  called ``0-parameter solutions").
This reflects the fact that the all spectral curves 
considered in those works are of genus $0$. 
Hence, it is natural to ask:
\begin{quote}
Can we generalize the relationship 
between the Painlev\'e equations and the topological recursion
in such a way that we can handle the 2-parameter solutions 
like \eqref{eq:tau-Painleve-VI-intro} ? 
\end{quote}
A candidate of such a framework is proposed by 
Borot-Eyanrd \cite{Borot-Eynard} 
in a relationship between the topological recursion 
and integrable systems, where the discrete Fourier transform 
of the partition function is introduced. 
The discrete Fourier transform has also considered 
in an equivalent but different expression by Eynard-Marin\~o \cite{EM08}, 
which is called the {\em non-perturbative partition function}.
However, in these works, a precise relationship to  
Painlev\'e equations has not been discussed.

The purpose of this paper is, based on the ideas by  
\cite{Borot-Eynard, EM08}, to answer the above question 
affirmatively for the {\em Painlev\'e I equation}
\begin{equation}
\PI~:~ \hbar^{2} \frac{d^{2}q}{dt^{2}} = 
6q^{2} + t
\end{equation} 
with a formal parameter $\hbar$ being appropriately introduced. 
That is, we will construct a 2-parameter family of 
(formal) $\tau$-functions for Painlev\'e I 
by using the framework of the topological recursion 
and the quantum curves applied for a genus $1$ spectral curve.
We will also see that a Fourier series structure, 
which was conjectured in \cite{BLMST}, appears in the expression 
of the $\tau$-function, and in the WKB analysis of isomonodromy systems.

The precise statements of our main results are formulated as follows.
Let us consider a family of smooth elliptic curves 
(with a fixed generator $A, B$ of their first homology group)
\begin{equation}
y^2 = 4x^3 + 2t x + u(t,\nu),
\end{equation}
with $u(t,\nu)$ being chosen so that the $A$-period
\begin{equation}
\nu := \frac{1}{2 \pi i} \oint_{A} ydx 
\end{equation}
is independent of $t$. 
It is regarded as the spectral curve for 
the topological recursion via 
the standard Weierstrass parametrization:
\begin{eqnarray}
\begin{cases}
x =\wp(z; g_2, g_3) \\ y = \wp'(z; g_2, g_3)
\end{cases}
\quad \text{with} \quad g_2 = -2t,\quad  g_3 = - u(t,\nu).
\end{eqnarray} 
Let $W_{g,n}(z_{1},\dots,z_{n})$ and $F_g$
be the correlator of type $(g,n)$ and the $g$-th free energy
defined by the Eyanrd-Orantin topological recursion 
applied to the spectral curve, respectively 
(see \S \ref{subsection:TR} for the definition). 
Let us also define two formal series of $\hbar$ by
\begin{equation} \label{eq:tau-intro}
Z(t,\nu; \hbar) := \exp\Biggl( \sum_{g \ge 0} 
\hbar^{2g-2} F_g(t, \nu) \Biggr)
\end{equation}
and 
\begin{multline} \label{eq:WKBsol-intro}
\psi(x,t,\nu;\hbar) := \exp\Biggl(\, 
\frac{1}{\hbar} \int^{z(x)}_{0} W_{0,1}(z) + 
\frac{1}{2} \int^{z(x)}_{0} \int^{z(x)}_{0}
\left( W_{0,2}(z_{1},z_2) - \frac{dx(z_1) \cdot dx(z_2)}{(x(z_1)-x(z_2))^2}
\right) \\
+ \sum_{\substack{g \ge 0,\, n \ge 1 \\ 2g-2+n \ge 1}} 
\frac{\hbar^{2g-2+n}}{n!} \int^{z(x)}_{0} \cdots \int^{z(x)}_{0} 
W_{g,n}(z_{1},\cdots,z_{n}) \Biggr).
\end{multline}
Here $z(x)$ is the inverse function of $x=\wp(z)$. 
The formal series $Z$ is called the 
topological recursion partition function. 
The other formal series $\psi$ (which is called the wave function 
in literature of quantum curves)
is a WKB-type formal solution of 
a linear PDE (Theorem \ref{thm:BPZ-type-equation}).

Then our main results are formulated as follows: 

\smallskip
\begin{mainthm}[Theorem \ref{thm:tau-function-theorem} 
and Theorem \ref{thm:wave-function-theorem}] 
\hspace{+.1em}
\begin{itemize}
\item[(i)] 
The formal series
\begin{equation} \label{eq:main-1-intro}
\tau_{P_{\rm I}}(t, \nu, \rho; \hbar) := 
\sum_{k \in \bZ} e^{{2 \pi i k \rho}/{\hbar}}  \cdot
Z(t, \nu+k\hbar; \hbar)
\end{equation}
is a formal $\tau$-function of $\PI$.
In other words, the formal series
\begin{equation}
q(t,\nu,\rho;\hbar) := - \hbar^2 \frac{\p^2}{\p t^2} 
\log \tau_{P_{\rm I}}(t, \nu, \rho; \hbar)
\end{equation}
satisfies $(P_{\rm I})$. 

\smallskip
\item[(ii)] The formal series 
\begin{equation} \label{eq:main-2-intro}
\Psi(x,t, \nu, \rho; \hbar) := 
\frac{\sum_{k \in \bZ} e^{{2 \pi i k \rho}/{\hbar}} \cdot
Z(t,\nu+k \hbar; \hbar) \cdot \psi(x,t,\nu + k \hbar; \hbar)}
{\sum_{k \in \bZ} e^{{2 \pi i k \rho}/{\hbar}} \cdot
Z(t,\nu+k \hbar; \hbar)}
\end{equation}
is a formal solution of the isomonodrmy system 
\begin{align}
\label{eq:JM-1-intro}
(L_{\rm I}) & : ~~~ \left[ \hbar^2 \frac{\p^2}{\p x^2} - 
\frac{\hbar}{x-q} \cdot \Bigl( \hbar \frac{\p}{\p x} - p \Bigr)
- (4x^3 + 2t x + 2 H) \right] \Psi = 0, \\
\label{eq:JM-2-intro}
(D_{\rm I}) & : ~~ \left[ \hbar \frac{\p}{\p t} - \frac{1}{2(x-q)} \cdot
\Bigl( \hbar \frac{\p}{\p x} - p \Bigr)
\right] \Psi = 0
\end{align}
associated with $\PI$. 

\end{itemize}
\end{mainthm}

We also present an approach to the direct monodromy problem
(i.e., computation of the  Stokes multipliers 
of $(L_{\rm I})$)  
of the isomonodromy system in the spirit of the 
{\em exact WKB analysis} (\cite{KT05, Voros83}). 
We are motivated by Takei's approach to 
the computation of Stokes multipliers (\cite{Takei}). 
The resulting Stokes multipliers are expressed 
by the two parameters $\nu$ and $\rho$ appearing 
in \eqref{eq:main-1-intro} in a quite explicit manner, 
and satisfy the desired relation (i.e., the cyclic relations 
of Stokes multipliers). 
However, unfortunately, our formulas of Stokes multipliers 
are not proved rigorously 
because they are based on two conjectures 
(on Borel summability and Voros connection formula)
for WKB solutions of the PDE \eqref{eq:BPZ-equation}. 
We hope to solve the conjectures and show the validity 
of our method in the future.

\smallskip
Before ending the introduction, 
let us make comments on the previous works
which are related to our main result.

\begin{enumerate}
\item 
In the context of the WKB analysis for 
Painlev\'e equations, another kind of 2-parameter formal solution
is constructed by Aoki-Kawai-Takei (\cite{AKT-P}). 
A similar solution is also constructed when $\hbar = 1$ 
by Yoshida (\cite{Yoshida}), 
Garoufalidis-Its-Kapaev-Marin\~o (\cite{GIKM}) 
and Aniceto-Schiappa-Vonk (\cite{ASV}). 
We do not understand the precise relationship 
between these two formal solutions.
(We will come back to this point in \S \ref{section:conclusion}).

\item 
Borot-Eynard (\cite{Borot-Eynard}) conjectured that 
the non-perturbative partition function 
(i.e., \eqref{eq:main-2-intro}) satisfies a certain 
Hirota-type bilinear identity in a more general situation, 
and they verified the relation at the leading order.

\item 
As is mentioned earlier, the existence of the discrete 
Fourier transformed expression of $\tau$-function is conjectured 
by \cite{BLMST} for Painlev\'e I (and also for other 
Painlev\'e equations with an irregular singular isomonodromy system). 
Our results give a full order proof of their claim 
for Painlev\'e I case.


\item 
Grassi-Gu (\cite{GG}) investigated the discrete Fourier transformed 
expression of the $\tau$-function for Painlev\'e II
from the viewpoint of matrix models. 

\item 
For Painlev\'e VI,  
an approach to the direct monodromy problem based on the spectral network 
(c.f., \cite{GMN12}) is presented in the recent article \cite{CPT} 
by Coman-Pomoni-Teschner. 
The spectral network and the Stokes graph in the exact WKB analysis 
are essentially the same objects (at least in the rank 2 cases).
The spectral network used in \cite{CPT} is Fenchel-Nielsen type, 
while the one will use in \S \ref{section:Exact-WKB} 
is the Fock-Goncharov type, in the sense of \cite{HN}. 
The author does not understand the relationship 
between these two approaches. 

\end{enumerate}

This paper is organized as follows.  
In \S \ref{section:Painleve-I} and \S \ref{section:TR} 
we briefly review some definitions and known facts 
on Painlev\'e I and the topological recursion, respectively. 
A key result (Theorem \ref{thm:BPZ-type-equation}) is also 
given in \S \ref{section:TR}.
Our main theorem will be formulated and proved in \S \ref{section:main-result}.  
The exact WKB theoretic computation of the Stokes multipliers 
is given in \S \ref{section:Exact-WKB}. 
Open questions and several (possibly) related topics are 
also discussed in \S \ref{section:conclusion}.

\begin{ack}
The author is grateful to 
Alba Grassi, 
Akishi Ikeda, 
Nikolai Iorgov,
Michio Jimbo, 
Akishi Kato, 
Taro Kimura,
Tatsuya Koike, 
Oleg Lisovyy, 
Motohico Mulase, 
Hajime Nagoya, 
Hiraku Nakajima, 
Ryo Ohkawa, 
Nicolas Orantin,
Hidetaka Sakai, 
Kanehisa Takasaki, 
Yoshitugu Takei,
Yumiko Takei
and
Yasuhiko Yamada 
for many valuable comments and discussion.
This work was supported by the JSPS KAKENHI Grand Numbers 
16K17613, 16H06337, 16K05177, 17H06127, and
JSPS and MAEDI under the Japan-France Integrated Action Program (SAKURA).

\smallskip
We dedicate the paper to the memory of Tatsuya Koike, 
who made a lot of important contributions to the theory 
of the exact WKB analysis and the Painlev\'e equations. 
He inspired us by his beautiful papers, talks, 
and private communications on various occasions.
\end{ack}

\section{Brief review of the first Painlev\'e equation $\PI$}
\label{section:Painleve-I}

The {\em Painlev\'e I equation} with a formal parameter $\hbar$ 
is the following non-linear ODE of second order:
\begin{equation}
\PI~:~ \hbar^{2} \frac{d^{2}q}{dt^{2}} = 
6q^{2} + t.
\end{equation} 
Note that the parameter $\hbar$ can be introduced to 
the $\hbar=1$ case by the rescaling 
$(q,t) \mapsto (\hbar^{-2/5}q, \hbar^{-4/5}t)$ of variables.
The parameter $\hbar$ plays no role in this section, 
however, it is crucially important in our construction 
of 2-parameter formal solution of $\PI$ presented in subsequent sections. 

This short section is devoted to a review of $\PI$. 
See \cite{Clarkson, FIKN} etc. for details and further references.

\subsection{Hamiltonian system and the $\tau$-function}

Each of the Painlev\'e equations can be written as a Hamiltonian system (\cite{Okamoto}). 
For Painlev\'e I, it is given by
\begin{equation} \label{eq:Hamiltonian-system}
\hbar \frac{dq}{dt} = \frac{\p H}{\p p}, \quad
\hbar \frac{dp}{dt} = - \frac{\p H}{\p q} 
\end{equation}
with the (time-dependent) Hamiltonian 
\begin{equation}
H := \frac{1}{2}p^2 - 2 q^3 - t q.
\end{equation}
For any solution $(q,p) = (q(t,\hbar), p(t,\hbar))$ 
of the Hamiltonian system \eqref{eq:Hamiltonian-system}, 
let us denote by
\begin{equation} 
h(t,\hbar) := H(t, q(t,\hbar), p(t,\hbar))
\end{equation}
the Hamiltonian function. It is easy to see that 
\begin{equation} \label{eq:Hamiltonian-function-and-q}
\frac{dh}{dt}(t,\hbar) = - q(t,\hbar).
\end{equation}

\begin{Def}[c.f., \cite{JM2, JMU, Okamoto}]
The {\em $\tau$-functon} 
corresponding to a solution $q(t, \hbar)$ of $\PI$ 
is defined (up to a multiplicative constant) by 
\begin{equation} \label{eq:def-of-tau-function-PI}
\tau_{P_{\rm I}}(t, \hbar) := \exp\left( 
\frac{1}{\hbar^2} \int^{t} h(t', \hbar) \, dt' \right).
\end{equation} 
\end{Def}

The equality \eqref{eq:Hamiltonian-function-and-q} shows 
that the solution $q(t)$ of $\PI$ is recovered from 
the $\tau$-function by 
\begin{equation} \label{eq:tau-and-q-in-Painleve-I}
q(t, \hbar) = - \hbar^2 \frac{\p^2}{\p t^2} \log \tau_{P_{\rm I}}(t,\hbar).
\end{equation}
Substituting the expression into $\PI$, 
we can derive an ODE satisfied by the $\tau$-function. 
It can be written in a Hirota-type bilinear equation
(c.f., \cite{Okamoto2}): 
\begin{equation} \label{eq:Hirota-Painleve-1}
\hbar^4 D^{4}_{t} \, \tau_{P_{\rm I}} \cdot \tau_{P_{\rm I}} 
+ 2 t \, \tau_{P_{\rm I}} \cdot \tau_{P_{\rm I}}  = 0.
\end{equation}
Here 
\begin{equation}
D_t^{k} f \cdot g := \sum_{\ell=0}^{k} (-1)^{\ell} \cdot 
{{k}\choose{\ell}} \cdot 
\frac{\partial^{\ell} f}{\partial t^\ell} 
\cdot 
\frac{\partial^{k-\ell} g}{\partial t^{k- \ell}} 
\end{equation}
is the Hirota derivative. 
The equality \eqref{eq:tau-and-q-in-Painleve-I} 
(or \eqref{eq:Hirota-Painleve-1}) 
is also used as the defining equation for the $\tau$-function.

\begin{rem}
For any fixed $\hbar \in {\mathbb C}^{\ast}$, 
any solution of the Hamiltonian system \eqref{eq:Hamiltonian-system}
is known to be meromorphic function of $t$ on ${\mathbb C}$ 
(the Painlev\'e property). 
Near any pole $a \in {\mathbb C}$, we can show that 
\begin{equation}
q(t, \hbar) = \frac{\hbar^2}{(t-a)^2} - \frac{a}{10 \hbar^2}(t-a)^2 
+ \cdots, \qquad
p(t, \hbar) = - \frac{2 \hbar^3}{(t-a)^3} - \frac{a}{5 \hbar}(t-a)^2 + \cdots.
\end{equation}
Therefore, the Hamiltonian function
$h(t)$ behaves as 
\begin{equation}
h(t, \hbar) = \frac{\hbar^2}{t-a} + \cdots
\end{equation}
when $t \to a$. 
This property implies that the $\tau$-function
$\tau_{P_{\rm I}}(t, \hbar)$ has a simple zero 
at the pole $a$ of $q(t, \hbar)$, and hence, 
$\tau_{P_{\rm I}}$ is an entire function of $t$ 
(although the solution of $\PI$ is meromorphic). 
This is the well-known analogy between the Painlev\'e functions 
and Weierstrass elliptic functions: 
\[
\text{
``$(q,h,\tau_{P_{\rm I}})$ $\leftrightarrow$ 
$(\wp, \zeta, \sigma)$"}.
\]
(See Appendix \ref{appendix:Weierstrass} for the Weierstrass functions.)
Hone-Zullo (\cite{Hone-Zullo}) 
discussed the analogy at the level of series expansions in more detail.
\end{rem}

\subsection{Isomonodoromy system}

Each of the Painlev\'e equations can be 
written as a compatibility condition of 
a system of linear PDEs, which is called 
the {\em isomonodromy system} 
(or the {\em Lax pair}). 
For $\PI$, such a pair is given by 
\begin{align}
\label{eq:JM-1}
(L_{\rm I}) & : ~~~ \left[ \hbar^2 \frac{\p^2}{\p x^2} - 
\frac{\hbar}{x-q} \cdot \Bigl( \hbar \frac{\p}{\p x} - p \Bigr)
- (4x^3 + 2t x + 2 H) \right] \Psi = 0, \\
\label{eq:JM-2}
(D_{\rm I}) & : ~~ \left[ \hbar \frac{\p}{\p t} - \frac{1}{2(x-q)} \cdot
\Bigl( \hbar \frac{\p}{\p x} - p \Bigr)
\right] \Psi = 0.
\end{align}
(C.f., \cite[Appendix C]{JM2} and Remark \ref{eq:matrix-form} below).

Thanks to the compatibility condition, we can find a fundamental 
system of solutions of the system $(L_{\rm I})$--$(D_{\rm I})$.
It is, in particular, a fundamental system of solutions 
of $(L_{\rm I})$, which is an ODE with an irregular singular point 
of Poincar\'e rank $5/2$ at $x= \infty$. 
(The point $x=q$ is a so-called apparent singularity of $(L_{\rm I})$; 
that is, solutions of $(L_{\rm I})$ have no singularity at the point.)
The deformation equation $(D_{\rm I})$ in $t$-direction 
guarantees that the Stokes multipliers around $x = \infty$
are independent of $t$ (and this is the reason why 
the system $(L_{\rm I})$--$(D_{\rm I})$ is called ``isomonodromic"). 

The Stokes multipliers of $(L_{\rm I})$ are the first integrals 
of the corresponding solution of $(P_{\rm I})$.
Actually, since the Stokes multipliers must satisfy a 
cyclic relation (which guarantees the single-valuedness 
of the fundamental solution of $(L_{\rm I})$; 
we will make it more precise in \S \ref{subsection:computation-of-Stokes}), 
only two Stokes multipliers are independent. 
Through the Riemann-Hilbert correspondence, 
these two Stokes multipliers are regarded as the 
integration constants in solving $\PI$. 
The Stokes multipliers have been
effectively used to study the connection problems 
of Painlev\'e transcendents
(see \cite{FIKN, Kap, Kap-Kit, LR, Takei} for example). 

Our main result, which will be presented in subsequent sections, 
is a construction of 2-parameter family of 
a formal series valued $\tau$-function for $\PI$ 
(i.e., a formal solution of \eqref{eq:Hirota-Painleve-1}) 
via the topological recursion. 
We will also construct a fundamental system 
of formal solutions of the isomonodromy system 
$(L_{\rm I})$--$(D_{\rm I})$ associated with $\PI$.

\begin{rem} \label{eq:matrix-form}
The isomonodormy system $(L_{\rm I})$--$(D_{\rm I})$
can be written in a matrix form
\begin{equation} \label{eq:Lax-matrix-form}
\hbar \frac{\p}{\p x} 
\begin{pmatrix}  \Psi \\ \Phi \end{pmatrix} 
 = 
\begin{pmatrix}
p & 4 \,(x-q) \\ x^2 + xq + q^2 + {t}/{2} & -p
\end{pmatrix}
\cdot 
\begin{pmatrix}  \Psi \\ \Phi \end{pmatrix}, 
\quad
\hbar \frac{\p}{\p t} 
\begin{pmatrix}  \Psi \\ \Phi \end{pmatrix} 
 = 
\begin{pmatrix}
0 & 2 \\ x + {q}/{2} & 0
\end{pmatrix}
\cdot 
\begin{pmatrix}  \Psi \\ \Phi \end{pmatrix} 
\end{equation}
by introducing
\begin{equation}
\Phi := \frac{1}{4 \, (x-q)} \cdot \Bigl( \hbar \frac{\p}{\p x} - p \Bigr) \Psi
= \frac{\hbar}{2}\,  \frac{\p}{\p t} \Psi.
\end{equation}
The system \eqref{eq:Lax-matrix-form} of PDEs 
is equivalent to the one given in \cite[Appendix C]{JM2}. 
This also shows that the point $x=q$ is an 
apparent singular point of $(L_{\rm I})$.
\end{rem}

\section{Topological recursion for a family of elliptic curves}
\label{section:TR}

In this section, we will discuss the properties of 
correlators and free energies of the topological recursion 
defined from a family of genus 1 spectral curves. 
Our spectral curve is modelled on the ``classical limit" 
of the differential equation $(L_{\rm I})$ in the 
isomonodromy system associated with $\PI$.
We will use the Weierstrass elliptic function 
to give a meromorphic parametrization of the curve. 
The definition and several properties of 
Weierstrass elliptic functions and $\theta$-functions 
are summarized in Appendix \ref{appendix:Weierstrass}.

\subsection{A family of elliptic curves}
\label{section:sp-curve}

Let us consider a family of elliptic curves ${\mathcal C}_{t,u}$
defined by the following algebraic relation:
\begin{equation} 
y^2 = 4x^3 + 2t x + u. 
\end{equation}
Here, $t$ is a parameter
which eventually becomes the isomonodromic time for the Painlev\'e equation, 
and $u$ is also a parameter which will be replaced by a certain function of 
$t$ and an additional parameter $\nu$ defined as follows. 

Let us take $(t_{\ast}, u_{\ast}) \in {\mathbb C}^2$ 
so that the discriminant $\Delta = -16 (8 t^3 + 27 u^2)$ 
does not vanish on a neighborhood $D_{t_\ast} \times D_{u_\ast} \subset {\mathbb C}^2$
of $(t_{\ast}, u_{\ast})$. Taking a sufficiently small neighborhood, 
we may assume that $D_{t_\ast} \times D_{u_\ast}$ is simply connected. 
Then we can identify the first homology groups $H_1({\mathcal C}_{t,u}, {\mathbb Z})$ 
of ${\mathcal C}_{t,u}$ for $(t,u) \in D_{t_\ast} \times D_{u_\ast}$
with that
$H_1({\mathcal C}_{t_\ast, u_\ast}, {\mathbb Z})$ 
of ${\mathcal C}_{t_\ast,u_\ast}$ by the parallel transport.

To apply the Eynard-Orantin's topological recursion,
let us fix a generator $A, B \in H_1({\mathcal C}_{t_\ast,u_\ast}, {\mathbb Z})$
whose intersection paring is given by $\langle A, B \rangle = +1$. 
We denote by the same symbol $A, B$ 
the generator of $H_1({\mathcal C}_{t,u}, {\mathbb Z})$ 
for $(t,u) \in D_{t_\ast} \times D_{u_\ast}$ obtained by 
the canonical isomorphism  $H_1({\mathcal C}_{t,u}, {\mathbb Z}) \simeq 
H_1({\mathcal C}_{t_\ast,u_\ast}, {\mathbb Z})$. 
Since 
\begin{equation}
\frac{\p}{\p u} \oint_A \sqrt{4x^3+2tx+u} \, dx 
\ne 0 
\quad ((t,u) \in D_{t_\ast} \times D_{u_\ast}),
\end{equation}
the implicit function theorem guarantees the existence of a neighborhood 
$D_{\nu_\ast} \subset {\mathbb C}$ of  
\begin{equation} \label{eq:nu}
\nu_\ast := 
\frac{1}{2 \pi i} \oint_{A} \sqrt{4x^3 + 2 t_\ast x + u_\ast} \, dx \in {\mathbb C}
\end{equation}
such that 
there exists a holomorphic function $u(t,\nu)$ on $D_{t_\ast} \times D_{\nu_\ast}$   
satisfying  
\begin{equation} \label{eq:nu}
u(t_\ast, \nu_\ast) = u_\ast, \quad 
\nu = \frac{1}{2 \pi i} \oint_{A} \sqrt{4x^3 + 2 t x + u(t,\nu)} \, dx
\quad ((t,\nu) \in D_{t_\ast} \times D_{\nu_\ast}).
\end{equation}

In what follows, we consider the family of elliptic curve  
parametrized by $(t,\nu) \in D_{t_\ast} \times D_{\nu_\ast}$
defined by the algebraic relation 
\begin{equation} \label{eq:spcurve}
y^2 = 4x^3 + 2t x + u(t,\nu).
\end{equation}
In other words, we will consider the family of elliptic curves 
with a fixed $A$ and $B$-cycles so that the $A$-period
$\nu$ of $ydx$ (given in \eqref{eq:nu})
is independent of $t$. 

We will also use the standard notations 
\begin{equation}
\omega_\ast = \omega_\ast(t,\nu) := \oint_\ast \frac{dx}{\sqrt{4x^3+2tx+u(t,\nu)}}, \quad
\eta_\ast  = \eta_\ast(t,\nu) := - \oint_\ast \frac{x \, dx}{\sqrt{4x^3+2tx+u(t,\nu)}}
\end{equation}
($\ast \in \{A, B\}$) for the periods of the elliptic curve.
The $t$-independence of $\nu$ requires the following system of PDEs 
\begin{eqnarray}
\label{eq:PDE-u} 
\frac{\p u}{\p t} = 2 \, \frac{\eta_A}{\omega_A}, \quad 
\frac{\p u}{\p \nu} = \frac{4 \pi i}{\omega_A}
\end{eqnarray}
for $u$, which is in fact compatible.

\subsection{Topological recursion}
\label{subsection:TR}

\subsubsection{Spectral curve}
A {\em spectral curve} is 
a triple $(\Sigma, x,y)$ of the compact Riemann surface $\Sigma$ 
with a prescribed symplectic basis of its first homology group, 
and two meromorphic functions $x,y$ on $\Sigma$ such that 
$dx$ and $dy$ never vanish simultaneously (\cite{EO07}). 
We consider the spectral curve 
\begin{equation}\label{eq:par-rep} 
(\Sigma = {\mathbb C}/\Lambda, x(z) = \wp(z), y(z) = \wp'(z))
\end{equation}
which is a parametrization of the elliptic curve \eqref{eq:spcurve}. 
Here $\wp(z) = \wp(z; g_2, g_3)$ is the Weierstrass $\wp$-function
with $g_2=-2t$ and $g_3 = - u(t,\nu)$, which is doubly-periodic 
with periods $\omega_A$ and $\omega_B$. 
$\Lambda = \bZ \cdot \omega_A + \bZ \cdot \omega_B$ is the lattice 
generated by the periods of the elliptic curve \eqref{eq:spcurve}.
See Appendix \ref{appendix:Weierstrass} for the 
definition and several properties of $\wp(z)$ 
(we omit the $t$ and $\nu$ dependence for simplicity). 
Since \eqref{eq:par-rep} is a (meromorphic) 
parametrization of the elliptic curve \eqref{eq:spcurve}, 
we also call \eqref{eq:spcurve} the spectral curve below.

Let $z_o \in {\mathbb C}$ be a generic point, 
and $\Omega$ be the quadrilateral with 
$z_o$, $z_o+\omega_A$, $z_o+\omega_B$ and $z_o+\omega_A+\omega_B$
on its vertices; that is, a fundamental domain of $\bC/\Lambda$. 
The ramification points (i.e., zeros of $dx$) on $\Omega$ 
are given by the half-periods 
$r_1 \equiv \omega_A/2$, $r_2 \equiv \omega_B/2$ and $r_3 \equiv (\omega_A+\omega_B)/2$
modulo $\Lambda$. These points correspond to the branch points 
$e_i = x(r_i)$ ($i=1,2,3$) of the elliptic curve which are defined by
$4x^3+2tx+u = 4(x-e_1)(x-e_2)(x-e_3)$.
The Galois involution $y \mapsto -y$ of the spectral curve 
\eqref{eq:par-rep} is realized by
$z \mapsto \bar{z} = -z$ 
(or $\bar{z} \equiv -z$ mod $\Lambda$)
since $\wp$ is even function of $z$.

We may assume that $z=0$ (which corresponds to $x=\infty$) 
is contained in $\Omega$ by translation. 
In the following, after fixing a branch cut on $x$-plane,
we will use the inverse function $z(x)$ of $x=\wp(z)$, 
which is given by the elliptic integral \eqref{eq:inverse-function-z}.
We fix the branch so that 
\begin{equation}
z(x) \sim -  x^{- \frac{1}{2}} \cdot 
\bigl( 1+O(x^{-\frac{1}{2}}) \bigr)
\end{equation}
holds when $x \to \infty$. 
This is equivalent to say that 
we have chosen the branch so that 
\begin{equation} \label{eq:branch-of-y}
y\bigl(z(x)\bigr) = \sqrt{4x^3+2tx+u(t,\nu)} = 
+ 2 \, x^{\frac{3}{2}} \cdot \bigl( 1+O(x^{-\frac{1}{2}}) \bigr).
\end{equation}

\subsubsection{Bergman kernel}
The Bergman kernel normalized along the A-cycle is given as follows:
\begin{equation}
B(z_1, z_2) :=  \Bigl( \wp(z_1 - z_2) 
+ \frac{\eta_A}{\omega_A} \Bigr) \cdot dz_{1}dz_2.
\end{equation}
This is characterized by the following properties: 
\begin{itemize}
\item $B(z_1, z_2)$ is a meromorphic bi-differential with double poles 
along the diagonal $z_1 \equiv z_2$ modulo $\Lambda$. 
\item $B(z_1, z_2)$ is symmetric: $B(z_1, z_2) = B(z_2, z_1)$. 
\item Integrals along $A$- and $B$-cycle are 
\begin{equation} \label{eq:normalization-Bergman-kernel}
\oint_{z_1 \in {A}} B(z_1,z_2) = 0, \quad
\oint_{z_1 \in {B}} B(z_1,z_2) = 
- \frac{\omega_A \cdot \eta_B - \omega_B \cdot \eta_A}{\omega_{A}} \cdot dz_2
= \frac{2 \pi i}{\omega_{A}} \cdot dz_2. 
\end{equation}
These properties follows from the facts that 
\begin{equation} \label{eq:Pz-function}
P(z) := - \zeta(z) + \frac{\eta_A}{\omega_A} \cdot z
\end{equation}
solves $(d_{z_1} P(z_1-z_2)) \cdot dz_2 = B(z_1, z_2)$, 
and the quasi-periodicity \eqref{eq:zeta-periods}
of $\zeta$-function. 

\end{itemize}

\subsubsection{Definition of correlators}
\label{subsection:def-of-correlators}

The Eynard-Orantin's
topological recursion for the spectral curve is 
formulated as follows.

\begin{Def}[{\cite[Definition 4.2]{EO07}}]
The {\em correlator (or the Eynard-Orantin differential)} 
$W_{g,n}(z_{1},\dots,z_{n})$ 
of type $(g,n)$ is a meromorphic multi-differential 
on the $n$-times product of $\Sigma$ 
defined by the following {\em topological recursion relation}:
\begin{equation}
W_{0,1}(z_{1}) :=  y(z_{1}) \cdot dx(z_{1}), \quad
W_{0,2}(z_{1},z_{2}) := 
B(z_1, z_2), 
\end{equation}
and for $2g-2+n \ge 1$, we define
\begin{align} 
& 
W_{g,n}(z_{1},\dots,z_{n}) := 
\frac{1}{2\pi i} \sum_{j=1}^{3}
\oint_{z \in \gamma_{j}} K(z_1,z) \nonumber \\
& \quad 
 \times  \Biggl[ \sum_{j=2}^{n} 
\left( W_{0,2}(z,z_{j}) \cdot W_{g,n-1}(\bar{z}, z_{[\hat{1}, \hat{j}]}) 
+ W_{0,2}(\bar{z},z_{j}) \cdot W_{g,n-1}(z, z_{[\hat{1}, \hat{j}]}) 
\right) \nonumber \\
& \quad 
+ W_{g-1,n+1}(z,\bar{z},z_{[\hat{1}]}) + 
\sum_{\substack{g_{1}+g_{2}=g \\ I \sqcup J = [\hat{1}]}}^{\text{stable}} 
W_{g_{1}, |I|+1}(z,z_{I}) \cdot W_{g_{2}, |J|+1}(\bar{z}, z_{J}) \Biggr]. 
\label{eq:top-rec}
\end{align}
Here $\gamma_{j}$ is a small cycle (in $z$-plane) which encircles 
the branch point $z=r_j$ ($j=1,2,3$) in counter-clockwise direction, 
and the {\em recursion kernel} $K(z_1,z)$ is given by 
\begin{equation} \label{eq:recursion-kernel}
K(z_1,z) = 
\frac{1}{2(y(z) - y(\bar{z})) \cdot dx(z)} \cdot 
\int^{w=z}_{w=\bar{z}} W_{0,2}(z_{1}, w) 
~~\left( = 
- \frac{P(z_1-z) - P(z_1 - \bar{z})}{2(y(z) - y(\bar{z})) \cdot dx(z)}
\right).
\end{equation}
Also, we use the index convention 
$[\hat{j}] = \{1, \dots, n \}\setminus\{j\}$ etc., 
and the sum in the third line in \eqref{eq:top-rec} is taken 
for indices in the stable range 
(i.e., only $W_{g,n}$'s with $2g - 2 + n \ge 1$ appear).
\end{Def}

The following properties of $W_{g,n}$ 
were established in \cite{EO07}:
\begin{itemize}
\item 
$W_{g,n}$ is invariant under any permutation of variables.

\item 
As a differential in each variable $z_i$, 
$W_{g,n}$ with $2g-2+n \ge 1$
is holomorphic at any point except for  
the ramification points $r_1, r_2, r_3$.
In particular, they are holomorphic at $z_i=0$.

\item 
$W_{g,n}$ is doubly periodic in each variable: 
\[
W_{g,n}(z_1,\dots, z_i + \omega, \dots,z_n) 
= W_{g,n}(z_1,\dots,z_i,\dots,z_n)
\quad (i=1,\dots,n,~\omega \in \Lambda).
\]

\item 
Except for $W_{0,1}$, they are normalized along the $A$-cycle:
\begin{equation} \label{eq:trivial-A-period-Wgn}
\oint_{z_i \in A} W_{g,n}(z_1, \cdots, z_i, \cdots, z_n) = 0
\qquad ((g,n) \ne (0,1),~i=1,\dots,n).
\end{equation}

\item 
$W_{g,n}$ also depends holomorphically on the parameters 
$(t,\nu)$ on $D_{t_\ast} \times D_{\nu_\ast}$.
Moreover, there are formulas which describe 
the differentiations of $W_{g,n}$ with respect to $t$ and $\nu$. 
We summarize the formulas in \S \ref{section:variation-formula}.

\end{itemize}

\subsubsection{Definition of the free energy}

Here we also recall the definition of the genus $g$ free energy 
$F_g$ introduced in \cite{EO07}.

\begin{Def}[{\cite[\S 4.2]{EO07}}] ~
\begin{itemize} 
\item
The {\it genus $0$ free energy} $F_0$ is defined in 
\cite[\S 4.2.2]{EO07}. In our case, it is given by 
\begin{equation}
F_0(t,\nu) := \frac{t \, u(t,\nu)}{5} + 
\frac{\nu}{2} \cdot \oint_B \sqrt{4x^3+2tx+u(t,\nu)} \, dx.
\end{equation}

\item
The {\it genus $1$ free energy} $F_1$ is also defined in \cite[\S 4.2.3]{EO07}
up to a multiplicative constant. We employ  
\begin{equation}
\label{eq:expression-F1}
F_1(t,\nu) := - \frac{1}{12} \log \Bigl(\omega_A(t,\nu)^6 \cdot \Delta(t,\nu) \Bigr)
\end{equation}
as the definition. 
Here $\Delta = 16(e_1 - e_2)^2(e_2-e_3)^2(e_3-e_1)^2$ is the discriminant 
of \eqref{eq:spcurve}. 

\item 
The {\it genus $g$ free energy} $F_{g}$ for $g \ge 2$ is defined by
\begin{equation}
F_{g}(t,\nu) := \frac{1}{2\pi i (2-2g)} \cdot 
\sum_{j=1}^{3} \oint_{\gamma_{j}} \Phi(z) \cdot W_{g,1}(z),
\end{equation}
where 
\begin{equation}
\Phi(z) := \int^{z}_{z_{o}} y(z) \cdot dx(z) 
\end{equation}
with an arbitrary generic point $z_{o}$.

\end{itemize}
\end{Def}

$F_g$'s are holomorphic functions of $(t,\nu)$ on $D_{t_\ast} \times D_{\nu_\ast}$.
The generating function 
\begin{equation} \label{eq:free-energy-F}
F(t,\nu;\hbar) := \sum_{g  \ge 0} \hbar^{2g-2} F_g(t,\nu)
\end{equation}
is also called the {\em free energy} of 
the spectral curve \eqref{eq:spcurve}.
We call its exponential 
\begin{equation} \label{eq:partition-function-Z}
Z(t,\nu;\hbar) := \exp\bigl(F(t,\nu;\hbar) \bigr)
= \exp\Biggl( \sum_{g \ge 0} \hbar^{2g-2} F_g(t,\nu) \Biggr) 
\end{equation}
the {\em (topological recursion) partition function}.

\bigskip
\begin{lem}[{C.f., \cite{EO07}}] ~
The genus $0$ free energy satisfies 
\begin{equation}
\label{eq:dF0-dt}
\frac{\p F_0}{\p t}(t,\nu)  = 
\frac{u(t,\nu)}{2}, \quad
\frac{\p F_0}{\p \nu}(t,\nu)  = 
\oint_B \sqrt{4x^3+2tx+u(t,\nu)} \, dx.
\end{equation}

\end{lem}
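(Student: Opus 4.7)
The plan is to verify each identity by direct differentiation of the explicit formula
\[
F_0(t,\nu) = \frac{tu}{5} + \frac{\nu}{2}\, \Pi_B, \qquad \Pi_B := \oint_B \sqrt{4x^3+2tx+u(t,\nu)}\,dx,
\]
and then recognize that the remainder terms must cancel by virtue of two universal identities that the periods satisfy. First, I would record the preliminary differentiation-under-the-integral formulas
\[
\partial_t \Pi_B = -\eta_B + \tfrac{1}{2}(\partial_t u)\,\omega_B, \qquad \partial_\nu \Pi_B = \tfrac{1}{2}(\partial_\nu u)\,\omega_B,
\]
and substitute the PDEs \eqref{eq:PDE-u}, $\partial_t u = 2\eta_A/\omega_A$ and $\partial_\nu u = 4\pi i/\omega_A$. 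Differentiating $F_0$ then produces two expressions which must be shown to equal $u/2$ and $\Pi_B$, respectively.

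The first nontrivial input is the \emph{Legendre relation} between the periods,
\[
\omega_A\eta_B - \omega_B\eta_A = -2\pi i,
\]
which I will read off directly from the $B$-cycle integral \eqref{eq:normalization-Bergman-kernel} of the Bergman kernel; indeed, since $B(z_1,z_2) = d_{z_1}P(z_1-z_2)\,dz_2$ with $P(z) = -\zeta(z) + (\eta_A/\omega_A)z$, its $B$-period is directly the Legendre combination. The second input is a \emph{quasi-homogeneity identity} for the periods, obtained by applying Euler's theorem to the scaling $(x,y,t,u)\mapsto(\lambda^2 x,\lambda^3 y,\lambda^4 t,\lambda^6 u)$, under which $\oint_\ast y\,dx$ scales by $\lambda^5$. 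This yields
\[
5 \cdot 2\pi i\nu = -4t\,\eta_A + 3u\,\omega_A, \qquad 5\,\Pi_B = -4t\,\eta_B + 3u\,\omega_B,
\]
so that $u$ and $\Pi_B$ each admit a closed form in terms of $t$, $\nu$, and the four period integrals.

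With these two identities in hand, the verification is essentially bookkeeping. For $\partial_t F_0$, one finds
\[
\partial_t F_0 = \frac{u}{5} + \frac{2t\eta_A}{5\omega_A} + \frac{\nu}{2}\!\left(-\eta_B + \frac{\eta_A\omega_B}{\omega_A}\right) = \frac{u}{5} + \frac{2t\eta_A}{5\omega_A} + \frac{\pi i \nu}{\omega_A},
\]
where the last step is Legendre; substituting the Euler relation $3u\omega_A = 10\pi i\nu + 4t\eta_A$ immediately reduces the right-hand side to $u/2$. For $\partial_\nu F_0$, the analogous computation produces $\Pi_B/2$ plus a remainder $4\pi i t/(5\omega_A) + \pi i \nu\omega_B/\omega_A$; multiplying through by $10\omega_A$, inserting the Euler relation for $3u\omega_A$, and applying Legendre once more makes this remainder collapse to $\Pi_B/2$, giving $\partial_\nu F_0 = \Pi_B$.

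The only step that requires more than routine computation is the derivation of the quasi-homogeneity identity: one must justify that $\partial_u \oint_A y\,dx = \omega_A/2$ and $\partial_t \oint_A y\,dx|_u = -\eta_A$ even though $y$ is multivalued (this is standard — differentiate under the integral on a representative $A$-cycle avoiding the branch cuts), and then recognize the resulting Euler identity as expressing $u$ and $\Pi_B$ linearly in the four canonical periods. I expect this to be the main conceptual content; once it is in place, both formulas in the lemma follow by a short algebraic manipulation involving only the Legendre relation.
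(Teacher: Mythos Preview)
Your proof is correct. The computations check out: the Legendre relation $\eta_A\omega_B-\eta_B\omega_A=2\pi i$ (equivalently, the sign you state) together with the Euler identities $10\pi i\nu=-4t\eta_A+3u\omega_A$ and $5\Pi_B=-4t\eta_B+3u\omega_B$ are exactly what is needed, and your substitutions reduce $\partial_t F_0$ to $u/2$ and $\partial_\nu F_0$ to $\Pi_B$ as claimed.

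The paper itself provides no proof of this lemma; it simply cites \cite{EO07}, where the relations $\partial_t F_0 = u/2$ and $\partial_\nu F_0 = \oint_B y\,dx$ are instances of the general variational formulas for the genus-zero free energy under deformations of the spectral curve (the so-called special-geometry or Seiberg--Witten relations). Your argument is genuinely different in spirit: rather than invoking the general machinery of \cite{EO07}, you give a self-contained elementary verification exploiting the specific feature of this curve---its quasi-homogeneity under the weighted scaling $(x,y,t,u)\mapsto(\lambda^2 x,\lambda^3 y,\lambda^4 t,\lambda^6 u)$---to express $u$ and $\Pi_B$ linearly in the four periods, after which only Legendre is needed. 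This buys you a proof that is independent of the topological-recursion formalism and transparent at the level of elliptic-function identities, at the cost of being tied to the particular Weierstrass form and not immediately generalizing to other spectral curves the way the \cite{EO07} argument does.
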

The second equality is called the Seiberg-Witten relation.

\subsection{Differentiation formula}
\label{section:variation-formula}

There are formulas which allow us to compute derivatives 
of $W_{g,n}$ and $F_g$ with respect to the parameters $t$ and $\nu$. 

\subsubsection{Differentiation with respect to $t$}
The derivatives of correlators and free energies 
with respect to $t$ are given as follows: 

\begin{prop} [C.f., {\cite[Theorem 5.1]{EO07}}] 
\label{prop:variation} ~ 
\begin{itemize}
\item[(i)] %
For $2g-2+n \ge 0$:
\begin{equation} \label{eq:variation-Wgn}
\frac{\p}{\p t} W_{g,n}(z(x_{1}), \dots, z(x_{n})) 
= \Res_{z_{n+1}=0} \frac{
W_{g,n+1}\bigl( z(x_{1}),\dots,z(x_{n}), z_{n+1} \bigr)}
{z_{n+1}}.
\end{equation}
\item[(ii)] %
For $g \ge 1$: 
\begin{equation} \label{eq:dt-closed-Fg}
\frac{\p F_{g}}{\p t}(t,\nu)
=  \Res_{z=0} \frac{W_{g,1}(z)}{z}. 
\end{equation}
\end{itemize}
\end{prop}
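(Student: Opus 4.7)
My overall strategy is to reduce both parts to a single base identity---the variation formula for $W_{0,1}$---and then propagate it through the topological recursion by induction on $2g-2+n$, following the general Rauch-type variational scheme of \cite{EO07} specialized to our genus-$1$ setting.

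The first step is to establish the base identity
\begin{equation*}
\partial_t W_{0,1}(z) = \Res_{z'=0}\frac{W_{0,2}(z,z')}{z'},
\end{equation*}
where $\partial_t$ is taken at fixed $x$. I would compute the left-hand side by differentiating $y^2 = 4x^3 + 2tx + u(t,\nu)$ at fixed $x$ to get $\partial_t y|_x = (2x+\partial_t u)/(2y)$; pulling back to the $z$-coordinate via $x=\wp(z)$, $y=\wp'(z)$ yields $\partial_t(y\,dx)|_x = (\wp(z) + \tfrac{1}{2}\partial_t u)\, dz$. A direct residue calculation gives $\Res_{z'=0} W_{0,2}(z,z')/z' = (\wp(z)+\eta_A/\omega_A)\,dz$, and the two sides agree thanks to the constraint $\partial_t u = 2\eta_A/\omega_A$ from \eqref{eq:PDE-u}. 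This identity also shows that $\partial_t(y\,dx)$ has its sole pole at $z=0$ on the spectral curve, a fact crucial for the inductive step.

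For part (i), I would argue by induction on $2g-2+n\ge 0$. The base case $(g,n)=(0,2)$ is obtained by differentiating the explicit formula $W_{0,2}=(\wp(z_1-z_2)+\eta_A/\omega_A)\,dz_1dz_2$ at fixed $x_1,x_2$, using the $t$-derivatives of $\wp$ and of the periods as elliptic integrals, and matching the outcome against $\Res_{z_3=0}W_{0,3}(z_1,z_2,z_3)/z_3$ computed via TR. For $2g-2+n\geq 1$, I would apply $\partial_t$ to the recursion \eqref{eq:top-rec}. The resulting expression splits into contributions from $\partial_t K(z_1,z)$ (controlled by the base identity together with the case $(0,2)$) and from $\partial_t W_{g',n'}$ for each factor (to which the inductive hypothesis applies and inserts a residue at an auxiliary point near $z=0$). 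The final task is to recognize the assembled expression as the TR applied to $W_{g,n+1}$ with its extra argument integrated near $z=0$, yielding $\Res_{z_{n+1}=0}W_{g,n+1}(\ldots,z_{n+1})/z_{n+1}$. For part (ii), $g\geq 2$ is handled by applying $\partial_t$ to $F_g=\frac{1}{2\pi i(2-2g)}\sum_j\oint_{\gamma_j}\Phi\cdot W_{g,1}$ and using the product rule: the $\partial_t\Phi$ contribution (via the base identity) and the $\partial_tW_{g,1}$ contribution (via part (i)) combine through integration by parts and residue swaps to $\Res_{z=0}W_{g,1}(z)/z$. The case $g=1$ is checked directly by differentiating $F_1=-\tfrac{1}{12}\log(\omega_A^6\Delta)$ using the $t$-derivatives of $\omega_A$ and $\Delta$, and comparing with $\Res_{z=0}W_{1,1}(z)/z$ computed from the TR formula for $W_{1,1}$.

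The main obstacle will be the inductive step in part~(i): one must carefully track the $t$-dependence of the ramification points $r_j$ (half-periods of the moving lattice $\Lambda_t$), of the kernel $K(z_1,z)$, and of the contours $\gamma_j$, then recombine the pieces produced by the inductive hypothesis and by $\partial_tK$ into the right-hand side of the TR for $W_{g,n+1}$. This combinatorial bookkeeping hinges on the sharp form of the base identity---that $\partial_t(y\,dx)$ has its only pole at $z=0$---so that a single residue at $z=0$ suffices throughout.
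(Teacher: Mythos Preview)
Your base identity is exactly the computation the paper does: the paper verifies that
\[
\frac{\partial y}{\partial t}(z)\,dx(z) - \frac{\partial x}{\partial t}(z)\,dy(z) = \Bigl(x(z)+\frac{\eta_A}{\omega_A}\Bigr)dz = \Res_{w=0}\frac{B(z,w)}{w},
\]
which is your equality $\partial_t W_{0,1}(z)=\Res_{z'=0}W_{0,2}(z,z')/z'$ written in the form required by the hypotheses of \cite[Theorem~5.1]{EO07}. The difference is only in what happens next: the paper simply invokes that theorem as a black box and stops, whereas you propose to reprove its conclusion from scratch by induction through the recursion \eqref{eq:top-rec} and by differentiating the definitions of $F_g$.

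Your route is sound in outline and more self-contained, but it is considerably longer: the inductive step you flag as the ``main obstacle'' is precisely the content of the general proof in \cite{EO07}, and carrying it out requires the symmetry of $W_{g,n}$ in all variables (not manifest from \eqref{eq:top-rec}) together with careful handling of the moving ramification points $r_j(t)$ and the contribution from $\partial_t K$. If you want a self-contained argument this is the way to do it; if you are willing to cite \cite{EO07}, the single base computation above is all that is needed.
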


\begin{proof}
A direct computation shows
\[
\frac{\partial y}{\partial t}(z) \cdot dx(z) 
- \frac{\partial x}{\partial t}(z) \cdot dy(z) 
= \left( x(z) + \frac{\eta_A}{\omega_A} \right) dz 
= \Res_{w=0} \frac{B(z, w)}{w}.
\]
Therefore, the claim follows directly from \cite[Theorem 5.1]{EO07}. 
\end{proof}


\subsubsection{Differentiation with respect to $\nu$}
The derivatives with respect to $\nu$ are also given as follows: 

\begin{prop} [C.f., {\cite[Theorem 5.1]{EO07}}] 
\label{prop:variation-nu} ~
\begin{itemize}
\item[(i)] %
For $2g-2+n \ge 0$:
\begin{equation} \label{eq:variation-Wgn-nu}
\frac{\p}{\p \nu} W_{g,n}(z(x_{1}), \dots, z(x_{n})) 
= \oint_{z_{n+1} \in B}
W_{g,n+1}\bigl( z(x_{1}),\dots,z(x_{n}),z_{n+1} \bigr).
\end{equation}
\item[(ii)] %
For $g \ge 1$: 
\begin{equation} \label{eq:dt-closed-Fg}
\frac{\p F_{g}}{\p \nu}(t, \nu)
 = \oint_{z \in B} W_{g,1}(z).
\end{equation}
\end{itemize}
\end{prop}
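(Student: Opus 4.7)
\emph{Proof proposal.}
The strategy is to parallel the proof of Proposition \ref{prop:variation}: invoke the general variation formula of Eynard--Orantin (\cite[Theorem 5.1]{EO07}), which expresses the derivative of $W_{g,n}$ with respect to a modulus of the spectral curve as an integral of $W_{g,n+1}$ against a kernel determined by the variation of the one-form $y\,dx$. The task is to identify this kernel in the direction $\p/\p\nu$, and for the present setup the correct kernel should be a $B$-cycle integral of the Bergman kernel rather than a residue at $z=0$.

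For part (i), the key identity to establish is
\[
\frac{\p y}{\p \nu}(z) \cdot dx(z) - \frac{\p x}{\p \nu}(z) \cdot dy(z) = \oint_{w \in B} B(z, w),
\]
which is the $\nu$-analogue of the residue identity used in Proposition \ref{prop:variation}. To verify it, note that treating $y = \sqrt{4x^3 + 2tx + u(t,\nu)}$ as a function of $x$ gives
\[
\frac{\p y}{\p \nu}\bigg|_{x} \cdot dx = \frac{\p_\nu u}{2y}\, dx = \frac{2\pi i}{\omega_A}\cdot\frac{dx}{y} = \frac{2\pi i}{\omega_A}\, dz,
\]
by the second PDE in \eqref{eq:PDE-u} together with the identity $dx/y = dz$ in the Weierstrass parametrization. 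A short bookkeeping calculation shows that $\p_\nu(y\,dx)$ computed at fixed $x$ versus at fixed $z$ differ by $d(y\,\p_\nu x|_z)$, which is exact and therefore does not contribute to the cycle integrals used in the EO formula. On the other hand, the normalization \eqref{eq:normalization-Bergman-kernel} gives $\oint_{w \in B} B(z, w) = (2\pi i/\omega_A)\, dz$, matching the variation kernel. Substituting into the EO formula yields (i).

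For part (ii), I would differentiate the defining formula
\[
F_g = \frac{1}{2\pi i(2-2g)} \sum_{j=1}^{3} \oint_{\gamma_j} \Phi(z)\, W_{g,1}(z) \qquad (g \ge 2)
\]
under the contour integrals. The factor $\p_\nu W_{g,1}$ is handled by part (i), while $\p_\nu \Phi(z) = \int_{z_o}^{z} \p_\nu(y\,dx)$ is an affine function of $z$ by the identity above. Integration by parts on each $\gamma_j$, together with the holomorphy of $W_{g,1}$ away from the ramification points and the normalization $\oint_A W_{g,1} = 0$, collapses the double sum to $\oint_B W_{g,1}$. The genus-one case $g=1$ is treated separately using the explicit formula \eqref{eq:expression-F1} and the known $\nu$-dependence of $\omega_A$ and~$\Delta$ through $u(t,\nu)$.

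The main obstacle is the bookkeeping in part (ii): commuting $\p_\nu$ with the contour integrals, tracking the contribution of $\p_\nu \Phi$, and verifying that the orientations of the $A$- and $B$-cycles produce the correct sign after using the Legendre relation. These manipulations parallel the corresponding arguments in \cite[\S5]{EO07}; the essential new ingredient is the identification of the $\nu$-variation kernel in part (i), which is just what the PDE $\p_\nu u = 4\pi i/\omega_A$ delivers.
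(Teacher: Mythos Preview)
Your approach is correct and essentially the same as the paper's: both invoke \cite[Theorem~5.1]{EO07}, and your explicit identification of the variation kernel $\p_\nu(y\,dx) = (2\pi i/\omega_A)\,dz = \oint_B B(z,w)$ is exactly the $\nu$-analogue of what the paper spells out for Proposition~\ref{prop:variation}. The paper's own proof is a one-line citation (pointing in particular to \cite[\S5.3]{EO07}, where the filling-fraction derivative is treated directly), so your detailed verification of the kernel and your separate treatment of part~(ii) are more than what is strictly needed---the free-energy variation is already part of the general EO machinery.
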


\begin{proof}
This is also follows from \cite[Theorem 5.1]{EO07}. 
(See also \cite[\S 5.3]{EO07}.) 
\end{proof}

\begin{rem} \label{rem:validity-of-nu-derivative-for-unstables}
Although the general proof of Theorem 5.1 in \cite{EO07} 
is valid for $W_{g,n}$ with $2g-2+n \ge 0$ and $F_g$ for $g \ge 1$, 
we can verify that \eqref{eq:variation-Wgn-nu} and 
\eqref{eq:dt-closed-Fg} still holds for any $g \ge 0$ and $n \ge 1$. 
We can verify it by direct computation: 
For example, \eqref{eq:normalization-Bergman-kernel} shows  
\begin{equation}
\frac{\p}{\p \nu} W_{0,1}(z(x)) = \frac{\p y(x)}{\p \nu} dx 
= \frac{2 \pi i}{\omega_A} \cdot z(x) \, dx 
= \oint_{ x_1 \in {B}} B(z(x_1), z(x)).
\end{equation}
\end{rem}

\subsection{PDE for the WKB series}
\label{subsec:WKB-BPZ}

Let us introduce the WKB-type formal series
\begin{equation} \label{eq:WKB-series-B}
\psi_{\pm}(x,t,\nu; \hbar) 
:= \exp \bigl( S_{\pm}(x,t,\nu; \hbar) \bigr), 
\qquad
S_{\pm}(x,t,\nu;\hbar) := 
\sum_{m = -1}^{\infty} (\pm \hbar)^m S_m(x,t,\nu), 
\end{equation}
where the coefficients $\{ S_m(x,t,\nu) \}_{m \ge -1}$ 
in \eqref{eq:WKB-series-B} are defined as follows:
\begin{itemize}
\item 
The function $S_{-1}(x,t,\nu)$ is defined by 
\begin{equation} \label{eq:def-of-S-1}
S_{-1}(x,t,\nu) := 
\frac{4}{5} \, x^{\frac{5}{2}} + t \, x^{\frac{1}{2}} + 
\int^{x}_{\infty} \left( \sqrt{4x^3+2tx+u(t,\nu)} - 
\Bigl( 2 \, x^{\frac{3}{2}} + \frac{t}{2} \cdot x^{- \frac{1}{2}} \Bigr) \right) \, dx.
\end{equation}
Here the right hand-side is a regularization of 
a divergent integral 
$\int^{x}_{\infty} ydx$ 
$= \int^{z(x)}_{0} W_{0,1}(z)$.
It follows from \eqref{eq:PDE-u} that 
\begin{equation} \label{eq:t-derivative-S-1}
\frac{\p}{\p t} S_{-1}(x,t,\nu)  = P\bigl( z(x) \bigr), 
\qquad
\frac{\p}{\p \nu} S_{-1}(x,t,\nu)  = \frac{2\pi i}{\omega_A} \cdot z(x).
\end{equation}
where $P(z)$ is defined in \eqref{eq:Pz-function}.

\item 
The function $S_{0}(x,t,\nu)$ is defined as 
\begin{equation} \label{eq:def-of-S0}
S_0(x,t,\nu) := \frac{1}{2} \, F_{0,2}(z_1, z_2) |_{z_1=z_2=z(x)},
\end{equation} 
where
\begin{equation} \label{eq:def-F02}
F_{0,2}(z_1, z_2) := 
- \log \frac{\sigma(z_1+z_2)}{\sigma(z_1) \cdot \sigma(z_2)} 
+ \frac{\eta_A}{\omega_A} \cdot z_1 \, z_2.
\end{equation}
It is easy to verify that 
\begin{equation}
d_{z_1}d_{z_2} F_{0,2}(z_1, z_2) = 
W_{0,2}(z_1, z_2) - \frac{dx(z_1) \cdot dx(z_2)}{(x(z_1) - x(z_2))^2}.
\end{equation}
Using a well-known relation between Weierstrass $\sigma$-function
and the $\theta$-functions, we have
\begin{equation}
S_0(x,t,\nu) = - \frac{1}{4} \, 
\log \bigl( 4x^3+2tx+u(t,\nu) \bigr)
+ \log \left( - \frac{\theta_{11}'(0,\tau)}{\omega_A} \right)
- \log \theta_{11}\left( \frac{z(x)}{\omega_A}, \tau \right).
\end{equation}
Here $\tau := \tau(t,\nu) = \omega_B(t,\nu) / \omega_A(t,\nu)$.
It is also easy to verify that
\begin{equation} \label{eq:derivative-of-S0}
\frac{\p}{\p x} S_0(x,t,\nu) = 
- \frac{1}{4} \, \frac{\p}{\p x} \log \bigl( 4x^3+2tx+u(t,\nu) \bigr)
+ \frac{P\bigl( z(x) \bigr)}{\sqrt{ 4x^3+2tx+u(t,\nu) }}. 
\end{equation}

\item
The functions $S_{m}(x,t,\nu)$ with $m \ge 1$ are defined as
\begin{equation} \label{eq:higher-Sm}
S_{m}(x,t,\nu) := \sum_{\substack{2g-2+n=m \\ g\ge0, n\ge 1}} 
\frac{F_{g,n}(z_1,\dots,z_n)}{n!} \biggr|_{z_1=\cdots=z_n=z(x)} 
\end{equation}
with 
\begin{equation}
\label{eq:open-Fgn}
F_{g,n}(z_{1},\dots,z_{n}) := 
\int^{z_{1}}_{0}\cdots \int^{z_{n}}_{0} W_{g,n}(z_{1},\dots,z_{n}) 
\qquad
(2g-2+n \ge 1).
\end{equation}

\end{itemize}

\begin{thm} \label{thm:BPZ-type-equation}
The formal series $\psi_{\pm}(x,t,\nu; \hbar)$ 
defined in \eqref{eq:WKB-series-B}
is a WKB-type formal solution of the following PDE:
\begin{equation} \label{eq:BPZ-equation}
\left[ \hbar^2 \frac{\p^2}{\p x^2} 
- 2 \hbar^2 \frac{\p}{\p t}
- \Bigl( 4x^3 + 2 t x + 2 \hbar^2 \, \frac{\p F}{\p t}(t,\nu;\hbar) 
\Bigr) \right] \psi = 0.
\end{equation}
\end{thm}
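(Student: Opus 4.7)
The plan is to substitute the WKB ansatz $\psi_\pm = \exp(S_\pm)$ into \eqref{eq:BPZ-equation} and verify the resulting hierarchy order by order in $\hbar$. Since every coefficient of \eqref{eq:BPZ-equation} is even in $\hbar$ and $\psi_-$ is obtained from $\psi_+$ by the formal substitution $\hbar \mapsto -\hbar$, it suffices to treat $\psi_+$. Dividing by $\psi$ turns the PDE into
\begin{equation*}
\hbar^2 \, \frac{\p^2 S}{\p x^2} + \hbar^2 \, \Bigl( \frac{\p S}{\p x} \Bigr)^2 - 2\hbar^2 \, \frac{\p S}{\p t} \;=\; 4x^3 + 2tx + 2\hbar^2 \, \frac{\p F}{\p t}(t,\nu;\hbar),
\end{equation*}
and expansion in powers of $\hbar$ (with the convention $S_{-2} := 0$) yields, at order $\hbar^m$, the equation
\begin{equation*}
\frac{\p^2 S_{m-2}}{\p x^2} + \sum_{\substack{k+\ell = m-2 \\ k,\ell \ge -1}} \frac{\p S_k}{\p x} \cdot \frac{\p S_\ell}{\p x} - 2 \, \frac{\p S_{m-2}}{\p t} \;=\; R_m,
\end{equation*}
where $R_0 = 4x^3+2tx+u$ (using $2\,\p_t F_0 = u$ from \eqref{eq:dF0-dt}), $R_m = 2 \, \p_t F_{m/2}$ for even $m \ge 2$, and $R_m = 0$ for odd $m$.

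The two lowest equations can be checked directly. At $m=0$ the identity reduces to $(\p_x S_{-1})^2 = 4x^3+2tx+u$, which is precisely the relation $\p_x S_{-1} = y(z(x))$ built into \eqref{eq:def-of-S-1}. At $m=1$ the equation reads $\p_x^2 S_{-1} + 2 \, \p_x S_{-1} \cdot \p_x S_0 = 2 \, \p_t S_{-1}$; substituting $\p_x y = (6x^2+t)/y$ obtained by differentiating $y^2=4x^3+2tx+u$, the explicit formula \eqref{eq:derivative-of-S0} for $\p_x S_0$, and the identity $\p_t S_{-1} = P(z(x))$ from \eqref{eq:t-derivative-S-1}, both sides collapse to $2\, P(z(x))$.

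For $m \ge 2$ the plan is to argue by induction on $m$, using the topological recursion \eqref{eq:top-rec} together with the variational formulas of Proposition \ref{prop:variation}. By \eqref{eq:higher-Sm}, each $S_m$ is a principal specialization $z_1 = \cdots = z_n = z(x)$ of sums of $F_{g,n}$ with $2g-2+n=m$, so $\p_x S_m$ and $\p_x^2 S_m$ become principal specializations of $W_{g,n}$ and its $x$-derivatives. Proposition \ref{prop:variation} lets one rewrite the $t$-derivative $\p_t S_{m-2}$ as a sum of residues $\Res_{w=0} W_{g,n+1}(\ldots, w)/w$, and the topological recursion \eqref{eq:top-rec} applied to the $W_{g',n'}$ with $2g'-2+n' = m-1$ reorganizes these residues: the $W_{0,2}$-splitting sum in \eqref{eq:top-rec} produces the quadratic piece $\sum \p_x S_k \cdot \p_x S_\ell$, while the $W_{g-1,n+1}$-term produces $\p_x^2 S_{m-2}$. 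The leftover top-degree contribution matches $R_m = 2\,\p_t F_{m/2}$ through \eqref{eq:dt-closed-Fg}.

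The main obstacle will be the combinatorial bookkeeping in the coincidence limit $z_1 = \cdots = z_n = z(x)$: the multinomial factors produced by the principal specialization must exactly match the symmetry factors appearing when unfolding the splittings in \eqref{eq:top-rec}. A further delicate point is the regularization of the $W_{0,2}$ diagonal singularity, which is handled through the definition \eqref{eq:def-F02}, and the need to deform the contours $\gamma_j$ around the ramification points onto a small loop at $z=0$; this last step uses the trivial $A$-period normalization \eqref{eq:trivial-A-period-Wgn} together with the fact that in the coincidence limit the only pole of the integrand outside the ramification points is at $z=0$, so that no $B$-cycle contribution can arise.
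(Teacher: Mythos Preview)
Your setup of the Riccati hierarchy and the direct verification at orders $m=0,1$ are fine, and the overall strategy (integrate the topological recursion, then match against $\partial_t$ via Proposition~\ref{prop:variation}) is indeed the paper's route. The gap is in your last paragraph, where you claim the contour deformation from $\sum_j \gamma_j$ produces only residue contributions and ``no $B$-cycle contribution can arise'' thanks to~\eqref{eq:trivial-A-period-Wgn}. This is where the genus~$1$ situation differs essentially from the genus~$0$ computation of \cite{IS}, and your argument would fail precisely here.

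On the torus, pushing the contour off the ramification points forces you to the boundary $\partial\Omega$ of the fundamental domain. The recursion kernel $K(z_1,z)$ of~\eqref{eq:recursion-kernel} is built from $P(z)$ and is \emph{not} doubly periodic: one has $K(z_1,z+\omega_A)=K(z_1,z)$ but
\[
K(z_1,z+\omega_B)-K(z_1,z)=\frac{2\pi i}{\omega_A}\cdot\frac{dz_1}{(y(z)-y(\bar z))\,dx(z)}\neq 0,
\]
so the two $A$-sides of $\partial\Omega$ do not cancel. What survives is a nonzero $A$-cycle integral $-\frac{dz_1}{\omega_A}\oint_{A}\frac{R_{g,n}(z,\dots)}{(y(z)-y(\bar z))\,dx(z)}$, where $R_{g,n}$ is the integrated TR integrand. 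The normalization~\eqref{eq:trivial-A-period-Wgn} is irrelevant here: the integrand is $K\cdot R_{g,n}$, not a correlator $W_{g,n}$. This boundary term is not a nuisance to be argued away; it is exactly the mechanism through which $\partial_t$ enters the recursion. The paper's proof shows (by taking a suitable residue at $z_n=0$ and invoking Proposition~\ref{prop:variation}) that this same $A$-cycle integral, up to an explicit correction involving $P(z_j)/y(z_j)$, equals $\partial_t F_{g,n-1}$. Only after this identification does the principal specialization collapse to the Riccati relation~\eqref{eq:BPZ-equation}. Without the boundary term your accounting cannot close: the residue contributions alone do not reproduce $2\,\partial_t S_{m-2}$.
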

\begin{proof}
The statement can be proved by a similar technique used in \cite{IS}. 
The proof is given in Appendix \ref{appendix:quantum-curve-theorem}.
\end{proof}

In view of \eqref{eq:dF0-dt}, the equation satisfied by the 
leading term of the WKB solution of the above PDE 
\eqref{thm:BPZ-type-equation} 
coincides with the defining equation
of the spectral curve \eqref{eq:spcurve}. 
In this sense, the PDE \eqref{thm:BPZ-type-equation} 
is a quantization of the spectral curve \eqref{eq:spcurve}.

\begin{rem}
In \cite{BCD}, Bouchard-Chidambaram-Dauphinee
obtained a similar result; namely, they also construct 
a quantization of Weierstrass elliptic curve by 
the topological recursion. 
Their quantum curve contains infinitely many 
$\hbar$-correction terms; this is because 
the Weierstrass elliptic curve is not admissible 
in the sense of \cite{BE16}. 
Theorem \ref{thm:BPZ-type-equation} shows that 
the $\hbar$-corrections can be controlled
by a derivative with respect to a deformation parameter.
\end{rem}

\subsection{Asymptotics at $x = \infty$}

It follows from the definitions of $S_m$ that 
\begin{equation}
S_{-1}(x,t,\nu) = 
\frac{4}{5} \, x^{\frac{5}{2}} + t \, x^{\frac{1}{2}} + O(x^{-\frac{1}{2}}), 
\quad
S_0(x,t,\nu) = - \frac{1}{4} \, \log x + O(x^{-\frac{1}{2}})
\end{equation}
and
\begin{equation}
S_m(x,t,\nu) = O(x^{- \frac{1}{2}}) \quad (m \ge 1)
\end{equation}
hold when $x$ tends to infinity. 
Thus, the formal series $S_{\pm}$ behaves as 
\begin{equation} 
\label{eq:x-behavior-of-S}
S_{\pm} = \pm \frac{4}{5\hbar} \cdot 
x^{\frac{5}{2}} \pm \frac{t}{\hbar} \cdot x^{\frac{1}{2}} 
- \frac{1}{4} \, \log x + A^{(\pm)}_0(\hbar) \cdot x^{-\frac{1}{2}} 
+ A^{(\pm)}_1(\hbar) \cdot x^{-1} 
+ A^{(\pm)}_2(\hbar) \cdot x^{- \frac{3}{2}} + \cdots 
\end{equation}
when $x$ tends to $\infty$. 
Here $A^{(\pm)}_i(\hbar)$'s are formal power series of $\hbar$ 
whose coefficients are independent of $x$. 
Actually, we can determine those formal series by substituting 
the expansion \eqref{eq:x-behavior-of-S} into the equation 
\eqref{eq:PDE-Riccati-quantum-curve} satisfied by $S$; 
after simple computation, we have 
\begin{equation}
A^{(\pm)}_0 = \mp \hbar \, \frac{\p F}{\p t}(t,\nu;\hbar), 
\quad
A^{(\pm)}_1 = \mp \frac{\hbar}{2} \cdot \frac{\p A^{(\pm)}_0}{\p t} = 
 \frac{\hbar^2}{2} \cdot \frac{\p^2 F}{\p t^2}(t,\nu; \hbar), 
\end{equation}
\begin{equation}
A^{(\pm)}_2 = 
\mp \left(\frac{\hbar}{3} \cdot \frac{\p A^{(\pm)}_1}{\p t} - \frac{t^2}{24 \hbar} \right)
= \mp \left( \frac{\hbar^3}{6} \cdot \frac{\p^3 F}{\p t^3}(t,\nu;\hbar) 
- \frac{t^2}{24 \hbar} \right),
\end{equation}
and so on. This implies the following (where $'$ means the $t$-derivative):
\begin{multline} \label{eq:behavior-psi-pm}
\psi_{\pm}(x,t,\nu;\hbar) 
\\ 
= \exp\left(  \pm \frac{1}{\hbar} \cdot 
\Bigl( \frac{4}{5} \, x^{\frac{5}{2}} + t \, x^{\frac{1}{2}} \Bigr)  \right)
\, \cdot \, x^{-\frac{1}{4}} \cdot 
\Biggl\{  1 \mp \hbar \, F'(t,\nu;\hbar) \cdot x^{- \frac{1}{2}} 
+ \hbar^2 \, \Bigl(  F''(t,\nu;\hbar) + F'(t,\nu;\hbar)^2 \Bigr) \cdot x^{-1} \\
\mp \frac{4\hbar^4 \, F^{(4)}(t,\nu;\hbar) 
+ 12 \hbar^4 \, F'(t,\nu;\hbar) \cdot F''(t,\nu;\hbar) 
+ 4 \hbar^2 \,  F'(t,\nu;\hbar)^3}{24} \cdot x^{- \frac{3}{2}}
+ O(x^{-2}) 
\Biggr\}.
\end{multline}

\subsubsection{Formal monodromy relation}

The WKB series $\psi_{\pm}$ is a formal series of $\hbar$ 
whose coefficients are multivalued functions on the spectral curve. 
By the {\em formal monodromy} of $\psi_{\pm}$ 
we mean the monodromy of $\psi_{\pm}$ for the term-wise analytic continuation 
along closed cycles on the spectral curve. 

The following observation is crucial for the proof of our main results. 
\begin{thm} \label{thm:formal-monodromy-psi}
The WKB series $\psi_{\pm}(x,t,\nu;\hbar)$ 
defined in \eqref{eq:WKB-series-B} has the 
following formal monodromy properties: 
\begin{itemize}
\item[{\rm (i)}]
The formal monodromy of $\psi_{\pm}(x,t,\nu;\hbar)$ 
along the $A$-cycle is given by 
\begin{equation}
\psi_{\pm}(x,t,\nu;\hbar) \mapsto 
e^{\pm {2\pi i \nu}/{\hbar}} \cdot 
\psi_{\pm}(x,t,\nu;\hbar).
\end{equation}
\item[{\rm (ii)}]
The formal monodromy of $\psi_{\pm}(x,t,\nu;\hbar)$ 
along the $B$-cycle is given by 
\begin{equation}
\psi_{\pm}(x,t,\nu;\hbar) \mapsto 
\frac{Z(t,\nu \pm \hbar; \hbar)}{Z(t,\nu;\hbar)} \cdot 
\psi_{\pm}(x,t,\nu \pm \hbar;\hbar).
\end{equation}
\end{itemize}
\end{thm}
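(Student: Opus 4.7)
The strategy is to convert both claims into a single shift identity for the combined formal series
\begin{equation*}
G_\pm(x, t, \nu; \hbar) \, := \, F(t, \nu; \hbar) + \log \psi_\pm(x, t, \nu; \hbar),
\end{equation*}
so that $Z \cdot \psi_\pm = e^{G_\pm}$. Uniformly over $(g, n)$ with $g, n \geq 0$---with the convention $F_{g, 0} := F_g$, the stable $F_{g, n}$'s from \eqref{eq:open-Fgn}, and the regularized $F_{0, 1}$, $F_{0, 2}$ from \S\ref{subsec:WKB-BPZ}---one may write
\begin{equation*}
G_\pm(x, t, \nu; \hbar) \, = \, \sum_{g, n \geq 0} \frac{(\pm \hbar)^{2g-2+n}}{n!} \, F_{g, n}(z(x), \dots, z(x); \nu).
\end{equation*}
Parts (i) and (ii) are then equivalent, modulo $2 \pi i \bZ$ (which is invisible after exponentiation), to the two shift identities
\begin{equation*}
G_\pm(z+\omega_A; \nu) \, \equiv \, G_\pm(z; \nu) \pm \frac{2\pi i \nu}{\hbar},
\qquad
G_\pm(z+\omega_B; \nu) \, \equiv \, G_\pm(z; \nu \pm \hbar).
\end{equation*}
I focus on $\psi_+$; the case of $\psi_-$ follows by $\hbar \mapsto -\hbar$ together with the $\hbar$-evenness of $F$.

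The heart of the argument is the following lemma, which I claim holds for every $(g, n)$ with $g, n \geq 0$:
\begin{align*}
\textbf{(A)} \quad & F_{g, n}(z + \omega_A, \dots, z + \omega_A; \nu) - F_{g, n}(z, \dots, z; \nu) \, \equiv \, 2 \pi i \nu \cdot \delta_{(g, n), (0, 1)} \pmod{2 \pi i \bZ}, \\
\textbf{(B)} \quad & F_{g, n}(z + \omega_B, \dots, z + \omega_B; \nu) - F_{g, n}(z, \dots, z; \nu) \, \equiv \, \sum_{k=1}^{n} \binom{n}{k} \, \partial_\nu^k F_{g, n-k}(z, \dots, z; \nu) \pmod{2 \pi i \bZ}.
\end{align*}
For stable $(g, n)$, (A) is immediate from the $A$-period normalization \eqref{eq:trivial-A-period-Wgn}, and (B) follows by iterating the $B$-variation formula \eqref{eq:variation-Wgn-nu}: the shift operators $\Delta_i \colon z_i \mapsto z_i + \omega_B$ commute on $F_{g, n}$, each $\Delta_i$ acts by $\partial_\nu$ composed with deletion of the $i$-th variable, and summing over the $2^n$ subsets of shifted variables---with those of size $k$ contributing $\partial_\nu^k F_{g, n-k}$---yields the binomial form. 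For $(g, n) = (g, 0)$ both sides vanish trivially. For $(g, n) = (0, 1)$, (A) is the defining relation $\oint_A y\,dx = 2 \pi i \nu$, and (B) is the Seiberg--Witten relation \eqref{eq:dF0-dt}.

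The remaining case $(g, n) = (0, 2)$ is delicate because $F_{0, 2}$ is defined via \eqref{eq:def-F02} rather than as a pure $W_{0, 2}$-integral. Here a direct calculation using the quasi-periodicity $\sigma(z + \omega_\ast) = -e^{\eta_\ast(z + \omega_\ast / 2)}\sigma(z)$ produces
\begin{equation*}
\frac{e^{F_{0, 2}(z + \omega_\ast, z + \omega_\ast)}}{e^{F_{0, 2}(z, z)}} \, = \, \exp\!\left( \frac{\eta_A \omega_\ast - \eta_\ast \omega_A}{\omega_A} \cdot (2 z + \omega_\ast) \right),
\end{equation*}
which for $\ast = A$ equals $1$ (establishing (A)) and for $\ast = B$ equals $\exp\bigl( (2 \pi i / \omega_A)(2 z + \omega_B) \bigr)$ by the Legendre relation $\eta_A \omega_B - \eta_B \omega_A = 2 \pi i$ (equivalent to \eqref{eq:normalization-Bergman-kernel}). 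On the other hand, the right-hand side of (B) reads $2 \partial_\nu F_{0, 1}(z) + \partial_\nu^2 F_0 = 2 \,(2 \pi i z / \omega_A) + 2 \pi i \tau$, using \eqref{eq:t-derivative-S-1} for the first derivative and one differentiation of the Seiberg--Witten relation together with \eqref{eq:normalization-Bergman-kernel} for the second. The two sides agree.

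Once the lemma is in hand, assembling it with the weights $\hbar^{2g-2+n}/n!$ yields (i) at once, and---after recognizing the summed right-hand side of (B) as the formal Taylor expansion $\sum_{g, m, k} \hbar^{2g-2+m+k} (m! k!)^{-1} \partial_\nu^k F_{g, m}(z, \dots, z; \nu) = G_+(z; \nu+\hbar) - G_+(z; \nu)$---also (ii). The main obstacle is the unstable $(0, 2)$ case of (B): there the Eynard--Orantin variational framework does not directly apply, and one must verify by hand the coincidence between a $\sigma$-function quasi-period computation (which produces $2\pi i / \omega_A$ only via Legendre's relation) and a combination of $\nu$-derivatives of $F_0$ and $F_{0, 1}$. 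Conceptually the two sides coincide because both compute the $B$-period of the Bergman kernel, but the explicit algebraic matching is the technical core of the argument.
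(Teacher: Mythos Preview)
Your argument is correct and follows essentially the same route as the paper: both proofs expand $F_{g,n}(z+\omega_B,\dots,z+\omega_B)$ binomially, convert the $B$-integrals to $\nu$-derivatives via Proposition~\ref{prop:variation-nu} (and Remark~\ref{rem:validity-of-nu-derivative-for-unstables} for the unstable cases), and recognize the result as the Taylor shift $\nu\mapsto\nu\pm\hbar$. Your repackaging via $G_\pm=F+\log\psi_\pm$ is cosmetic, though your explicit $\sigma$-function verification of the $(0,2)$ case is a welcome elaboration of what the paper leaves to the cited Remark.
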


\begin{proof}
The claim (i) is a consequence of \eqref{eq:trivial-A-period-Wgn} 
and the definition \eqref{eq:nu} of $\nu$. 
The second claim (ii) can be proved by the following computation: 
\begin{align}
& 
\text{Term-wise analytic continuation of 
$\psi_\pm(x,t,\nu;\hbar)$ along the $B$-cycle} \notag \\ 
& \quad = 
\exp\left( \sum_{\substack{g \ge 0, \, n \ge 1}} 
\frac{(\pm \hbar)^{2g-2+n}}{n!} \cdot
F_{g,n}\bigl( z(x) + \omega_B,\dots, z(x)+\omega_B \bigr) \right) \notag \\ 
& \quad = 
\exp\left( \sum_{\substack{g \ge 0, \, n \ge 1}} \frac{(\pm \hbar)^{2g-2+n}}{n!} 
\cdot 
\sum_{\ell=0}^{n} {\binom n\ell} \underbrace{\oint_{B} \cdots \oint_B}_{\ell} 
\underbrace{\int^{z(x)}_0 \dots \int^{z(x)}_0}_{n-\ell} W_{g,n}(z'_1, \dots, z'_n)
\right) \notag \\
& \quad = 
\exp\left( \sum_{\substack{g, \ell_1, \ell_2 \ge 0  \\ \ell_1+\ell_2 \ge 1}} 
\frac{(\pm \hbar)^{2g-2+\ell_1+\ell_2}}{\ell_1! \cdot \ell_2 !} \cdot
\frac{\p^{\ell_1}}{\p \nu^{\ell_1}} 
\underbrace{\int^{z(x)}_0 \dots \int^{z(x)}_0}_{\ell_2} 
W_{g,\ell_2}(z'_1, \dots, z'_{\ell_2})
\right) \notag \\
& \quad = 
\exp\left( \sum_{\ell_1 \ge 1} \frac{(\pm \hbar)^{\ell_1}}{\ell_1!} \cdot
\frac{\p^{\ell_1}}{\p \nu^{\ell_1}} 
\sum_{g \ge 0} \hbar^{2g-2} F_g(t,\nu) \right)   \notag \\
& \hspace{+2.2em}
\times \exp\left( \sum_{\ell_1 \ge 0} \frac{(\pm \hbar)^{\ell_1}}{\ell_1!} \cdot
\frac{\p^{\ell_1}}{\p \nu^{\ell_1}} 
\sum_{\substack{g \ge 0, \, \ell_2 \ge 1}}
\frac{(\pm \hbar)^{2g-2+\ell_2}}{\ell_2!} F_{g,\ell_2}(z(x), \dots, z(x))
\right) \notag \\
& \quad = 
\frac{Z(t,\nu \pm \hbar; \hbar)}{Z(t,\nu;\hbar)} \cdot 
\psi_{\pm}(x,t,\nu \pm \hbar;\hbar).
\end{align}
Here we have written $F_g = W_{g,0}$, and used Proposition \ref{prop:variation-nu} 
to reduce the integration along the $B$-cycle to the $\nu$-derivative
(c.f., Remark \ref{rem:validity-of-nu-derivative-for-unstables}).
\end{proof}

A similar computation, 
which converts a term-wise analytic continuation to a shift operator, 
was used in \cite{IKoT1, IKoT2} to establish a relationship 
between the Voros coefficients of quantum curves and the free energy
for a class of spectral curves arising from 
the hypergeometric differential equations. 
 
\begin{rem}
Let us denote by
\begin{equation}
{\rm Wr}(k_1, k_2) := Z(\nu+k_1\hbar) \cdot Z(\nu+k_2\hbar)
\cdot \left(\hbar \, \frac{\p \psi_+(\nu+k_1 \hbar)}{\p x} 
\cdot \psi_-(\nu+k_2 \hbar) 
- \psi_+(\nu+k_1 \hbar) \cdot \hbar \, \frac{\p \psi_-(\nu+k_2 \hbar)}{\p x} 
\right)
\end{equation}
the Wrosnskian of the WKB series with parameter shifts.
Theorem \ref{thm:formal-monodromy-psi} implies that 
it has formal monodoromy 
\begin{eqnarray}
{\rm Wr}(k_1, k_2) \mapsto 
\begin{cases}
{\rm Wr}(k_1, k_2) &  \text{along $A$-cycle} \\ 
{\rm Wr}(k_1+1, k_2-1) &  \text{along $B$-cycle}.
\end{cases}
\end{eqnarray}
This is because of the fact that the Wronskian has 
singularities at the branch points $x=e_1, e_2, e_3$. 
However, as we will see below, the singularities disappear 
after taking the discrete Fourier transform 
(c.f., Proposition \ref{prop:polynomiality-of-Wronskian}). 
Although the formal monodromy of $\psi_{\pm}$ is realized by 
a shift of $\nu$, the discrete Fourier transform cancels 
those shifts in the Wronskian. 
This is crucial to construct a formal solution of the isomonodromy system 
and 2-parameter $\tau$-function associated with $\PI$. 
We borrow the idea (i.e., using the discrete Fourier transform) 
from the conformal field theoretic construction 
of the $\tau$-function for Painlev\'e equations 
(c.f., \cite{GIL} etc.). 
\end{rem}

\section{Main results : 
2-parameter $\tau$-functions and the isomonodromy system}
\label{section:main-result}

\subsection{Statement of the main results}

\subsubsection{The formal $\tau$-function for Painlev\'e I}

Let us consider the formal series 
\begin{equation} \label{eq:tau-function}
\tau_{P_{\rm I}}(t,\nu,\rho;\hbar) 
:= \sum_{k \in {\mathbb Z}} e^{{2 \pi i k \rho}/{\hbar}} 
\cdot Z(t,\nu+k \hbar; \hbar),
\end{equation}
which is the discrete Fourier transform of 
the partition function $Z(t,\nu; \hbar)$ 
with respect to $\nu$. 
This type of formal series was introduced in \cite{EM08} 
in another equivalent expression, 
and called the ``non-perturbative partition function".
An important observation by \cite{Borot-Eynard, EM08} 
is that the formal series of the above form 
is expressed as a formal power series of $\hbar$ 
whose coefficients are given by a finite sum of 
the $\theta$-functions and their derivatives.

Such an expression is obtained as follows.
Using the series expansion, we have 
\begin{equation}
\frac{Z(\nu+k \hbar)}{Z(\nu)}  
= \exp\left( \frac{2\pi i k \phi}{\hbar} 
+ \pi i k^2 \tau \right) 
\cdot \,
\exp\left(
\sum_{\substack{g \ge 0, ~ n \ge 1 \\ 2g-2+n \ge 1}} 
\frac{\hbar^{2g-2+n}}{n!} \cdot k^n \cdot
\frac{\p^n F_g}{\p \nu^n}
\right)
\end{equation}
for each $k \in {\mathbb Z}$. 
Here we have set
\begin{align}
\phi = \phi(t,\nu) & 
:= \frac{1}{2 \pi i} \cdot \frac{\p F_0}{\p \nu}  
= \frac{1}{2\pi i} \oint_B W_{0,1}(z),  \label{eq:def-phi} \\
\tau = \tau(t,\nu) & 
:= \frac{1}{2 \pi i} \cdot  \frac{\p^2 F_0}{\p \nu^2} =
\frac{1}{2 \pi i}\oint_{z_1 \in B}\oint_{z_2 \in B} W_{0,2}(z_1,z_2) 
= \frac{\omega_B}{\omega_A}. 
\end{align}
Therefore, the formal series \eqref{eq:tau-function}
has the following formal power series expansion: 
\begin{equation} \label{eq:series-expansion-tau}
\tau_{P_{\rm I}}
= Z(\nu) \cdot \sum_{k \in {\mathbb Z}}
e^{{2\pi i k \rho}/{\hbar}} \cdot \frac{Z(\nu+k \hbar)}{Z(\nu)} 
= Z(\nu) \cdot \sum_{m = 0}^{\infty}
\hbar^{m} \Theta_{m}(t,\hbar),
\end{equation}
where 
\begin{align}
\Theta_{0} & = \sum_{k \in {\mathbb Z}} 
\exp\left( \frac{2\pi i k(\phi+\rho)}{\hbar} 
+ \pi i k^2 \tau \right) 
= \theta_{00}\Big( \frac{\phi+\rho}{\hbar}, \tau \Bigr),  
\\
\Theta_{1} & = 
\sum_{k \in {\mathbb Z}} 
\exp\left( \frac{2\pi i k(\phi+\rho)}{\hbar} 
+ \pi i k^2 \tau \right) \cdot 
\Bigl( 
\frac{k^3}{3!} \frac{\p^3 F_0}{\p \nu^3} + k \frac{\p F_1}{\p \nu}
\Bigr)  \notag \\
& = \frac{ \p_\nu^3 F_0}{(2\pi i )^3 \, 3!}
\cdot \frac{\p^3 \theta_{00}}{\p v^3}\Big( \frac{\phi+\rho}{\hbar}, \tau \Bigr) 
+ \frac{\p_\nu F_1}{2 \pi i} \cdot
\frac{\p \theta_{00}}{\p v}\Big( \frac{\phi+\rho}{\hbar}, \tau \Bigr) 
\end{align}
and so on. (See Appendix \ref{subsection:theta-functions} 
for the definition and several properties of $\theta$-functions.)
It is obvious that, for each $m \ge 0$,  
the coefficient $\Theta_{m}$ is expressed as 
a finite sum of $\theta$-functions and their derivatives.

\begin{rem} \label{rem:t-differential-structure}
Two formal series of the form \eqref{eq:series-expansion-tau} are obviously 
summed, multiplied, as well as the usual formal power series.  
However, we need to be careful when we differentiate the formal series by $t$. 
We define the $t$-derivative of such formal series by 
term-wise differentiation as usual:
\begin{equation}
\frac{\p}{\p t} \left( \sum_{m} \hbar^m \Theta_{m}(t,\hbar) \right) 
:= \sum_{m} \hbar^m \, \frac{\p}{\p t} \Theta_{m}(t,\hbar).
\end{equation}
Since each coefficient $\Theta_{m}(t,\hbar)$ is of the form 
\begin{equation} \label{eq:expression-of-coefficients-theta}
\Theta_{m}(t,\hbar) = 
\tilde{\Theta}_{m}(t,v) \bigl|_{v=\frac{\phi+\rho}{\hbar}}
\end{equation} 
with a certain function $\tilde{\Theta}_{m}(t,v)$, 
which is written by a finite sum of 
$\theta$-function $\theta_{00}(v,\tau)$ and their derivatives, 
the $t$-derivative does not preserve the $\hbar$-grading. 
More precisely, we have
\begin{equation}
\frac{\p}{\p t} \left( 
\sum_{m} \hbar^m \Theta_{m}(t,\hbar) \right)
= \sum_{m} \hbar^m \left[
\frac{\p \tilde{\Theta}_{m}}{\p t}(t, v) 
+ \frac{\p \phi}{\p t} \cdot
\frac{\p \tilde{\Theta}_{m+1}}{\p v}(t, v) 
\right]_{v=\frac{\phi+\rho}{\hbar}}.
\end{equation}
Note also that this type of the $t$-differential structure 
also appeared in the work of Aoki-Kawai-Takei (\cite{AKT-P, KT-PIII}),
where another type of 2-parameter family of formal solutions 
of Painlev\'e equations was constructed by the multiple-scale method.
\end{rem}

Taking the above remark into consideration, we further define 
\begin{align}
H(t,\nu,\rho;\hbar) 
& := \hbar^2 \frac{\p}{\p t} \log \tau_{P_{\rm I}}(t,\nu,\rho;\hbar)
= \sum_{m = 0}^\infty \hbar^m H_m(t,\hbar) \notag \\
& = \frac{\p F_0}{\p t} + \hbar \frac{\p \phi}{\p t} 
\cdot \left[ \frac{\p}{\p v} \log \theta_{00}(v, \tau) 
\right]_{v = \frac{\phi+\rho}{\hbar}} + O(\hbar^2),
\label{eq:hbar-expansion-of-H} \\
q(t,\nu,\rho;\hbar) 
& := -\frac{\p}{\p t} H(t,\nu,\rho;\hbar) 
= \sum_{m = 0}^\infty \hbar^m q_m(t,\hbar)
\notag \\
& = - \frac{\p^2 F_0}{\p t^2} - 
\left( \frac{\p \phi}{\p t} \right)^2 \cdot
\left[
\frac{\p^2}{\p v^2} \log \theta_{00}(v, \tau) 
\right]_{v = \frac{\phi+\rho}{\hbar}} + O(\hbar), 
\label{eq:hbar-expansion-of-q}
\\
p(t,\nu,\rho;\hbar) 
& := \hbar \frac{\p}{\p t} q(t,\nu,\rho;\hbar)
= \sum_{m = 0}^\infty \hbar^m p_m(t,\hbar) \notag \\
& = - 
\left( \frac{\p \phi}{\p t} \right)^3 \cdot 
\left[ 
\frac{\p^3}{\p v^3} \log \theta_{00}(v, \tau) 
\right]_{v = \frac{\phi+\rho}{\hbar}}  + O(\hbar).
\label{eq:hbar-expansion-of-p}
\end{align}
Note that the symbol $O(\hbar)$ does not mean the usual 
Landau's notation. We are working 
with formal power series, and the symbol just means 
the higher order terms in a formal power series of $\hbar$.
The $\hbar$-dependence of the coefficients 
only comes from the substitution $v \mapsto (\phi(t)+\rho)/\hbar$ 
into arguments of $\theta$-functions 
(c.f., \eqref{eq:expression-of-coefficients-theta}). 

As we will see below, these notations 
are consistent with those given in \S \ref{section:Painleve-I}.

\begin{lem} \label{lem:leading-terms-relation}
The leading terms 
\begin{align}
H_0 & = \frac{\p F_0}{\p t}, \\
q_0 & = - \frac{\p^2 F_0}{\p t^2} - 
\left( \frac{\p \phi}{\p t} \right)^2 \cdot
\left[\frac{\p^2}{\p v^2} \log \theta_{00}(v, \tau) 
\right]_{v = \frac{\phi+\rho}{\hbar}}, 
\label{eq:q0-leading-term}\\
p_0 & = - 
\left( \frac{\p \phi}{\p t} \right)^3 \cdot
\left[\frac{\p^3}{\p v^3} \log \theta_{00}(v, \tau) 
\right]_{v = \frac{\phi+\rho}{\hbar}} 
\end{align}
of the formal series $H$, $q$, $p$ given in 
\eqref{eq:hbar-expansion-of-H}--\eqref{eq:hbar-expansion-of-p}
satisfy the algebraic relation
\begin{equation} \label{eq:Hamiltonian-relation-leading}
H_0 = \frac{p_0^2}{2} - 2 q_0^3 - t q_0.
\end{equation}
\end{lem}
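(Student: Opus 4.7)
The strategy is to recognize that the lemma's algebraic identity is simply the defining equation $p_0^2 = 4q_0^3 + 2tq_0 + u$ of the spectral curve \eqref{eq:spcurve} in disguise: once we know that $H_0 = u/2$, proving \eqref{eq:Hamiltonian-relation-leading} is equivalent to showing that the pair $(q_0, p_0)$ lies on the spectral curve. I will in fact show more, namely that $(q_0,p_0) = (\wp(z^\ast),\wp'(z^\ast))$ for a specific $z^\ast \in \bC/\Lambda$ determined by $(\phi+\rho)/\hbar$ and a half-period shift. The identity then follows from the standard relation $(\wp')^2 = 4\wp^3 - g_2 \wp - g_3$ with $g_2 = -2t$, $g_3 = -u$.

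The computational preparation is short. First, $H_0 = \p_t F_0 = u/2$ by the first equation of \eqref{eq:dF0-dt}, so the target identity \eqref{eq:Hamiltonian-relation-leading} is equivalent to $p_0^2 = 4q_0^3 + 2tq_0 + u$. Next, since $\phi = (2\pi i)^{-1}\p_\nu F_0 = (2\pi i)^{-1}\oint_B \sqrt{4x^3+2tx+u}\,dx$, differentiating under the integral and applying \eqref{eq:PDE-u} (for $\p_t u = 2\eta_A/\omega_A$) together with the Legendre relation $\omega_B\eta_A - \omega_A\eta_B = 2\pi i$ (equivalent to \eqref{eq:normalization-Bergman-kernel}) yields
\begin{equation}
\frac{\p\phi}{\p t} = \frac{1}{\omega_A}, \qquad \frac{\p^2 F_0}{\p t^2} = \frac{1}{2}\frac{\p u}{\p t} = \frac{\eta_A}{\omega_A}.
\end{equation}

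The heart of the proof is the conversion of the $\theta_{00}$ log-derivatives to Weierstrass functions. Using the classical relation
\begin{equation}
\sigma(z) = \frac{\omega_A}{\theta_{11}'(0,\tau)}\, \theta_{11}\!\left(\tfrac{z}{\omega_A},\tau\right) \exp\!\left(\frac{\eta_A z^2}{2\omega_A}\right)
\end{equation}
and $\wp(z) = -\p_z^2 \log\sigma(z)$, $\wp'(z) = -\p_z^3 \log\sigma(z)$, one obtains
\begin{equation}
\p_v^2 \log\theta_{11}(v,\tau)\bigl|_{v=z/\omega_A} = -\omega_A^2 \wp(z) - \eta_A\omega_A, \qquad \p_v^3 \log\theta_{11}(v,\tau)\bigl|_{v=z/\omega_A} = -\omega_A^3 \wp'(z).
\end{equation}
Since $\theta_{00}(v,\tau) = \theta_{11}\!\left(v + \tfrac{1+\tau}{2},\tau\right) \cdot e^{L(v)}$ with $L(v)$ affine-linear in $v$ (a direct consequence of the quasi-periodicity of $\theta_{11}$), the second and third $v$-derivatives of $\log\theta_{00}$ coincide with those of $\log\theta_{11}(v+\tfrac{1+\tau}{2},\tau)$. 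Setting $z^\ast := \omega_A(\phi+\rho)/\hbar + (\omega_A+\omega_B)/2$, substituting into \eqref{eq:q0-leading-term} and the formula for $p_0$, and using $\p_t\phi = 1/\omega_A$ and $\p_t^2 F_0 = \eta_A/\omega_A$, the constants $\eta_A/\omega_A$ cancel and we obtain
\begin{equation}
q_0 = \wp(z^\ast), \qquad p_0 = \wp'(z^\ast).
\end{equation}
The relation $(\wp'(z^\ast))^2 = 4\wp(z^\ast)^3 + 2t\wp(z^\ast) + u$ gives $p_0^2 = 4q_0^3 + 2tq_0 + 2H_0$, which is \eqref{eq:Hamiltonian-relation-leading}. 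The only delicate step is tracking conventions for the $\theta_{00}$–$\theta_{11}$ shift and confirming that the half-period $(\omega_A+\omega_B)/2$ is indeed the correct location of the zero of $\theta_{00}$ under the paper's conventions (Appendix \ref{appendix:Weierstrass}); once this is pinned down, the computation is purely mechanical.
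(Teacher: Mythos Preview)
Your proposal is correct and follows essentially the same approach as the paper: both compute $\p_t\phi = 1/\omega_A$ and $\p_t^2 F_0 = \eta_A/\omega_A$, convert the $\theta_{00}$ log-derivatives to Weierstrass functions via the $\sigma$--$\theta_{11}$ relation together with the half-period shift $\theta_{00}(v,\tau)\propto\theta_{11}(v+\tfrac{1+\tau}{2},\tau)$, identify $(q_0,p_0)$ with $(\wp(z^\ast),\wp'(z^\ast))$ at $z^\ast=\omega_A(\phi+\rho)/\hbar+(\omega_A+\omega_B)/2$, and then invoke the Weierstrass equation for the spectral curve to conclude $H_0=u/2=p_0^2/2-2q_0^3-tq_0$.
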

\begin{proof}
The Weierstarss functions and $\theta$-functions 
with characteristics are related as 
\begin{align} 
\wp(z) & = - \frac{d^2}{dz^2} \log \sigma(z) = 
- \frac{\eta_A}{\omega_A} - \frac{1}{\omega_A^2} \cdot
\frac{\theta_{11}'' \bigl( \frac{z}{\omega_A}, \tau \bigr) \cdot
\theta_{11} \bigl( \frac{z}{\omega_A}, \tau \bigr)
-\theta_{11}' \bigl( \frac{z}{\omega_A}, \tau \bigr)^2}
{\theta_{11} \bigl( \frac{z}{\omega_A}, \tau \bigr)^2} 
\end{align}
\begin{equation}
\theta_{11}\Bigl( v + \frac{1}{2} 
+ \frac{\tau}{2}, \tau \Bigr)  =
- e^{- \pi i v - \frac{\pi i \tau}{4}} \cdot
\theta_{00} (v,\tau).
\end{equation}
(See Appendix \ref{appendix:Weierstrass}.)
Using the relations and 
\begin{equation}
\frac{\p \phi}{\p t} = \frac{1}{2 \pi i} \cdot \frac{\p^2}{\p t \p \nu} F_0 
= \frac{1}{4 \pi i} \cdot \frac{\p u}{\p \nu} = \frac{1}{\omega_A}
\end{equation}
\begin{equation}
\frac{\p^2 F_0}{\p t^2} = \frac{1}{2} \frac{\p u}{\p t} = \frac{\eta_A}{\omega_A}
\end{equation}
(c.f., \eqref{eq:PDE-u} and \eqref{eq:def-phi}), 
we can verify that 
\begin{equation} \label{eq:leading-elliptic-behavior}
- \frac{\p^2 F_0}{\p t^2} - 
\left( \frac{\p \phi}{\p t} \right)^2 \cdot
\frac{\p^2}{\p v^2} \log \theta_{00}(v, \tau) 
= \wp\left( \omega_A \cdot v  + \frac{\omega_A + \omega_B}{2}\right)
\end{equation}
and 
\begin{equation}
\left( \frac{\p \phi}{\p t} \right)^3 \cdot
\frac{\p^3}{\p v^3} \log \theta_{00}(v, \tau) 
= \wp' \left( \omega_A \cdot v  + \frac{\omega_A + \omega_B}{2}\right).
\end{equation}
Since $\wp$ and $\wp'$ are introduced to parametrize the 
elliptic curve \eqref{eq:spcurve}, we can verify 
\begin{equation}
\frac{p_0^2}{2} - 2q_0^3 - t q_0 = \frac{1}{2} 
\left[ \wp'(z)^2 - 4 \wp^3(z) -  2 t \wp(z) 
\right]_{z=\frac{\phi+\rho}{\hbar} \,\cdot\, \omega_A 
+ \frac{\omega_A + \omega_B}{2}}
= \frac{u}{2}.
\end{equation}
Then, the desired equality \eqref{eq:Hamiltonian-relation-leading} 
follows from the formula \eqref{eq:dF0-dt}. 
\end{proof}

It also follows from the above proof and \eqref{eq:Weierstrass-equation-2} that 
\begin{equation} \label{eq:Painleve1-leading}
\left[ \hbar \frac{\p p}{\p t} \right]_{\hbar^0}  
= \left[ \wp'' \Bigl( \omega_A \cdot v  + \frac{\omega_A + \omega_B}{2}\Bigr)
\right]_{v = \frac{\phi+\rho}{\hbar}}
= \left[6\wp\Bigl( \omega_A \cdot v  + \frac{\omega_A + \omega_B}{2}\Bigr)^2 
+ t \right]_{v = \frac{\phi+\rho}{\hbar}}
= \left[ 6q^2 + t \right]_{\hbar^0},
\end{equation}
where the symbol $[\bullet]_{\hbar^m}$ means the coefficient 
of $\hbar^m$ of a given formal power series $\bullet$.

The relations \eqref{eq:Hamiltonian-relation-leading} 
and \eqref{eq:Painleve1-leading}
are the leading part of the following equalities, 
which are our first main theorem. 

\begin{thm}\label{thm:tau-function-theorem}
The formal series $H$, $q$, $p$ given in 
\eqref{eq:hbar-expansion-of-H}--\eqref{eq:hbar-expansion-of-p}
satisfy
\begin{equation} \label{eq:Hamiltonian-realtion}
H = \frac{p^2}{2} - 2q^3 - t q
\end{equation}
and 
\begin{equation} \label{eq:Paineve-1-equation}
\hbar \frac{\p p}{\p t} = 6q^2 + t.
\end{equation}
In other words, $\tau_{P_{\rm I}}$ is a 2-parameter family of 
the formal solution of \eqref{eq:Hirota-Painleve-1}, 
and $q$ is the corresponding 2-parameter formal solution of $\PI$.
\end{thm}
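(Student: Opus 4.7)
The plan is to reduce Theorem \ref{thm:tau-function-theorem} to the companion statement (Theorem \ref{thm:wave-function-theorem}) that the formal series $\Psi$ of \eqref{eq:main-2-intro} is a formal solution of the isomonodromy pair $(L_{\rm I})$--$(D_{\rm I})$. Once this is established with $q,p,H$ taken as in \eqref{eq:hbar-expansion-of-H}--\eqref{eq:hbar-expansion-of-p}, the compatibility of $(L_{\rm I})$ and $(D_{\rm I})$ in the matrix form of Remark \ref{eq:matrix-form} automatically forces the Painlev\'e Hamiltonian system, which is exactly \eqref{eq:Hamiltonian-realtion} and \eqref{eq:Paineve-1-equation}. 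The $\hbar^{0}$ base case is already in place from Lemma \ref{lem:leading-terms-relation} and \eqref{eq:Painleve1-leading} via the Weierstrass parametrization of the spectral curve.

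The first step toward Theorem \ref{thm:wave-function-theorem} is to derive a ``parent'' PDE for $\Psi$ from Theorem \ref{thm:BPZ-type-equation}. Shifting $\nu\mapsto \nu+k\hbar$ in the BPZ-type equation for $\psi$, multiplying by $Z(t,\nu+k\hbar;\hbar)$, and using $\hbar^{2}\,\partial_{t}F(\nu+k\hbar)\cdot Z(\nu+k\hbar)=\hbar\cdot\hbar\partial_{t}Z(\nu+k\hbar)$ to absorb the anomalous term, one obtains
\begin{equation}
\bigl[\hbar^{2}\partial_{x}^{2}-2\hbar^{2}\partial_{t}-(4x^{3}+2tx)\bigr]\bigl(Z(\nu+k\hbar)\,\psi(\nu+k\hbar)\bigr)=0.
\end{equation}
Multiplying by $e^{2\pi i k\rho/\hbar}$, summing over $k\in\mathbb{Z}$, and dividing by $\tau_{P_{\rm I}}$ recasts the sum, via the definition $H=\hbar^{2}\partial_{t}\log\tau_{P_{\rm I}}$, into
\begin{equation}
\bigl[\hbar^{2}\partial_{x}^{2}-2\hbar^{2}\partial_{t}-(4x^{3}+2tx+2H)\bigr]\Psi=0.
\end{equation}
Substituting $(D_{\rm I})$ into this parent equation reproduces $(L_{\rm I})$ verbatim, so only $(D_{\rm I})$ remains.

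For $(D_{\rm I})$, the key input is the formal monodromy computation of Theorem \ref{thm:formal-monodromy-psi}: the discrete Fourier transform precisely cancels the $\nu$-shift monodromy of $\psi$ along the $B$-cycle and the $A$-cycle phase, so $\Psi$ is a well-defined, single-valued formal series on a punctured neighborhood of $x=q$. Both sides of $(D_{\rm I})$ must therefore be regular at $x=q$, which pins down the apparent-singularity location $q$ by the vanishing of $(\hbar\partial_{x}-p)\Psi$ there, and pins down $p$ by the residue normalization. A direct expansion of $\Psi$ in its theta-function coefficients \eqref{eq:expression-of-coefficients-theta}, together with the topological recursion variation formulas (Propositions \ref{prop:variation}--\ref{prop:variation-nu}), matches this $(q,p)$ with the tau-function definitions \eqref{eq:hbar-expansion-of-q}--\eqref{eq:hbar-expansion-of-p} order by order in $\hbar$.

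The main obstacle I expect is the anomalous $t$-differential structure highlighted in Remark \ref{rem:t-differential-structure}: because the coefficients in the $\hbar$-expansion of $\tau_{P_{\rm I}}$ depend on $t$ both explicitly and through the substitution $v=(\phi+\rho)/\hbar$, term-by-term differentiation does not preserve the $\hbar$-grading. Tracking these cross-$\hbar$-order contributions carefully — in particular when identifying the Lax-pair coefficients with the formal series $q,p,H$ prescribed by \eqref{eq:hbar-expansion-of-H}--\eqref{eq:hbar-expansion-of-p} — is the central technical subtlety. Once this bookkeeping is under control, the algebraic relation \eqref{eq:Hamiltonian-realtion} and the Painlev\'e I equation \eqref{eq:Paineve-1-equation} follow at once from Lax compatibility.
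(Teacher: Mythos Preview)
Your derivation of the ``parent'' PDE
\[
\bigl[\hbar^{2}\partial_{x}^{2}-2\hbar^{2}\partial_{t}-(4x^{3}+2tx+2H)\bigr]\Psi=0
\]
is correct and is exactly what the paper does in \eqref{eq:partial-Lax-1}. But from that point on your argument has a genuine gap.

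You propose to establish $(D_{\rm I})$ by arguing that the formal monodromy of $\Psi$ is trivial, so $\Psi$ is ``single-valued near $x=q$'', and then to pin down $q,p$ by regularity at $x=q$ plus an order-by-order theta-function expansion. This does not work as stated. First, triviality of the formal monodromy along $A$- and $B$-cycles tells you nothing special about the point $x=q$; the nontrivial content of the monodromy cancellation is at the \emph{branch points} $e_{1},e_{2},e_{3}$, where the coefficients of $\psi_{\pm}$ genuinely have singularities. Second, your ``regularity at $x=q$'' step is circular: $x=q$ is defined as the zero of the leading Wronskian (equivalently as $-\hbar^{2}\partial_{t}^{2}\log\tau_{P_{\rm I}}$), and the connection coefficients $Q,R$ of the deformation equation actually \emph{do} have a pole there---that is precisely the apparent singularity of $(L_{\rm I})$. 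You cannot recover the specific form $R=\tfrac{1}{2(x-q)}$, $Q=-\tfrac{p}{2(x-q)}$ from regularity of $\Psi$ alone. Third, ``match order by order'' is the step the whole machinery is designed to avoid; it is not a proof.

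The paper's route is different and supplies exactly the missing idea. One defines $Q,R$ by the Wronski-matrix relation \eqref{eq:def-Q-and-R}, observes (from Theorem~\ref{thm:formal-monodromy-Psi}) that $Q,R$ are \emph{rational} in $x$, and then proves (Proposition~\ref{prop:no-pole-theorem}) that they have no poles at the branch points; this uses the compatibility relations \eqref{eq:compatibility-AB-1}--\eqref{eq:compatibility-AB-2} of the pre-Lax system and a pole-order contradiction. From \eqref{eq:log-der-of-x-Wronskian} this forces the Wronskian ${\rm Wr}^{x}[\Psi_{+},\Psi_{-}]$ to be a \emph{polynomial} in $x$ (Proposition~\ref{prop:polynomiality-of-Wronskian}). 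The asymptotic expansion \eqref{eq:behavior-psi-pm} at $x=\infty$ then gives
\[
{\rm Wr}^{x}[\Psi_{+},\Psi_{-}]=4x-4q-\tfrac{1}{3}\bigl(\hbar\,\partial_{t}p-(6q^{2}+t)\bigr)x^{-1}+\cdots,
\]
and polynomiality kills every negative-power coefficient. The vanishing of the $x^{-1}$ and $x^{-2}$ coefficients is \emph{exactly} \eqref{eq:Paineve-1-equation} and \eqref{eq:Hamiltonian-realtion}. So Theorem~\ref{thm:tau-function-theorem} is not deduced from Lax compatibility after the fact; it and Theorem~\ref{thm:wave-function-theorem} fall out simultaneously from the Wronskian asymptotics, with $Q,R$ then read off as in \eqref{eq:formula-R}--\eqref{eq:formula-Q}. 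The step you are missing is the polynomiality of the Wronskian and the branch-point holomorphicity argument that underlies it.
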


This statement is in fact equivalent to 
our second main theorem, which will be formulated in next subsection.
The proof of these theorems will be given later. 

\begin{rem} \label{rem:Boutroux-asymptotics}
It follows from \eqref{eq:leading-elliptic-behavior} that 
the leading term of the 2-parameter formal solution $q$ 
of Painlev\'e I is given by 
\begin{equation} \label{eq:leading-painleve-explicit}
q_0 = \wp \left( \frac{\phi + \rho}{\hbar} \cdot \omega_A + 
\frac{\omega_A + \omega_B}{2} \right) 
= \wp \left( \frac{4t}{5 \hbar} 
+ \Bigl( \frac{\rho}{\hbar} + \frac{1}{2} \Bigr)  \cdot \omega_A
+ \Bigl( \frac{\nu}{\hbar} + \frac{1}{2} \Bigr)  \cdot \omega_B
\right).
\end{equation}
Here we used an equality 
\begin{equation}
\omega_A \cdot \phi - \omega_B \cdot \nu = 
2 \cdot \Res_{x=\infty} z(x) \cdot \sqrt{4x^3+2t x + u(t,\nu)} \, dx 
= \frac{4t}{5}
\end{equation}
which can be proved by a similarly to a proof of 
the Riemann bilinear identity \eqref{eq:R-B-Id}. 

On the other hand, if we rescale the variables by 
\begin{equation}
(x,y,u) = (|t|^{\frac{1}{2}} \tilde{x}, |t|^{\frac{3}{4}} \tilde{y}, 
|t|^{\frac{3}{2}} \tilde{u}),
\end{equation}
the elliptic curve \eqref{eq:spcurve} becomes 
\begin{equation} \label{eq:rescaled-spcurve}
\tilde{y}^2 = 4 \tilde{x}^3 + 2 e^{i \varphi} \tilde{x} + \tilde{u},
\end{equation}
where $\varphi = \arg t$. 
Then, the right hand-side of \eqref{eq:leading-painleve-explicit}
can be written in terms of the Weierstrass $\wp$-function 
$\tilde{\wp}(\tilde{z})$ which parametrizes \eqref{eq:rescaled-spcurve}:
\begin{equation} \label{eq:Boutroux-asymptotics}
q_0 = |t|^{\frac{1}{2}} \cdot \tilde{\wp}\Bigl( 
\frac{4}{5 \hbar}\cdot |t|^{\frac{5}{4}} \cdot e^{i \varphi} 
+ \Bigl( \frac{\rho}{\hbar} + \frac{1}{2} \Bigr)  \cdot \tilde{\omega}_A
+ \Bigl( \frac{\nu}{\hbar} + \frac{1}{2} \Bigr)  \cdot \tilde{\omega}_B \Bigr).
\end{equation}
Here, 
\begin{equation}
\tilde{\omega}_\ast = |t|^{\frac{1}{4}} \cdot \omega_\ast \qquad
(\ast \in \{A, B \})
\end{equation}
is the $A$ and $B$-periods of $d\tilde{x}/\tilde{y}$ 
of the rescaled elliptic curve \eqref{eq:rescaled-spcurve}.
At the level of formal series computation, the expression 
\eqref{eq:Boutroux-asymptotics} is consistent with the
``elliptic aymptotic", which was first discovered 
by Boutroux \cite{Boutroux}, and studied by Its, Kitaev, Kapaev et.al. 
(see \cite{FIKN, Kap, Kap-Kit} etc.).
\end{rem}

\subsubsection{The formal solution of the isomonodromy system}

Let us introduce another formal series
\begin{equation} \label{eq:wave-function}
\Psi_{\pm}(x,t,\nu, \rho; \hbar) := 
\frac{\sum_{k \in \bZ} e^{{2 \pi i k \rho}/{\hbar}} \cdot
Z(t,\nu+k \hbar; \hbar) \cdot \psi_{\pm}(x,t,\nu + k \hbar; \hbar)}
{\sum_{k \in \bZ} e^{{2 \pi i k \rho}/{\hbar}} \cdot
Z(t,\nu+k \hbar; \hbar)}.
\end{equation}
By a similar argument given in the previous subsection, 
we can verify that $\Psi_{\pm}$ has a formal power series expansion of $\hbar$ 
in the following form:
\begin{equation} \label{eq:series-expansion-wave-function}
\Psi_{\pm} = 
\psi_{\pm}(\nu) \cdot \frac{
\sum_{m = 0}^{\infty} \hbar^{m} \, \Xi_{\pm, m}(x,t,\hbar)}
{\sum_{m = 0}^{\infty} \hbar^{m} \, \Theta_{m}(t,\hbar)},
\end{equation}
with certain functions $\Xi_{\pm, m}(x,t,\hbar)$ which contains 
$\theta$-functions and their derivatives. 
First few terms are given by 
\begin{align}
\Xi_{\pm, 0} & = \sum_{k \in {\mathbb Z}} 
\exp\left( \frac{2\pi i k(\phi+\rho)}{\hbar} \pm 2 \pi i k \frac{z(x)}{\omega_A}
+ \pi i k^2 \tau \right) 
= \theta_{00}\Big( \frac{\phi+\rho}{\hbar} \pm \frac{z(x)}{\omega_A}, \tau \Bigr),  
\\
{\Xi}_{\pm, 1} & = \sum_{k \in {\mathbb Z}} 
\exp\left( \frac{2\pi i k(\phi+\rho)}{\hbar} \pm 2 \pi i k \frac{z(x)}{\omega_A}
+ \pi i k^2 \tau \right) 
\cdot \left(  
\frac{k^3}{3!} \frac{\p^3 F_0}{\p \nu^3} 
+ k \frac{\p F_1}{\p \nu}
\pm  \frac{k^2}{2!} \frac{\p^2 S_{-1}}{\p \nu^2} 
+ k \frac{\p S_0}{\p \nu}  \right) 
\notag \\
& = 
\frac{\p_\nu^3 F_0}{(2\pi i)^3 3!} \cdot 
\frac{\p^3 \theta_{00}}{\p v^3}\Big(\frac{\phi+\rho}{\hbar} \pm \frac{z(x)}{\omega_A}, \tau \Bigr)
+ \frac{\p_\nu F_1}{2\pi i} \cdot 
\frac{\p \theta_{00}}{\p v}\Big(\frac{\phi+\rho}{\hbar} \pm \frac{z(x)}{\omega_A}, \tau \Bigr)
\notag \\
& \qquad
\pm \frac{\p^2_\nu S_{-1}}{(2\pi i)^2 2!} \cdot 
\frac{\p^2 \theta_{00}}{\p v^2}
\Big(\frac{\phi+\rho}{\hbar} \pm \frac{z(x)}{\omega_A}, \tau \Bigr)
+ \frac{\p_\nu S_0}{2\pi i} \cdot 
\frac{\p \theta_{00}}{\p v}\Big(\frac{\phi+\rho}{\hbar} 
\pm \frac{z(x)}{\omega_A}, \tau \Bigr).
\end{align}
It is obvious that the coefficients ${\Xi}_{\pm, m}$ 
for higher order powers of $\hbar$ are expressed as 
\begin{equation} \label{eq:hbar-dependence-of-Xi}
{\Xi}_{\pm, m}(x,t,\hbar) = \tilde{\Xi}_{\pm, m}(x,t,w) 
\bigl|_{w=\frac{\phi+\rho}{\hbar} \pm \frac{z(x)}{\omega_A}}
\end{equation}
with a certain function $\tilde{\Xi}_{\pm, m}(x,t,w)$ which is 
also written by a finite sum of 
$\theta$-function $\theta_{00}(w,\tau)$ and their derivatives. 

Here we remark that, unlike the $t$-derivative, 
$x$-derivative preserves the $\hbar$-grading since 
the coefficient of $1/\hbar$ in the argument of the $\theta$-functions
does not depend on $x$.

Our second main result is formulated as follows. 
\begin{thm}\label{thm:wave-function-theorem}
The formal series $\Psi_{\pm}$ satisfies 
the isomonodrmy system $(L_{\rm I})$--$(D_{\rm I})$ 
associated with $\PI$ given in \eqref{eq:JM-1}--\eqref{eq:JM-2}. 
Here $H$, $q$ and $p$ in \eqref{eq:JM-1}--\eqref{eq:JM-2}  
are given by \eqref{eq:hbar-expansion-of-H}, 
\eqref{eq:hbar-expansion-of-q}
and \eqref{eq:hbar-expansion-of-p}, respectively.
\end{thm}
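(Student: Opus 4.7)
The plan is to reduce $(L_{\rm I})$ to a combination of two ingredients: the BPZ-type PDE of Theorem \ref{thm:BPZ-type-equation} combined with the discrete Fourier transform (which yields an auxiliary $x$-direction equation after absorbing the $\partial_t F$-source into $\hbar^2 \partial_t$), and the deformation equation $(D_{\rm I})$ (which trades $-2\hbar^2 \partial_t$ for $-\tfrac{\hbar}{x-q}(\hbar \partial_x - p)$). These are the only two pieces needed, and the second is the harder one.

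For the first ingredient, I would apply Theorem \ref{thm:BPZ-type-equation} to each shifted wave function $\psi_\pm(x,t,\nu+k\hbar;\hbar)$, and multiply through by $Z(t,\nu+k\hbar;\hbar)$. Using $\partial_t Z(\nu+k\hbar) = (\partial_t F(\nu+k\hbar))\, Z(\nu+k\hbar)$, the source term $2\hbar^2 \partial_t F\,\psi_\pm$ recombines with $-2\hbar^2 Z\, \partial_t \psi_\pm$ into $-2\hbar^2 \partial_t (Z \psi_\pm)$, yielding
\begin{equation*}
\bigl[\hbar^2 \partial_x^2 - 2\hbar^2 \partial_t - (4x^3 + 2tx)\bigr] \bigl(Z(\nu+k\hbar)\, \psi_\pm(\nu+k\hbar)\bigr) = 0.
\end{equation*}
Multiplying by $e^{2\pi i k \rho/\hbar}$ and summing over $k \in \mathbb{Z}$, since the operator commutes with the sum, gives the same equation for $T_\pm := \tau_{P_{\rm I}} \Psi_\pm$. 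Dividing by $\tau_{P_{\rm I}}$ and using the identification $\hbar^2 \partial_t \log \tau_{P_{\rm I}} = H$ from \eqref{eq:hbar-expansion-of-H} produces the intermediate equation
\begin{equation*}
\bigl[\hbar^2 \partial_x^2 - 2\hbar^2 \partial_t - (4x^3 + 2tx + 2H)\bigr] \Psi_\pm = 0.
\end{equation*}

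For the second ingredient, I need to establish $(D_{\rm I})$, namely $2(x-q)\,\hbar \partial_t \Psi_\pm = \hbar \partial_x \Psi_\pm - p \Psi_\pm$. Granting $(D_{\rm I})$, substituting $-2\hbar^2 \partial_t \Psi_\pm = -\tfrac{\hbar}{x-q}(\hbar \partial_x - p)\Psi_\pm$ into the intermediate equation above yields $(L_{\rm I})$ directly. I would prove $(D_{\rm I})$ order by order in $\hbar$ using the explicit $\theta$-function expansion \eqref{eq:series-expansion-wave-function}. At leading order $\hbar^0$, the identity reduces, via the explicit formula \eqref{eq:leading-painleve-explicit} for $q_0$, the analogous formula for $p_0$ in \eqref{eq:hbar-expansion-of-p}, and the Weierstrass parametrization $(x,y) = (\wp(z), \wp'(z))$, to a classical identity among $\wp$, $\wp'$, $\zeta$ and the logarithmic derivatives of $\theta_{00}$, in the same spirit as Lemma \ref{lem:leading-terms-relation}. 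For higher orders, I would combine the differentiation formulas of Propositions \ref{prop:variation} and \ref{prop:variation-nu} with the $\hbar$-shift structure of the Fourier sum: $\nu$-derivatives become $B$-cycle integrals of $W_{g,n+1}$, and after the Fourier transform these become $\nu$-shifts already encoded in the definition of $\Psi_\pm$. This matching organizes the order-by-order cancellations.

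The main obstacle is the coefficient-by-coefficient matching of $(D_{\rm I})$ to all orders. The delicate point, as noted in Remark \ref{rem:t-differential-structure}, is that $\partial_t$ does not preserve the $\hbar$-grading on the Fourier-transformed series: an extra chain-rule term arising from differentiation of $v = (\phi+\rho)/\hbar$ raises the $\hbar$-order by $-1$; by contrast, $\partial_x$ preserves the grading, since $x$ enters only through the grade-zero argument $z(x)/\omega_A$ in \eqref{eq:hbar-dependence-of-Xi}. The two sides of $(D_{\rm I})$ therefore have structurally different $\hbar$-expansions, and making them match requires careful use of the variation formulas together with identities such as $\partial_t F_0 = u/2$ and $\partial_\nu F_0 = \oint_B y\,dx$. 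This bookkeeping is the principal computational difficulty of the proof.
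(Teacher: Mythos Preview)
Your first ingredient is exactly right and matches the paper: applying Theorem~\ref{thm:BPZ-type-equation} to each $\psi_\pm(\nu+k\hbar)$, multiplying by $Z(\nu+k\hbar)$, summing, and dividing by $\tau_{P_{\rm I}}$ gives the intermediate equation $[\hbar^2\partial_x^2 - 2\hbar^2\partial_t - (4x^3+2tx+2H)]\Psi_\pm = 0$. This is precisely \eqref{eq:partial-Lax-1}.

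The gap is in the second ingredient. You propose to verify $(D_{\rm I})$ by a direct order-by-order computation, but you do not carry it out beyond leading order, and you yourself flag the bookkeeping as ``the principal computational difficulty'' without resolving it. There is no inductive mechanism in your sketch that closes the argument to all orders: the variation formulas and the $\nu$-shift structure do relate various terms, but you have not explained why the resulting identities must hold at every $\hbar^m$. As written, this is a plan rather than a proof.

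The paper avoids this computation entirely by an indirect structural argument. It \emph{defines} $Q$ and $R$ by the Wronskian ratio \eqref{eq:def-Q-and-R}, so that $\Psi_\pm$ tautologically satisfy a system of the shape \eqref{eq:JM-1-pre}--\eqref{eq:JM-2-pre}. The crucial input is then the formal monodromy of $\Psi_\pm$ along $A$- and $B$-cycles (Theorem~\ref{thm:formal-monodromy-Psi}): since both Wronski matrices in \eqref{eq:def-Q-and-R} transform identically, $Q$ and $R$ are single-valued, hence rational in $x$. The compatibility of the system then forces $Q$ and $R$ to be holomorphic at the branch points (Proposition~\ref{prop:no-pole-theorem}), so the Wronskians themselves are polynomial in $x$ (Proposition~\ref{prop:polynomiality-of-Wronskian}). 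Finally, the explicit asymptotics \eqref{eq:behavior-psi-pm} at $x=\infty$ pin down ${\rm Wr}^x = 4(x-q)$ and ${\rm Wr}^t = 2$, which immediately give $R = \tfrac{1}{2(x-q)}$ and $Q = -\tfrac{p}{2(x-q)}$. This identifies \eqref{eq:JM-2-pre} with $(D_{\rm I})$ and, simultaneously, forces the Painlev\'e relations of Theorem~\ref{thm:tau-function-theorem} from the vanishing of the negative-power coefficients in the Wronskian. The key idea you are missing is this use of formal monodromy to get rationality, which replaces an infinite tower of $\theta$-identities by a finite asymptotic matching.
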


The rest of the section is devoted to a proof of the statement. 

\subsection{Proof of main results}

Theorem \ref{thm:tau-function-theorem} 
and Theorem \ref{thm:wave-function-theorem} 
will be proved after several steps.

\subsubsection{Formal monodromy relation}

\begin{thm} \label{thm:formal-monodromy-Psi}
The formal series $\Psi_{\pm}(x,t,\nu,\rho;\hbar)$ 
defined in \eqref{eq:wave-function}
has the following formal monodromy properties: 
\begin{itemize}
\item[{\rm (i)}]
The formal monodromy of $\Psi_{\pm}(x,t,\nu,\rho;\hbar)$ 
along the $A$-cycle is given by 
\begin{equation}
\Psi_{\pm}(x,t,\nu,\rho;\hbar) \mapsto 
e^{\pm {2\pi i \nu}/{\hbar}} \cdot 
\Psi_{\pm}(x,t,\nu,\rho;\hbar).
\end{equation}
\item[{\rm (ii)}]
The formal monodromy of $\Psi_{\pm}(x,t,\nu,\rho;\hbar)$ 
along the $B$-cycle is given by 
\begin{equation}
\Psi_{\pm}(x,t,\nu,\rho;\hbar) \mapsto 
e^{\mp {2\pi i \rho}/{\hbar}} \cdot 
\Psi_{\pm}(x,t,\nu,\rho;\hbar).
\end{equation}
\end{itemize}
\end{thm}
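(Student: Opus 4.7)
The plan is to reduce the statement directly to Theorem \ref{thm:formal-monodromy-psi} by using the structure of the discrete Fourier sum in \eqref{eq:wave-function}, with the key point being that the denominator is independent of $x$ (hence invariant under both $A$- and $B$-cycle monodromies) while the numerator transforms in a controlled way whose nonlinear dependence on $k$ can be absorbed by reindexing the sum.

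First I would handle the $A$-cycle. The denominator of $\Psi_{\pm}$ is the formal $\tau$-function $\tau_{P_{\rm I}}$ from \eqref{eq:tau-function}, which is an $x$-independent formal series; hence it is untouched by term-wise analytic continuation in $x$. In the numerator, Theorem \ref{thm:formal-monodromy-psi}(i) applied to $\psi_{\pm}(x,t,\nu+k\hbar;\hbar)$ multiplies that factor by $e^{\pm 2\pi i(\nu+k\hbar)/\hbar}$. Since $k \in \bZ$, we have $e^{\pm 2\pi i k} = 1$, so the $k$-dependence drops out and every term in the sum acquires the common factor $e^{\pm 2\pi i \nu/\hbar}$. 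Pulling this factor out of the numerator gives (i).

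The $B$-cycle case is the substantive one. Again the denominator is inert. For the numerator, Theorem \ref{thm:formal-monodromy-psi}(ii) sends
\[
Z(t,\nu+k\hbar;\hbar) \cdot \psi_{\pm}(x,t,\nu+k\hbar;\hbar)
\;\longmapsto\;
Z(t,\nu+k\hbar;\hbar) \cdot \frac{Z(t,\nu+(k\pm 1)\hbar;\hbar)}{Z(t,\nu+k\hbar;\hbar)}
\cdot \psi_{\pm}(x,t,\nu+(k\pm 1)\hbar;\hbar),
\]
and the $Z(t,\nu+k\hbar;\hbar)$ factors cancel miraculously, leaving $Z(t,\nu+(k\pm1)\hbar;\hbar)\,\psi_{\pm}(x,t,\nu+(k\pm 1)\hbar;\hbar)$. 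I would then shift the summation index $k \mapsto k \mp 1$ so that the $Z$ and $\psi_{\pm}$ factors match the original summand; the Fourier weight becomes $e^{2\pi i(k\mp 1)\rho/\hbar} = e^{\mp 2\pi i\rho/\hbar}\cdot e^{2\pi i k\rho/\hbar}$, and factoring out $e^{\mp 2\pi i\rho/\hbar}$ recovers the original numerator.

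Strictly speaking there is nothing hard here once Theorem \ref{thm:formal-monodromy-psi} is in hand; the only point that needs to be flagged is the legitimacy of the index shift and of interchanging the $\bZ$-sum with term-wise analytic continuation, but both are immediate in the formal setting because $\Psi_{\pm}$ was already rewritten as a genuine formal power series in $\hbar$ via \eqref{eq:series-expansion-wave-function}, and the monodromy operation acts coefficient-wise in that expansion. The real content, which is not something we prove again here but is worth emphasizing after the proof, is that the discrete Fourier transform in $\nu$ is precisely engineered so that the nontrivial $\nu$-shift appearing in the $B$-cycle formula for $\psi_{\pm}$ is converted into a pure phase $e^{\mp 2\pi i\rho/\hbar}$ in the dual variable $\rho$; this is the mechanism that makes $\Psi_{\pm}$ an honest (formal) section of a line bundle on the spectral curve with prescribed characters, and it is exactly what is needed to make $\Psi_{\pm}$ a formal solution of the isomonodromy system in Theorem \ref{thm:wave-function-theorem}.
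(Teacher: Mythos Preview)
Your proposal is correct and follows exactly the same approach as the paper's proof, which simply cites the periodicity $e^{\pm 2\pi i(\nu+k\hbar)/\hbar}=e^{\pm 2\pi i\nu/\hbar}$ for (i) and says (ii) follows directly from Theorem~\ref{thm:formal-monodromy-psi}(ii). You have merely unpacked the index shift and the cancellation of the $Z$-factors that the paper leaves implicit.
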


\begin{proof}
The claim (i) is a consequence of the periodicity 
$\exp\bigl(\pm \frac{2\pi i \, (\nu+ k \hbar)}{\hbar}\bigr) = 
\exp\bigl(\pm \frac{2\pi i \nu}{\hbar} \bigr)$ 
of the exponential function. The second statement (ii) 
follows directly from (ii) in Theorem \ref{thm:formal-monodromy-psi}.
\end{proof}

\subsubsection{The formal series $Q$ and $R$}

Let us conisder the formal series $Q$ and $R$ defined by
\begin{equation} \label{eq:def-Q-and-R}
\begin{pmatrix} 
\Psi_+ & \Psi_- \\ 
\hbar \p_t \Psi_+ & \hbar \p_t \Psi_-
\end{pmatrix} = 
\begin{pmatrix} 
1 & 0 \\ 
Q & R
\end{pmatrix}
\cdot
\begin{pmatrix} 
\Psi_+ & \Psi_- \\ 
\hbar \p_x \Psi_+ & \hbar \p_x \Psi_-
\end{pmatrix}.
\end{equation}
More specifically, these formal series are defined by 
\begin{equation}
Q := \frac{\hbar \frac{\p {\Psi}_{+}}{\p x} \cdot \hbar \frac{\p {\Psi}_{-}}{\p t} - 
\hbar \frac{\p {\Psi}_{-}}{\p x} \cdot \hbar \frac{\p {\Psi}_{+}}{\p t}}
{{\rm Wr}^{x}[\Psi_+, \Psi_-]}, \quad
R := \frac{{\rm Wr}^t[{\Psi}_{+}, {\Psi}_{-}]}{{\rm Wr}^x[{\Psi}_{+}, {\Psi}_{-}]}, 
\end{equation}
where 
\begin{equation} \label{eq:def-of-two-Wronskians}
{\rm Wr}^x[{\Psi}_{+}, {\Psi}_{-}] := 
\hbar \frac{\p {\Psi}_{+}}{\p x} \cdot {\Psi}_- - 
{\Psi}_+ \cdot \hbar \frac{\p {\Psi}_{-}}{\p x}, 
\quad  
{\rm Wr}^t[{\Psi}_{+}, {\Psi}_{-}] := 
\hbar \frac{\p {\Psi}_{+}}{\p t} \cdot {\Psi}_- - 
{\Psi}_+ \cdot \hbar \frac{\p {\Psi}_{-}}{\p t}  
\end{equation}
are the Wronskians. Although $\Psi_{\pm}$ contains terms of 
the form \eqref{eq:hbar-dependence-of-Xi}, the series 
expansion of the Wronskians are expressed in the following form:
\begin{equation}
{\rm Wr}^x[{\Psi}_{+}, {\Psi}_{-}] = 
\sum_{m=0}^{\infty} \hbar^m {\rm Wr}^x_{m}(x,t,v)\bigl|_{v=\frac{\phi}{\hbar}}, 
\quad
{\rm Wr}^t[{\Psi}_{+}, {\Psi}_{-}] = 
\sum_{m=0}^{\infty} \hbar^m {\rm Wr}^t_{m}(x,t,v)\bigl|_{v=\frac{\phi}{\hbar}}.
\end{equation}
This is because of the fact that 
$\p_v^k\theta_{00}(X+Y, \tau) \cdot
\p_v^\ell\theta_{00}(X-Y, \tau)$
of derivatives of the $\theta$-functions can be written as 
a differential polynomial of $\theta_{00}(X,\tau)$, $\theta_{00}(Y,\tau)$ 
$\theta_{11}(X,\tau)$ and $\theta_{11}(Y,\tau)$
thanks to 
\begin{equation}
\theta_{00}(X+Y, \tau) \cdot 
\theta_{00}(X-Y, \tau) \cdot 
\theta_{00}(0,\tau)^2 =
\theta_{00}(X,\tau)^2 \cdot 
\theta_{00}(Y, \tau)^2 + 
\theta_{11}(X,\tau)^2 \cdot
\theta_{11}(Y,\tau)^2
\end{equation}
among the $\theta$-functions.
Therefore, $Q$ and $R$ are also written as a formal power series
\begin{equation}
Q = \sum_{m = 0}^{\infty} \hbar^{m} Q_m(x,t,\hbar), \quad
R = \sum_{m = 0}^{\infty} \hbar^{m} R_m(x,t,\hbar),
\end{equation}
where $Q_m$ and $R_m$ are of the form 
\begin{equation}
Q_m(x,t,\hbar) = \tilde{Q}_m(x,t,v) \bigl|_{v=\frac{\phi}{\hbar}}, 
\quad
R_m(x,t,\hbar) = \tilde{R}_m(x,t,v) \bigl|_{v=\frac{\phi}{\hbar}}.
\end{equation}

An important consequence of Theorem \ref{thm:formal-monodromy-Psi} 
is that the coefficients of $Q$ and $R$ 
are rational in $x$ since the two Wronski matrices 
appeared in both sides of \eqref{eq:def-Q-and-R} 
have the same formal monodromy along any cycles on the spectral curve. 
(Essential singularities never appear since the exponential behaviours 
in the WKB series cancel after taking the Wronskians.)
Candidates of the poles of $Q_m$ and $R_m$ are $x=\infty$, 
the branch points $e_1, e_2, e_3$,  
and the zeros of the leading terms of the Wronskian. 
In the next subsection, we will see that $Q_m$ and $R_m$ 
are in fact holomorphic at the branch points. 

\subsubsection{Holomorphicity of $Q$ and $R$ at the branch points}

The main claim in this subsection is 

\begin{prop} \label{prop:no-pole-theorem}
The coefficients $Q_m$ and $R_m$ of $Q$ and $R$ have no poles at branch points.
\end{prop}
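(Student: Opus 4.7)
The main idea is to exploit the hyperelliptic involution $\sigma:(x,y)\mapsto(x,-y)$ of the spectral curve \eqref{eq:spcurve}, which in the $\wp$-parametrization is the Galois involution $z\mapsto -z$. Near any branch point $e_i$, this involution coincides with the local monodromy obtained by analytic continuation of $x$ around $e_i$. The plan is to show that $\Psi_+$ and $\Psi_-$ are interchanged under this local monodromy, so that the two Wronskians in the definition of $Q$ and $R$ are Galois-antisymmetric, and then to check that their leading orders at $e_i$ match, so that the ratios $Q$, $R$ are regular there.

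First I would verify, order by order in $\hbar$, that $\psi_+$ and $\psi_-$ are swapped by analytic continuation around $e_i$. At the level of integrands this is the statement $W_{0,1}(-z) = -W_{0,1}(z)$ together with the analogous Galois-parity of higher $W_{g,n}$ (with the usual correction for the $W_{0,2}$ diagonal counterterm built into \eqref{eq:WKBsol-intro}). Since the discrete Fourier transform in \eqref{eq:wave-function} acts only on $\nu$ and leaves the $z(x)$-dependence untouched, this exchange property is inherited: $\sigma_{e_i}\Psi_\pm = \Psi_\mp$. Consequently both Wronskians ${\rm Wr}^x[\Psi_+,\Psi_-]$ and ${\rm Wr}^t[\Psi_+,\Psi_-]$ are Galois-antisymmetric, and therefore at each $\hbar$-order they factor as $\sqrt{x-e_i}$ times a function holomorphic at $e_i$. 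On the other hand, $Q$ and $R$ themselves are Galois-invariant, because the matrix equation \eqref{eq:def-Q-and-R} transforms in the same way under $\sigma_{e_i}$ on both sides (each Wronski matrix simply swaps its columns). Hence each $Q_m$ and $R_m$ is single-valued in a punctured neighborhood of $e_i$, i.e.\ meromorphic there.

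It remains to check that the leading behaviors at $e_i$ of numerator and denominator in $Q,R$ match, so no genuine pole appears. At $\hbar^0$ this is a direct computation: the BPZ-type equation \eqref{eq:BPZ-equation} forces
\begin{equation*}
{\rm Wr}^x_0 \;\sim\; -2\,y(x)\,\Psi_{+,0}\Psi_{-,0},\qquad
{\rm Wr}^t_0 \;\sim\; \frac{y(x)}{x-q_0}\,\Psi_{+,0}\Psi_{-,0},
\end{equation*}
(up to a common nonzero constant), so the $\sqrt{x-e_i}$-factor coming from $y(x)$ cancels in the ratio and one recovers the expected $R_0=\tfrac{1}{2(x-q_0)}$, $Q_0=-\tfrac{p_0}{2(x-q_0)}$, which are regular at $e_i$ (generically $q_0\neq e_i$). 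For higher $\hbar$-orders I would proceed inductively, using the PDE \eqref{eq:BPZ-equation} to express the corrections $S_m$ in terms of derivatives and lower-order $S_\ell$, and then checking that the leading $\sqrt{x-e_i}$-order of the Wronskians is stable under these recursive corrections. The main obstacle is exactly this last bookkeeping: Galois-antisymmetry alone guarantees a factor $\sqrt{x-e_i}$ in both Wronskians at every order, but not that their nonvanishing coefficients line up to cancel; controlling this requires combining the explicit local structure of $W_{g,n}$ at branch points (which by the standard topological-recursion residue analysis have poles of bounded order there, of a form compatible with the ${y}$-factor) with the fact that $Q_0,R_0$ have no poles at $e_i$ to bootstrap the same property to every $Q_m,R_m$.
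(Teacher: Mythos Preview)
Your Galois-involution idea establishes that $Q$ and $R$ are single-valued (hence rational in $x$) near each branch point, and this is also the paper's starting point (stated just before Proposition~\ref{prop:no-pole-theorem}). But the next step --- ``the Wronskians are Galois-antisymmetric and therefore at each $\hbar$-order factor as $\sqrt{x-e_i}$ times a function \emph{holomorphic} at $e_i$'' --- does not follow: parity under the local involution constrains only the branching type, never the pole order. (There is also a smaller slip: the local monodromy around $e_i$ is $z(x)\mapsto 2r_i-z(x)$, not $z\mapsto -z$, and the attendant period shift means $\Psi_\pm$ are swapped only up to $x$-independent factors --- see Proposition~\ref{prop:detoured-continuation} --- so the Wronskians are in fact \emph{invariant}, not anti-invariant.) Since the stable $W_{g,n}$ genuinely have poles at the ramification points, the coefficients $S_m$, and hence the Wronskian coefficients, can a priori blow up at $e_i$; ruling this out is exactly the content of the Proposition. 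Your proposed remedy --- tracking ``the explicit local structure of $W_{g,n}$ at branch points'' order by order --- is not carried out, and would be a substantially harder calculation than the paper's.

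The paper's proof uses no information about $W_{g,n}$ near ramification points. Because $\Psi_\pm$ solves \eqref{eq:partial-Lax-1}, the pair $(Q,R)$ must satisfy the compatibility relations \eqref{eq:compatibility-AB-1}--\eqref{eq:compatibility-AB-2}. At leading order in $\hbar$, suppose $\tilde Q_0,\tilde R_0$ had poles of orders $d,d'$ at $e_1$. Relation \eqref{eq:compatibility-AB-1} forces $d'=d+1$, since $e_1$ is independent of the auxiliary variable $v$ and so $\partial_v\tilde R_0$ keeps the same pole order as $\tilde R_0$. Then in \eqref{eq:compatibility-AB-2} the term $2(4x^3+2tx+2H_0)\,\partial_x\tilde R_0+2(6x^2+t)\tilde R_0$ still has a pole of exact order $d+1$ (the simple zero of $4x^3+2tx+u$ at $e_1$ lowers the order of the first summand by only one, and the two summands do not cancel), which cannot be balanced by $\partial_v\tilde Q_0$ of order at most $d$. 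This contradiction gives regularity of $Q_0,R_0$ at $e_1$, and the same pole-order argument runs inductively in $m$ because all lower-order contributions on the right-hand side are regular by hypothesis. No local analysis of the topological-recursion correlators is needed.
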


\begin{proof}
First, Theorem \ref{thm:BPZ-type-equation} implies that 
\begin{equation}
\left[ 
\hbar^2 \frac{\partial^2}{\partial x^2} - 2 \hbar^2 \frac{\partial}{\partial t} 
- (4x^3+2tx) \right] 
\Bigr(Z(\nu+k\hbar ) \cdot \psi_\pm(\nu+k \hbar) \Bigl) = 0
\end{equation}
holds for any $k \in {\mathbb Z}$. 
Therefore, $\Psi_{\pm}$ are formal solutions of 
\begin{equation} \label{eq:partial-Lax-1}
\left[ \hbar^2 \frac{\partial^2}{\partial x^2} 
- 2 \hbar^2 \frac{\partial}{\partial t} 
- (4x^3+2tx+2H) \right] \Psi = 0.
\end{equation}
Then, it follows from the definition of $Q$ and $R$ that 
$\Psi_{\pm}$ satisfies the following system of PDEs: 
\begin{align}
\label{eq:JM-1-pre}
& \left[ \hbar^2 \frac{\p^2}{\p x^2} - 
2\hbar \Bigl( R \hbar \frac{\p}{\p x}  + Q \Bigr)
- (4x^3 + 2t x + 2 H) \right] \Psi = 0, \\
\label{eq:JM-2-pre}
& \left[ \hbar \frac{\p}{\p t} - \Bigl( R \hbar \frac{\p}{\p x}  + Q \Bigr)
\right] \Psi = 0.
\end{align}
Thus the above system of PDE is compatible; 
that is, $Q$ and $R$ satisfy
\begin{align}
\label{eq:compatibility-AB-1}
0 & = 2 \, \frac{\p Q}{\p x} - 2\hbar \, \frac{\p R}{\p t} 
+ 4 \hbar \, R \cdot \frac{\p R}{\p x} + \hbar \, \frac{\p^2 R}{\p x^2},   \\
\label{eq:compatibility-AB-2}
0 & = 2\, (4x^3+2tx+2H)  \cdot \frac{\p R}{\p x} 
+ 2 \, (6x^2+t) \cdot R - 2 \, (x-q) 
- 2 \hbar \, \frac{\p Q}{\p t} 
+ 4 \hbar \, Q \cdot \frac{\p R}{\p x} 
+ \hbar \, \frac{\p^2 Q}{\p x^2}.  
\end{align}
In particular, the leading terms $Q_0 = \tilde{Q}_0(x,t,v)|_{v=\frac{\phi}{\hbar}}$ 
and $R_0=\tilde{R}_0(x,t,v)|_{v=\frac{\phi}{\hbar}}$ satisfy 
\begin{align}
\label{eq:compatibility-AB-1-leading}
0 & 
= \left[ 2\frac{\p \tilde{Q}_0(x,t,v)}{\p x} 
- 2 \, \frac{\p \phi}{\p t} \cdot \frac{\p \tilde{R}_0(x,t,v)}{\p v} 
\right]_{v= \frac{\phi}{\hbar}},  \\
\label{eq:compatibility-AB-2-leading}
0 & 
= \left[ 2 \, (4x^3+2tx+2H_0) \cdot \frac{\p \tilde{R}_0(x,t,v)}{\p x} 
+ 2 \, (6x^2+t) \cdot \tilde{R}_0(x,t,v) - 2 \, (x-q_0) 
- 2 \, \frac{\p \phi}{\p t} \cdot \frac{\p \tilde{Q}_0(x,t,v)}{\p v}
\right]_{v= \frac{\phi}{\hbar}}.
\end{align}

Suppose for contradiction that the leading term $Q_0$ 
and $R_0$ have a pole at a branch point $x = e_1$; 
that is, 
suppose that the Laurent series expansion of 
$Q_0$ and $R_0$ at $x=e_1$ are given by 
\begin{equation}
\tilde{Q}_0(x,t,v) = \frac{a_0(t,v)}{(x-e_1(t))^{d}} +\cdots, 
\quad 
\tilde{R}_0(x,t,v) = \frac{b_0(t,v)}{(x-e_1(t))^{d'}} +\cdots 
\end{equation} 
with $d, d' \ge 1$ and $a_0(t,v)$, $b_0(t,v)$ being $x$-independent. 

It follows from \eqref{eq:compatibility-AB-1-leading}
that the pole orders must satisfy $d' = d+1$ since $e_1$ is independent of $v$. 
On the other hand, the pole order of 
\begin{equation}
2 \, (4x^3+2tx+2H_0) \cdot \frac{\p \tilde{R}_0(x,t,v)}{\p x} 
+ 2 \, (6x^2+t) \cdot \tilde{R}_0(x,t,v) 
= -\frac{(8d+4)(e_1-e_2)(e_1-e_3) b_0(t,v)}{(x-e_1(t))^{d+1}} + \cdots 
\end{equation}
is $d+1$, which is greater than that of ${\p \tilde{Q}_0(x,t,v)}/{\p v}$. 
This contradicts to \eqref{eq:compatibility-AB-2-leading}, 
and hence we can conclude that 
$Q_0$ and $R_0$ must be holomorphic at $x=e_1$.

Fix $m \ge 1$, and suppose that $Q_{k}$ and $R_{k}$ for $0 \le k \le m-1$ 
are holomorphic at $x=e_1$. Then, \eqref{eq:compatibility-AB-1} and 
\eqref{eq:compatibility-AB-2} imply that the terms
\begin{align}
& 
\left[ 2 \, \frac{\p \tilde{Q}_m(x,t,v)}{\p x}
- 2 \, \frac{\p \phi}{\p t} \cdot \frac{\p \tilde{R}_m(x,t,v)}{\p v} 
\right]_{v= \frac{\phi}{\hbar}} 
\label{eq:compatibility-AB-1-subleading}
\\
& \left[ 2 \, (4x^3+2tx+2H_0) \cdot \frac{\p \tilde{R}_m(x,t,v)}{\p x} 
+ 2\, (6x^2+t) \cdot \tilde{R}_m(x,t,v) 
- 2 \, \frac{\p \tilde{Q}_m(x,t,v)}{\p v}
\right]_{v= \frac{\phi}{\hbar}}
\label{eq:compatibility-AB-2-subleading}
\end{align}
are expressed in terms of polynomials of $Q_0, \dots, Q_{m-1}$, 
$R_0, \dots, R_{m-1}$ and their derivatives. 
Thus the both of \eqref{eq:compatibility-AB-1-subleading} and
\eqref{eq:compatibility-AB-2-subleading} are holomorphic at $x=e_1$ 
under the hypothesis. Then, by the same argument presented above, 
we can conclude that $Q_m$ and $R_m$ must be holomorphic  
at $x=e_1$ (and at other branch points $e_2$, $e_3$ by the same reason). 
Thus we have proved that the coefficients of $Q$ and $R$ do not have
poles at the branch points.
\end{proof}

\begin{rem}
The above proof suggests that 
a pole of $Q$ and $R$ must be 
$x=\infty$, or a point depending on $\hbar$.
Actually, they have poles at $x=q_0(t, \hbar)$ 
which is a leading term of $q$ given in \eqref{eq:q0-leading-term}.
The pole arises as the zero of the leading term of the 
Wronskian (see \eqref{eq:expressions-of-Wronskians} below). 
\end{rem}

\subsubsection{Polynomiality of the Wronskians}

It follows from \eqref{eq:partial-Lax-1} that 
the Wronskians \eqref{eq:def-of-two-Wronskians}
satisfy 
\begin{equation}
\frac{\p}{\p x} {\rm Wr}^x[{\Psi}_{+}, {\Psi}_{-}] 
= 2 \, {\rm Wr}^t[{\Psi}_{+}, {\Psi}_{-}],
\end{equation} 
and hence, 
\begin{equation}
\label{eq:log-der-of-x-Wronskian}
R = \frac{1}{2} \frac{\p}{\p x} \log {\rm Wr}^x[{\Psi}_{+}, {\Psi}_{-}].
\end{equation}

Theorem \ref{thm:formal-monodromy-Psi} and 
Proposition \ref{prop:no-pole-theorem} implies 
\begin{prop} \label{prop:polynomiality-of-Wronskian}
The coefficients of the Wronskians ${\rm Wr}^x$ and ${\rm Wr}^t$ 
are polynomial of $x$. 
\end{prop}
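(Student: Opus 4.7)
The plan is to show that each $\hbar$-coefficient of ${\rm Wr}^x$ and ${\rm Wr}^t$ is a polynomial in $x$, by establishing three properties---single-valuedness in $x$, absence of finite poles, and polynomial growth at $x=\infty$---and then invoking the standard fact that a meromorphic function on $\bP^1$ which is holomorphic on $\bC$ with polynomial growth at $\infty$ must be a polynomial. The main ingredients are Theorem \ref{thm:formal-monodromy-Psi}, Proposition \ref{prop:no-pole-theorem}, the identity \eqref{eq:log-der-of-x-Wronskian} (equivalently $\p_x {\rm Wr}^x = 2 {\rm Wr}^t$), and the asymptotic expansion of $\psi_\pm$ at infinity given in \eqref{eq:behavior-psi-pm}.

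First I would establish single-valuedness of each coefficient on $\bC \setminus \{e_1, e_2, e_3\}$: by Theorem \ref{thm:formal-monodromy-Psi}, the $A$- and $B$-cycle formal monodromies of $\Psi_+$ and $\Psi_-$ are multiplications by reciprocal scalars ($e^{\pm 2\pi i \nu/\hbar}$ along $A$ and $e^{\mp 2\pi i \rho/\hbar}$ along $B$), so the bilinear antisymmetric Wronskians are invariant under term-wise analytic continuation along any cycle on the elliptic curve. Next I would rule out finite poles. The only candidates are the branch points $e_1, e_2, e_3$, since the $x$-dependence of $\Psi_\pm$ enters only through $\psi_\pm(\nu+k\hbar)$ (whose coefficients are meromorphic on the elliptic curve with singularities only at the ramification points) and the Fourier-series denominator defining $\Psi_\pm$ is independent of $x$. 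At each $e_i$, the relation $R = \frac{1}{2}\p_x \log {\rm Wr}^x$ forces each ${\rm Wr}^x_m$ to be holomorphic and non-vanishing: a zero or pole of order $k$ of ${\rm Wr}^x_m$ would produce a simple pole of $R_m$ with residue $k/2$, contradicting the holomorphicity of $R_m$ established in Proposition \ref{prop:no-pole-theorem}. In particular, the potential square-root branching of $\psi_\pm$ at $e_i$ cancels in the Wronskian, extending single-valuedness across all of $\bC$. Since ${\rm Wr}^t_m = \frac{1}{2}\p_x {\rm Wr}^x_m$, the same conclusions hold for ${\rm Wr}^t_m$. Finally, polynomial growth at $x = \infty$ follows from \eqref{eq:behavior-psi-pm}: the essential exponential factors $\exp(\pm \hbar^{-1}(\tfrac{4}{5}x^{5/2} + t x^{1/2}))$ cancel in any antisymmetric bilinear combination of $\psi_+$ and $\psi_-$, giving ${\rm Wr}^x[\psi_+, \psi_-] = 4x + O(1)$ at the leading $\hbar$-order, with polynomial corrections at higher $\hbar$-orders; the discrete Fourier transform and $x$-independent denominator preserve this polynomial growth in $x$.

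The main obstacle is the careful branch-point analysis: the WKB series $\psi_\pm$ individually exhibit square-root type branching from the $y^{-1/2}$-type amplitude factor visible in the expression of $S_0$, and Theorem \ref{thm:formal-monodromy-Psi} only controls the $A$- and $B$-cycle monodromies, not the branch-point monodromy directly. A naive direct analysis of the swap $\psi_+ \leftrightarrow \psi_-$ around $e_i$ is delicate because of sign conventions and lattice shifts inherent in the parametrization $x = \wp(z)$, where the monodromy near a ramification point $r_i$ is $z \mapsto 2 r_i - z$ rather than simply $z \mapsto -z$. The logarithmic-derivative reformulation $R = \frac{1}{2} \p_x \log {\rm Wr}^x$ elegantly converts this global topological question into the local holomorphicity statement of Proposition \ref{prop:no-pole-theorem}, circumventing the direct branch-point analysis entirely.
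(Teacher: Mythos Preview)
Your proof is correct and follows essentially the same route as the paper: trivial formal monodromy along $A$ and $B$ from Theorem \ref{thm:formal-monodromy-Psi}, then the logarithmic-derivative identity $R = \tfrac{1}{2}\partial_x \log {\rm Wr}^x$ together with Proposition \ref{prop:no-pole-theorem} to eliminate the branch points, and finally polynomiality from the absence of finite poles. One small overclaim: your assertion that each ${\rm Wr}^x_m$ is \emph{non-vanishing} at $e_i$ only follows (and is only needed) for $m=0$; for $m\ge 1$ a zero of ${\rm Wr}^x_m$ would not create a pole in $R_m$ once ${\rm Wr}^x_0$ is already holomorphic and non-vanishing, but this is harmless since only holomorphicity is required for the conclusion.
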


\begin{proof}
Theorem \ref{thm:formal-monodromy-Psi}
imply that the coefficients of the Wronskians have no monodromy along any 
cycle on the spectral curve. Therefore, they are rational functions of $x$.
Since the coefficients of $\Psi_{\pm}$ has a singularity at 
$x=\infty$ and the branch points $e_1, e_2, e_3$, 
the coefficients of Wronskians may have poles at these points. 
However, \eqref{eq:log-der-of-x-Wronskian} and 
Proposition \ref{prop:no-pole-theorem} show that 
the coefficients of Wronskians do not have poles at the branch points. 
Therefore, the coefficients of Wronskian is holomorphic except for $x=\infty$, 
and hence they are polynomials of $x$. 
This completes the proof.
\end{proof}

\subsubsection{Completion of the proof of main theorems}

It follows from the asymptotics \eqref{eq:behavior-psi-pm} that 
\begin{multline}
{\rm Wr}^{x}[\Psi_+, \Psi_-] = 4x - 4 q 
- \frac{1}{3} \left( \hbar \frac{dp}{dt} - (6q^2+t) \right)  x^{-1} \\ 
+ \left( \frac{2 \, (2H-p^2+4q^3+2tq)}{15} 
+ \frac{q}{5} \Bigl( \hbar \frac{dp}{dt} - (6q^2+t) \Bigr) 
- \frac{\hbar^2}{90} \cdot \frac{d^2}{dt^2} 
\Bigl(\hbar \frac{dp}{dt} - (6q^2+t) \Bigr) \right) x^{-2}
+ O(x^{-3})
\end{multline}
holds as $x \to \infty$. 
Since the coefficients of the Wronskian is polynomial in $x$, 
the coefficients of negative powers of $x$ must vanish; that is, we have
\begin{equation}
\hbar \frac{dp}{dt} - (6q^2+t) = 0, 
\quad
2H-p^2 +4q^3+2tq = 0.
\end{equation}
These relations show that the formal series $q$ 
given in \eqref{eq:hbar-expansion-of-q}
satisfies the Painlev\'e I equation $(P_{\rm I})$, 
and in particular, $\tau_{P_{\rm I}}$ is the corresponding $\tau$-function. 
This completes the proof of Theorem \ref{thm:tau-function-theorem}.

Furthermore, the above asymptotic behavior and the polynomiality 
of the Wronskians imply 
\begin{equation} \label{eq:expressions-of-Wronskians}
{\rm Wr}^{x}[\Psi_+, \Psi_-] = 4 \, (x - q), \quad
{\rm Wr}^{t}[\Psi_+, \Psi_-] = 2.
\end{equation}
Thus we have 
\begin{equation} \label{eq:formula-R}
R = \frac{1}{2} \cdot 
\frac{ \p_x {\rm Wr}^x[{\Psi}_{+}, {\Psi}_{-}] }
{ {\rm Wr}^x[{\Psi}_{+}, {\Psi}_{-}] } =
 \frac{1}{2 \, (x-q)}
\end{equation}
and 
\begin{equation} \label{eq:formula-Q}
Q 
= \frac{\hbar \, \p_{t} {\rm Wr}^{x}[\Psi_+, \Psi_-] 
- \hbar \, \p_{x} {\rm Wr}^{t}[\Psi_+, \Psi_-]}
{2 \, {\rm Wr}^{x}[\Psi_+, \Psi_-]} = - \frac{p}{2 \, (x-q)}.
\end{equation}
Thus the system of PDEs \eqref{eq:JM-1-pre}-\eqref{eq:JM-2-pre} 
satisfied by $\Psi_{\pm}$ coincides with the isomonodromy system 
\eqref{eq:JM-1}-\eqref{eq:JM-2} associated with $(P_{\rm I})$. 
This completes the proof of Theorem \ref{thm:wave-function-theorem}.

\section{Exact WKB theoretic approach to the direct monodoromy problem}
\label{section:Exact-WKB}

In this section, we give a heuristic discussion on 
the {\it direct monodromy problem} (i.e., computation of the Stokes multipliers
of the linear ODE $(L_{\rm I})$) associated with $\PI$. 
Those Stokes multipliers should be independent of $t$ 
due to the isomonodromic property. 
Unfortunately, our computation is not mathematically rigorous
at this moment because it is based on two conjectures 
on Borel summability and the connection formula on Stokes curves 
(Conjecture \ref{conj:Borel-summability} 
and Conjecture \ref{conj:Voros-formula}). 
We hope that those conjectures are solved in the future.

Our method is a fusion of the exact WKB theoretic 
computation of the monodromy/Stokes matrices 
developed by \cite{SAKT} (see \cite[\S 3]{KT05}) 
and the discrete Fourier transform. 
We are motivated by a technique used in 
the conformal field theoretic construction 
by \cite{ILT} etc. 

\subsection{Stokes graph}

As well as the cases of Schr\"odinger-type ODEs, 
we expect that the Stokes graph controls the 
global behavior of the (Borel resummed) WKB solution 
\eqref{eq:WKB-series-B} of the PDE \eqref{eq:BPZ-equation}.
For a fixed $(t,\nu) \in D_{t_\ast} \times D_{\nu_\ast}$, 
we introduce {\em Stokes curves} by the following equality:
\begin{equation} \label{eq:Stokes-curve-def}
{\rm Im} \int^{x}_{e_i} \sqrt{4x^3+2tx+u(t,\nu)} \, dx  = 0 
\quad (i=1,2,3).
\end{equation}
Stokes curves are the trajectories of the quadratic differential 
$(4x^3+2tx+u(t,\nu))dx^{\otimes2}$ (see \cite{Strebel} for properties 
of trajectories of quadratic differentials). 
Here, recall that $e_1, e_2, e_3$ are branch points 
of the spectral curve \eqref{eq:spcurve}. 
Since the branch points are simple zeros of the quadratic differential
for any $(t,\nu) \in D_{t_\ast} \times D_{\nu_\ast}$, 
three Stokes curves emanate from each $e_i$. 
The branch points, the infinity, and the Stokes curves 
form a graph on ${\mathbb P}^1$, which is called the {\em Stokes graph} 
(for a fixed $(t,\nu) \in D_{t_\ast} \times D_{\nu_\ast}$). 
Interior of a face of the Stokes graph is called a Stokes region. 
A Stokes segment (or a saddle connection) 
is a Stokes curve connecting two branch points. 

\begin{rem} \label{rem:approximated-Stokes}
Below we will draw ``approximated" Stokes curves 
which are defined as the Stokes curves with $u(t, \nu)$ 
in \eqref{eq:Stokes-curve-def} being replaced by 
\begin{equation}
u_{\rm app}(t, \nu) := 
\frac{s^2}{9t} + \frac{2 i \nu  s}{t} - \frac{5\nu^2}{4t}, 
\quad
s = 24^{1/4} \cdot (-t)^{5/4}.
\end{equation}
We choose the function $u_{\rm app}$ referencing the 
first few terms of the following asymptotic behavior 
of $F_0$ near $t=\infty$ computed in \cite[\S 3.1]{BLMST}:
\begin{equation} \label{eq:asymptotics-F0-infinity}
F_0(t,\nu) = \frac{s^2}{45} + \frac{4}{5} i \nu s 
+ \frac{\nu^2}{2} \cdot \log \frac{\nu}{48 i s} 
- \frac{3}{4} \nu^2 - \frac{47i\nu^3}{48s} 
- \frac{7717\nu^4}{4608s^2} + O(s^{-3})
\end{equation}
when $|t| \to +\infty$.
Thus, we may expect that the approximated Stokes 
graph is close to the actual Stokes graph 
if we choose sufficiently large $t$. 
\end{rem}

\subsection{Two conjectures}

Here we state two conjectures on the analytic properties 
of the WKB solution $\psi_{\pm}(x,t,\nu;\hbar)$ of the PDE \eqref{eq:BPZ-equation}.
These conjectures are true in the case of the usual Schr\"odinger equations 
(i.e., second order ODEs with a small parameter $\hbar$) with 
a polynomial or rational potential function (c.f., \cite{DLS, Koike-Schafke}).

\subsubsection{Borel summability}

In the theory of the exact WKB analysis, 
the Stokes graph for a Schr\"odinger-type ODE 
is used to describe the regions where the WKB solutions 
are Borel summable as formal power series of $\hbar$ 
(c.f., \cite{KT05, Voros83}). 
We expect that the same results for 
the Borel summability established by 
\cite{DLS, Koike-Schafke} 
also hold for the WKB solution of the PDE \eqref{eq:BPZ-equation}.

\begin{conj} \label{conj:Borel-summability} \hspace{+.0em}
\begin{itemize}
\item[(i)]
Fix $(t,\nu) \in D_{t_\ast} \times D_{\nu_\ast}$, and suppose 
that there is no Stokes segment in the Stokes graph for $(t,\nu)$.
Then, for any fixed $x$ on each Stokes region $U$,
the WKB solution $\psi_{\pm}(x,t,\nu;\hbar)$ given in \eqref{eq:WKB-series-B}
is Borel summable as a formal power series of $\hbar$. 
In this situation, the Borel sum of $\psi_{\pm}(x,t,\nu;\hbar)$ is 
a holomorphic solution of the PDE \eqref{eq:BPZ-equation} 
defined on $U \times D_{t_\ast}$.

\smallskip
\item[(ii)]
For a fixed $(t,\nu) \in D_{t_\ast} \times D_{\nu_\ast}$,
the formal series $Z(t,\nu;\hbar)$ given in \eqref{eq:partition-function-Z} 
is Borel summable as a formal power series of $\hbar$ 
if the Stokes graph for $(t,\nu)$ has no Stokes segment.

\end{itemize}
\end{conj}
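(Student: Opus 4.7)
The plan is to adapt the Borel summability methods of Koike-Sch\"afke \cite{Koike-Schafke} and Dunster-Lutz-Sch\"afke \cite{DLS}, which handle the Schr\"odinger case, to the PDE \eqref{eq:BPZ-equation}. Writing $\psi = \exp(S)$ converts the PDE into a Riccati-type equation in which the $t$-derivative and the $\hbar^2\p_t F$ correction appear as perturbations of a classical $t$-parametrized Schr\"odinger Riccati problem whose leading-order symbol is the spectral curve $y^2 = 4x^3+2tx+u(t,\nu)$. The strategy is to construct the Borel transforms $\widehat{S}_{\pm}(x,t,\nu;\zeta)$ as analytic functions on the complement of a discrete set of singularities and to show they extend analytically along any ray $\arg\zeta = \theta$ provided the Stokes graph contains no Stokes segment at $(t,\nu)$.

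For part (i) the key technical steps are: (a) establish Gevrey-1 bounds $|S_m(x,t,\nu)| \leq C \cdot A^m \cdot m!$, locally uniform in $(x,t)$ on a fixed Stokes region times $D_{t_\ast}$, by induction from the Riccati relation together with the integral expression \eqref{eq:higher-Sm}; (b) identify the loci of singularities of $\widehat{\psi}_\pm$ with the values of the Voros cycles $\oint y\,dx$ along cycles passing through $x$, and verify that the no-Stokes-segment hypothesis places none of them on the positive $\zeta$-axis; (c) apply a Nevanlinna-Sokal criterion to conclude Borel summability, producing a holomorphic solution of \eqref{eq:BPZ-equation} on $U \times D_{t_\ast}$ by uniqueness of the Borel sum satisfying the PDE. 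For part (ii), I would use \eqref{eq:dF0-dt} and Proposition \ref{prop:variation} to relate $\p_t F_g$ to residues of $W_{g,1}$: Borel summability of $F(t,\nu;\hbar)$ would then reduce to that of the one-form correlators from part (i), after integrating in $t$ and controlling initial data at an arbitrary base point.

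The main obstacle, and the reason these statements remain conjectural, is the genuinely PDE character of \eqref{eq:BPZ-equation}. The usual ODE techniques --- Cauchy majorants in a single variable, direct iteration of the Volterra-type integral equation for the Borel transform --- require significant modification to accommodate the $\hbar^2 \p_t$ coupling, which mixes $\hbar$-orders in a non-local way across the $t$-variable. More subtly, the effective ``potential'' $4x^3+2tx+2\hbar^2 \p_t F$ is itself a formal power series of $\hbar$ whose own Borel summability is intertwined with that of $\psi_{\pm}$ through part (ii); I expect that (i) and (ii) must be established simultaneously via a joint bootstrap argument in which the Gevrey estimates for the correlators feed into the Riccati iteration and vice versa. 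Carrying out this bootstrap while tracking how the singularities of the Borel transforms move as $t$ varies within $D_{t_\ast}$ --- and in particular ensuring that the no-Stokes-segment hypothesis propagates to a full neighborhood so that holomorphy in $t$ can be asserted on all of $D_{t_\ast}$ --- is the most delicate point of the program.
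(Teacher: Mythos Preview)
There is nothing to compare against: in the paper this statement is explicitly a \emph{conjecture}, not a theorem, and no proof is offered. The author remarks only that the analogous statements are known for ordinary Schr\"odinger equations with polynomial or rational potential by \cite{DLS, Koike-Schafke}, and that (i) and (ii) can be verified directly for the genus-zero spectral curves arising from Gauss hypergeometric equations and their confluences (via the explicit free-energy formulas of \cite{IKoT1, IKoT2}). Beyond that, the paper simply assumes Conjecture \ref{conj:Borel-summability} in order to carry out the heuristic Stokes-multiplier computation of \S\ref{section:Exact-WKB}, and lists its resolution among the open problems in \S\ref{section:conclusion}.

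Your proposal is therefore not a comparison target but a research program. As such it is sensible: you correctly isolate the two features that block a direct transplant of the Koike--Sch\"afke argument, namely the $\hbar^2\partial_t$ term and the fact that the effective potential already contains the unknown $\hbar^2\partial_t F$, so that (i) and (ii) are coupled. Your suggestion to run a joint bootstrap on the Gevrey estimates for the $W_{g,n}$ and the Riccati iteration is plausible, though you should be aware that step (b) --- locating Borel-plane singularities of $\widehat\psi_\pm$ at Voros periods --- is itself only established rigorously in the ODE setting, and extending resurgence-type statements to this PDE is part of what is being conjectured, not something you can invoke. Likewise, reducing (ii) to (i) via Proposition~\ref{prop:variation} and a $t$-integration requires control of the $t\to\infty$ (or some base-point) behaviour of $F_g$, which is only available asymptotically (cf.\ \eqref{eq:asymptotics-F0-infinity}) and would need its own analysis. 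In short: your outline identifies the right difficulties, but none of the key steps is currently a theorem for this PDE, which is exactly why the paper leaves the statement as a conjecture.
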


See \cite{Costin08, Sauzin} for the Borel summation. 
In what follows we assume that Conjecture \ref{conj:Borel-summability} is true. 

\begin{rem}
We can verify that the both of the claims (i) and (ii)
in Conjecture \ref{conj:Borel-summability} are true 
for the WKB series and partition function for the 
spectral curves arising from the Gauss hypergeometric 
equations and its confluent equations. 
Those spectral curves were studied in \cite{IKoT1, IKoT2} in detail, 
where explicit formulas (in terms of Bernoulli numbers)
of the free energy were obtained (some of the results were known from before). 
In a standard argument in the exact WKB analysis, 
Stokes segments yield certain singularities on the Borel plane 
(c.f., \cite{DDP93, DP99, Voros83} etc.), and in particular, 
the WKB solutions and the Voros coefficients are not 
Borel summable when a Stokes segment exists. 
As is shown in \cite{IKoT1, IKoT2}, the Voros coefficients 
for those examples are written in terms of the free energy, 
and hence, we cannot expect the Borel summability of
the partition function $Z$ in such a situation. 

It is also worth mentioning that, as is discussed by 
\cite{DDP93, Voros83}, 
a Stokes segment yields a kind of Stokes phenomenon for the WKB solutions 
and the Voros coefficients. 
It is called the 
parametric Stokes phenomenon and studied 
by \cite{AT, I13, Koike-Takei, Takei-Sato-conj} etc.\ in detail. 
In some cases the parametric Stokes phenomenon has 
a close relationship to the cluster exchange relation 
(\cite{IN14, IN15}). 

\end{rem}

\subsubsection{Voros connection formula}

We also expect that the Voros' connection formula 
(\cite[\S 6]{Voros83}, \cite[\S 2]{KT05})
holds for the WKB solution $\psi_{\pm}(x,t,\nu;\hbar)$ 
of the PDE \eqref{eq:BPZ-equation}.

\begin{conj} \label{conj:Voros-formula}
Fix $(t,\nu) \in D_{t_\ast} \times D_{\nu_\ast}$, so that 
there is no Stokes segment in the Stokes graph.  
Suppose that a Stokes curve $C$, connecting a branch point $e$ and 
$\infty$, is a common boundary of two Stokes regions $I$ and $II$. 
We label the Stokes regions so that the region $II$ comes 
next to the region $I$ when we turn around the branch point $e$
in the counterclockwise direction (see Figure \ref{fig:Voros-formula}). 
If we denote by $\psi_{\pm}^{J}$ the Borel sum of 
$\psi_{\pm}$ defined on the region $J \in \{I, II\}$, 
then one of the following relations holds:

\begin{itemize}
\item[(i)] 
If ${\rm Re} \int^{x}_{e} \sqrt{4x^3+2tx+u(t,\nu)} \, dx > 0$ on $C$, 
then  
\begin{eqnarray} \label{eq:Voros-formula-plus}
\begin{cases}
\psi^{I}_{+}(x,t,\nu;\hbar) & = \quad 
\psi^{II}_{+}(x,t,\nu;\hbar) + \tilde{\psi}^{II}_{-}(x,t,\nu;\hbar)  \\[+.3em]
\psi^{I}_{-}(x,t,\nu;\hbar) & = \quad \psi^{II}_{-}(x,t,\nu;\hbar).
\end{cases}
\end{eqnarray}
\item[(ii)]
If ${\rm Re} \int^{x}_{e} \sqrt{4x^3+2tx+u(t,\nu)} \, dx < 0$ on $C$, 
then  
\begin{eqnarray}
\label{eq:Voros-formula-minus}
\begin{cases}
\psi^{I}_{+}(x,t,\nu;\hbar) & = \quad \psi^{II}_{+}(x,t,\nu;\hbar) \\[+.3em]
\psi^{I}_{-}(x,t,\nu;\hbar) & = \quad \psi^{II}_{-}(x,t,\nu;\hbar)
+ \tilde{\psi}^{II}_{+}(x,t,\nu;\hbar).  
\end{cases}
\end{eqnarray}
\end{itemize}
Here $\tilde{\psi}^{II}_{\mp}(x,t,\nu;\hbar)$ in the above formula 
are the Borel sum of a formal series 
$\tilde{\psi}_{\mp}(x,t,\nu;\hbar)$ in the region $II$, 
which is obtained by the term-wise analytic continuation 
of the original WKB solution $\psi_{\pm}(x,t,\nu;\hbar)$ 
along a ``detoured path encircling the branch point $e$" 
shown in Figure \ref{fig:Voros-formula}. 
More precisely, $\tilde{\psi}_{\mp}(x,t,\nu;\hbar)$ 
is the term-wise analytic continuation of $\psi_{\pm}(x,t,\nu;\hbar)$
along a path starting from a point in the region $I$ 
to a point in the region $II$ which turns around the 
branch point $e$ in the clockwise direction.
\end{conj}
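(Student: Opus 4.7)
The plan is to adapt the proof of the classical Voros connection formula for second-order linear Schr\"odinger-type ODEs (as in Koike--Sch\"afke and Delabaere--Dillinger--Pham) to the PDE \eqref{eq:BPZ-equation}. The key observation is that the principal symbol in $x$ of \eqref{eq:BPZ-equation} is precisely the spectral curve $y^{2} = 4x^{3} + 2tx + u(t,\nu)$, whose Stokes graph is the one introduced in \S\ref{section:Exact-WKB}; thus $t$ plays the role of a parameter entering the WKB coefficients $S_{m}(x,t,\nu)$, and under the Borel summability hypothesis (Conjecture \ref{conj:Borel-summability}) the formal series $\psi_{\pm}$ lifts to a genuine analytic solution $\psi_{\pm}^{J}$ on each Stokes region $J$.

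First, near a generic point of the Stokes curve $C$ separating the regions $I$ and $II$ (away from the branch point $e$), I would substitute the gauge transformation $\psi = \exp(S_{-1}/\hbar + S_{0})\cdot \chi$ into \eqref{eq:BPZ-equation}; the resulting equation for $\chi$ has principal part $2\,y\,\hbar\p_{x}\chi$, and the formal series $\chi$ is Borel summable in each Stokes region by Conjecture \ref{conj:Borel-summability}(i). A Watson-type lemma combined with a deformation of the Laplace contour then shows that $\psi_{\pm}^{I}$ and $\psi_{\pm}^{II}$, being Borel sums along slightly rotated rays in the $\hbar$-plane, can differ only by residues picked up from the singularities of the Borel transform that cross the integration ray. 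By the resurgence structure of the WKB ansatz, the relevant singularities sit at $\pm 2\int_{e}^{x}\sqrt{4x^{3}+2tx+u}\,dx$ in the Borel plane, and the alignment of one of these with the positive real axis is precisely the condition for $x$ to lie on $C$; case (i) versus case (ii) of the statement corresponds to whether the singularity of the Borel transform of $\psi_{+}$ or of $\psi_{-}$ is the one crossing the ray.

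Second, to pin down the Stokes coefficient to be $1$ (rather than an arbitrary Stokes constant) and to identify the discontinuity precisely with $\tilde{\psi}_{\mp}^{II}$, I would appeal to a local canonical model near the branch point $e$. A formal holomorphic change of coordinate centered at $e$ transforms the leading-order ODE in $x$ into the Airy equation, and an order-by-order formal gauge transformation of Langer type (as in \cite{KT05}) reduces the full PDE near $e$ to an Airy PDE modulo terms that do not affect the leading Borel singularity. The classical Stokes multiplier $1$ for the Airy connection then gives the coefficient in \eqref{eq:Voros-formula-plus}--\eqref{eq:Voros-formula-minus}, while the identification of $\tilde{\psi}_{\mp}^{II}$ with the term-wise analytic continuation of $\psi_{\pm}$ along a path encircling $e$ in the clockwise direction follows from tracking the branch structure of $S_{-1}$ and $S_{0}$ at $e$.

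The main obstacle is making the Borel-plane analysis rigorous for a PDE rather than an ODE. Classical proofs obtain uniform estimates on the Borel transforms of the WKB coefficients by iterating a transport equation along Stokes curves; here the transport equation couples $\hbar\p_{x}$ with $\hbar^{2}\p_{t}$, and moreover the potential already contains the formal series $2\hbar^{2}\p_{t}F(t,\nu;\hbar)$ whose Borel summability (Conjecture \ref{conj:Borel-summability}(ii)) must be established simultaneously. A promising route is a joint fixed-point argument in a Banach space of Gevrey-$1$ formal series in $\hbar$ whose coefficients are holomorphic in $(x,t,\nu)$, using the differentiation formulas in Propositions \ref{prop:variation} and \ref{prop:variation-nu} to control the parametric derivatives of $S_{m}$. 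Carrying this out appears to be substantial technical work, and is the principal reason that the statement is left as a conjecture here.
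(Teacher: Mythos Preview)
The statement you are asked to prove is labeled a \emph{Conjecture} in the paper, and the paper does not supply a proof. The author explicitly writes that the computation of Stokes multipliers in \S\ref{section:Exact-WKB} ``is not mathematically rigorous at this moment because it is based on two conjectures (on Borel summability and Voros connection formula),'' and expresses the hope that these conjectures will be solved in the future. So there is no proof in the paper to compare your proposal against.

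Your outline is a reasonable heuristic sketch of how one might attempt to adapt the ODE proof of the Voros formula to the PDE \eqref{eq:BPZ-equation}, and you correctly identify the essential difficulty: the transport equation now couples $\hbar\partial_x$ with $\hbar^2\partial_t$, and the potential itself contains the formal series $2\hbar^2\partial_t F(t,\nu;\hbar)$ whose Borel summability is part of what must be established. Your final paragraph recognizes that this is precisely why the statement remains a conjecture. That is the honest assessment, and it matches the paper's own position. What you have written is therefore not a proof but a plausible plan of attack together with an acknowledgment of the gap; as such it is appropriate, but it should not be presented as a proof of the statement.
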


\begin{figure}[t]
  \begin{center}
   \includegraphics[width=40mm]{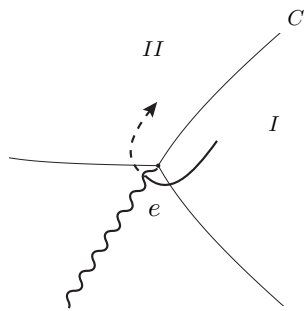}
  \end{center}
   \caption{The detoured path which encircles the branch point $e$.}
  \label{fig:Voros-formula}
\end{figure}

Since $e$ is a branch point of the spectral curve, 
the term-wise analytic continuation of $\psi_{\pm}$ 
along the detoured path has the phase function 
with the different sign from the original one. 
To visualize the two sheets of the spectral curve, 
we draw the branch cut (the wiggly line) and 
use the solid (resp., dashed) lines to represent 
a part of the path on the first (resp., second) sheet. 
We will use the same rule when we draw paths on the spectral curve.

The connection formula guarantees the single-valuedness 
of the Borel resummed WKB solutions around branch points
(see \cite[\S 6]{Voros83}). 
We also note that the above formula also has a close relationship 
to the ``path-lifting" in the work of Gaiotto-Moore-Neitzke 
\cite{GMN12}, where Stokes graphs are called spectral networks.

Regarding on the term-wise analytic continuation along the detoured path, 
we can prove the following (without assuming any conjecture): 

\begin{prop} \label{prop:detoured-continuation}
Under the same situation as in Conjecture \ref{conj:Voros-formula}, 
take a point $x_o$ on the Stokes curve $C$ and an open 
neighborhood $U_o$ of $x_o$ contained in $I \cup C \cup II$. 
Let $\psi_{\pm}(x,t,\nu; \hbar)$ be the WKB solution defined by 
\eqref{eq:WKB-series-B} for $x \in U_o$
with the integration path from $0$ to $z(x)$ 
in \eqref{eq:WKB-series-B} being chosen as the 
image by $x \mapsto z(x)$ of the composition of the following two paths on $x$-plane: 
The part of the Stokes curve $C$ connecting $\infty$ and $x_o$, and 
any path from $x_o$ to $x$ contained in $U_o$. 
Then, the term-wise analytic continuation $\tilde{\psi}_{\mp}(x,t,\nu; \hbar)$ 
of $\psi_{\pm}(x,t,\nu; \hbar)$ along the detoured path encircling $e$ is given by 
\begin{equation} \label{eq:formal-continuation-detoured}
\tilde{\psi}_{\mp}(x,t,\nu; \hbar) = 
i \cdot (-1)^{m_C n_C} \cdot \exp\left(  
\pm \frac{2 \pi i {m_C} \nu}{\hbar} \right) \cdot 
\frac{Z(t,\nu \pm n_C \hbar; \hbar)}{Z(t,\nu; \hbar)} \cdot 
\psi_{\mp} (x,t,\nu \pm n_C \hbar; \hbar).
\end{equation}
Here $m_C$ and $n_C$ are integers defined by the condition 
\begin{equation} \label{eq:half-period-along-C}
\int^{e}_{\infty} \frac{dx}{\sqrt{4x^3+2tx+u(t,\nu)}} = 
\frac{m_C \, \omega_A + n_C \, \omega_B}{2},
\end{equation} 
where the left hand-side of \eqref{eq:half-period-along-C} 
is defined as the integration along the Stokes curve $C$.  
(Thus, either $m_C$ or $n_C$ is an odd integer.)
\end{prop}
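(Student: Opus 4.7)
The plan is to identify the detoured path on the universal cover $\mathbb{C}$ of the spectral curve, and then to decompose its formal monodromy into cycle monodromies (given by Theorem \ref{thm:formal-monodromy-psi}) together with a sheet-involution piece. First, using the local expansion $x - e \sim \tfrac{1}{2}\wp''(r)(z - r)^2$ near the branch point, a clockwise loop around $e$ in the $x$-plane lifts on $\mathbb{C}$ to the involution $z \mapsto 2r - z$, where $r$ is the half-period with $\wp(r) = e$. Combined with the hypothesis $r = (m_C\omega_A + n_C\omega_B)/2$, this identifies the new integration endpoint as $m_C\omega_A + n_C\omega_B - z(x)$.

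I would then deform the detoured path on $\mathbb{C}\setminus(\Lambda+\{r_1,r_2,r_3\})$ into a concatenation of $m_C$ $A$-cycles and $n_C$ $B$-cycles, followed by a segment from the lattice point $m_C\omega_A + n_C\omega_B$ to $m_C\omega_A + n_C\omega_B - z(x)$. Iterating Theorem \ref{thm:formal-monodromy-psi} on the first piece yields the factors $\exp(\pm 2\pi i m_C\nu/\hbar)$ and $Z(t,\nu\pm n_C\hbar;\hbar)/Z(t,\nu;\hbar)$, together with the shift $\nu \mapsto \nu \pm n_C\hbar$ in $\psi_\pm$. The second segment, by $\Lambda$-translation and the $\Lambda$-periodicity of $W_{g,n}$ for $2g-2+n\geq 1$ (together with the oddness of $W_{0,1}$ under the sheet involution, which flips $S_{-1}$ in sign), reduces to integration from $0$ to $-z(x)$ and converts $\psi_\pm$ into $\psi_\mp$ at the shifted parameter, up to a scalar factor originating in the multivalued logarithm of the $\sigma$-function in $S_0$.

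This scalar factor would then be computed directly from $S_0 = \tfrac{1}{2}F_{0,2}(z,z)$ using the quasi-periodicity of $\sigma$ together with the Legendre relation $\eta_A\omega_B - \eta_B\omega_A = 2\pi i$. Written symmetrically, the algebraic change $F_{0,2}(2r-z, 2r-z) - F_{0,2}(z,z)$ contains a cross-term proportional to $m_C n_C\cdot(\eta_A\omega_B - \eta_B\omega_A)$, which exponentiates to $(-1)^{m_C n_C}$, and a separate contribution $\pm i\pi/2$ coming from $\sigma(-2z) = -\sigma(2z)$, which exponentiates to $i$; the branch of $\log(-1)$ is pinned down by the analytic continuation of $\log\sigma$ along the clockwise half-loop induced in the $z$-plane by the clockwise detour around $e$ in the $x$-plane.

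The hard part will be the delicate bookkeeping of signs and phases in this last step: one must reconcile the purely algebraic $\sigma$-function quasi-periodicity with the branch-of-log contribution picked up by the half-loop around the zero of $\sigma$ at $2r = m_C\omega_A + n_C\omega_B$, in such a way that the net scalar factor is uniformly $i\cdot(-1)^{m_C n_C}$ for every admissible pair $(m_C, n_C)$. The parity assumption that one of $m_C, n_C$ is odd enters here to ensure that $2r$ is a genuine nontrivial lattice point, and the three cases corresponding to the three branch points $r_1, r_2, r_3$ (i.e.\ $(m_C,n_C) = (1,0), (0,1), (1,1)$ modulo $(2,2)$) have to be reconciled with a single consistent branch choice tied to the clockwise orientation of the detour.
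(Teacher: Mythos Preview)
Your approach is correct and essentially the same as the paper's. The paper's proof is a two-sentence sketch: it observes that the detour induces $z(x) \mapsto -z(x) + m_C\omega_A + n_C\omega_B$ and then says the result follows by ``a similar computation done in the proof of Theorem~\ref{thm:formal-monodromy-psi}.'' Your decomposition into $m_C$ $A$-cycles, $n_C$ $B$-cycles, and the sheet involution is precisely how one organizes that computation, and your identification of the scalar $i\cdot(-1)^{m_C n_C}$ as coming from the multivalued logarithms in $S_0$ (via the $\sigma$-function quasi-periodicity \eqref{eq:quasi-periodicitiy-sigma-general} and the Legendre relation) fills in a detail the paper leaves implicit.
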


\begin{proof}
When $x$ turns around $e$ along the detoured path, 
the function $z(x)$, whose defining path of integration in \eqref{eq:WKB-series-B}
is chosen as the same path for the normalization of $\psi_{\pm}(x,t,\nu; \hbar)$, 
admits the monodromy 
\begin{equation}
z(x) \mapsto - z(x) + m_C \, \omega_A + n_C \, \omega_B.
\end{equation}
Then, the result \eqref{eq:formal-continuation-detoured} can be proved 
by a similar computation done in the proof of Theorem \ref{thm:formal-monodromy-psi}. 
\end{proof}

\subsection{Computation of Stokes multipliers of $(L_{\rm I})$}
\label{subsection:computation-of-Stokes}

Assuming Conjectures \ref{conj:Borel-summability} and 
\ref{conj:Voros-formula} (and the convergence of the 
Fourier series \eqref{eq:wave-function} after taking the Borel summation), 
we demonstrate how the exact WKB method computes 
the Stokes multipliers of $(L_{\rm I})$. 
We will apply the Voros connection formula for $\psi_{\pm}$, 
and take the discrete Fourier transform to compute the 
Stokes multipliers for $\Psi_{\pm}$.

\begin{figure}[t]
 \begin{minipage}{0.4\hsize}
  \begin{center}
   \includegraphics[width=50mm]{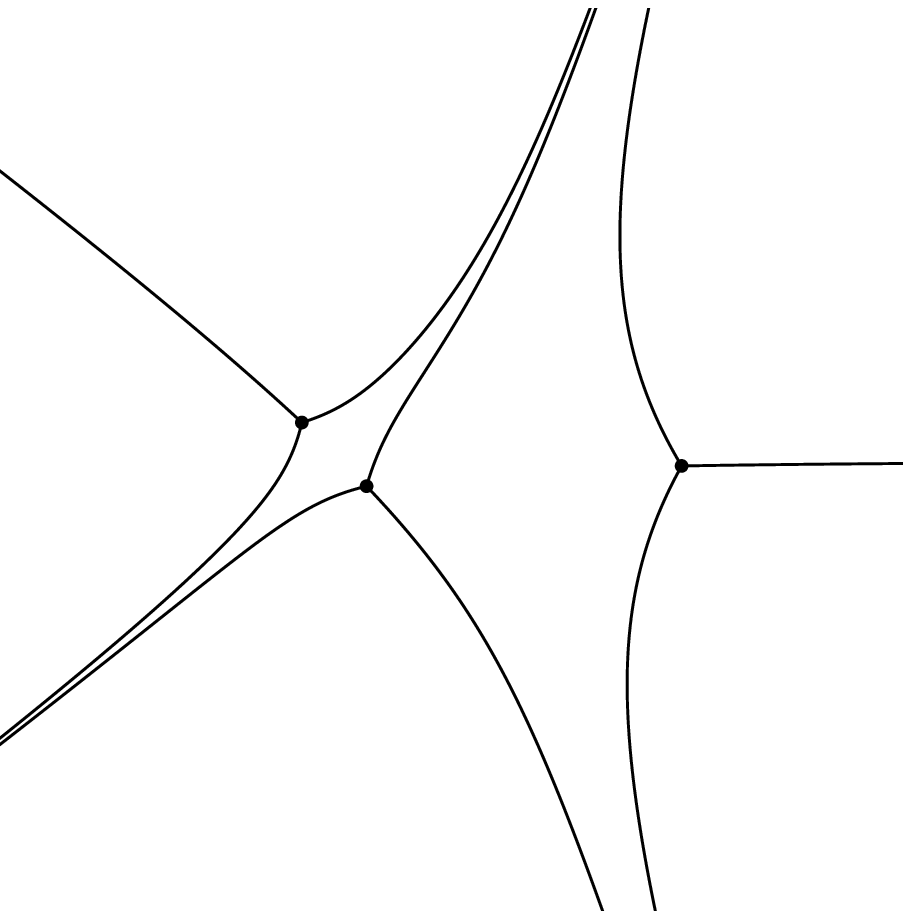} \\
   {(a)}
  \end{center}
 \end{minipage}
 \begin{minipage}{0.4\hsize}
  \begin{center}
   \includegraphics[width=50mm]{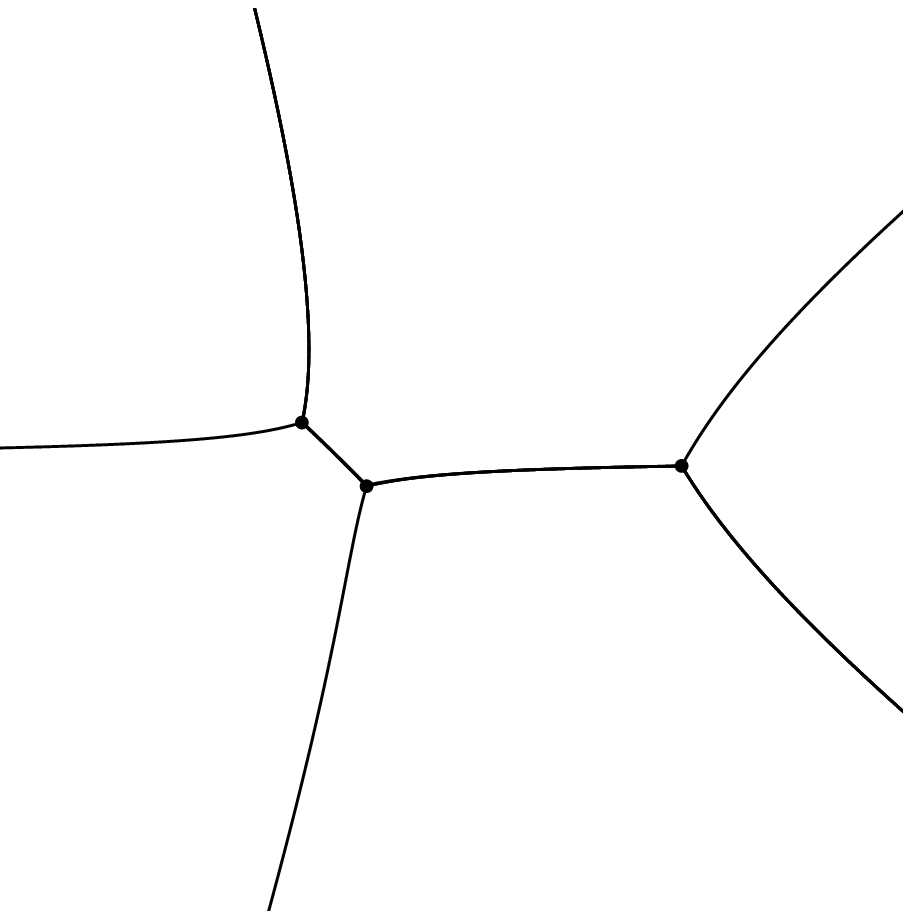} \\
   {(b)}
  \end{center}
 \end{minipage}
   \caption{(a) The Stokes curves  for $t = -9.9313 + 1.17017 i$ and $\nu=1/2$. 
   (b) The anti-Stokes curves for the same $t$ and $\nu$. }
  \label{fig:Stokes-Boutroux-1}
\end{figure}

\begin{figure}[t]
  \begin{center}
   \includegraphics[width=90mm]{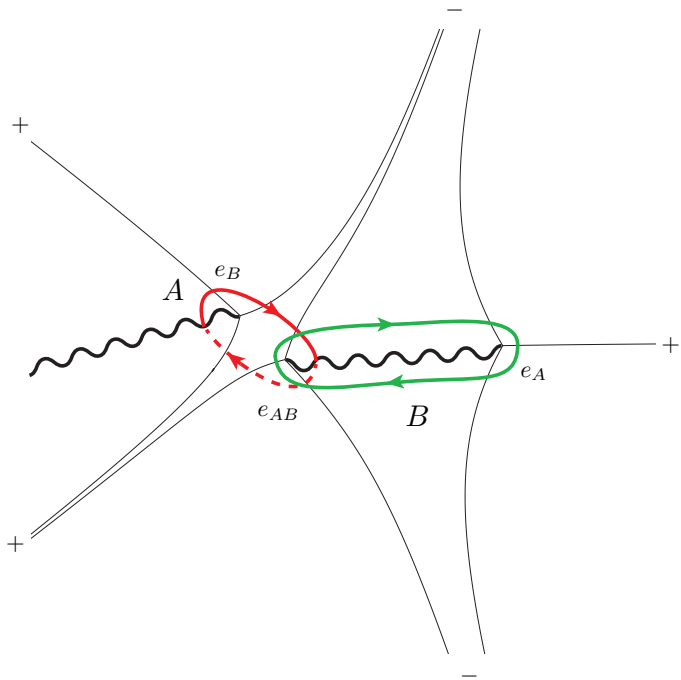}
  \end{center}
   \caption{The $A$-cycle (red) and the $B$-cycle (green).}
  \label{fig:A-and-B-cycles}
\end{figure}

We choose $t = -9.9313 + 1.17017 i$ and $\nu = 1/2$ for our computation. 
The (approximated) Stokes graph is shown in Figure \ref{fig:Stokes-Boutroux-1} (a). 
We also draw the (approximated) anti-Stokes curves, which are 
defined by the condition 
${\rm Re} \int^{x}_{e_i} \sqrt{4x^3+2tx+u(t,\nu)} \, dx = 0$, 
in Figure \ref{fig:Stokes-Boutroux-1} (b). 
We can observe that the branch points are connected by 
the anti-Stokes curves simultaneously; 
this means that the Boutroux condition
\begin{equation}
{\rm Re} \oint_\gamma \sqrt{4x^3+2tx+u(t,\nu)} \, dx  = 0
\quad 
\text{for any closed path $\gamma$}
\end{equation}
is satisfied. 
We expect that the condition will play a role 
when we discuss the convergence of 2-parameter solutions.  
This is the reason why we choose the above parameters $t$ and $\nu$.


After drawing branch cuts (wiggly lines) 
as in Figure \ref{fig:A-and-B-cycles}, 
we take the branch so that \eqref{eq:branch-of-y} 
holds on the first sheet when $x \to \infty$ with ${\rm Re} \, x > 0$.
The symbols $\pm$ on the Stokes curves in Figure \ref{fig:A-and-B-cycles}
represent the sign of ${\rm Re} \int^{x}_{e_i} \sqrt{4x^3+2t x + u(t,\nu)} \, dx$
on the Stokes curves. 
The $A$-cycle and $B$-cycle, which satisfy
\eqref{eq:nu} and \eqref{eq:dF0-dt}
are also indicated in the same figure. 
We choose the labeling $e_A, e_B, e_{AB}$ of the branch points so that 
the following relations hold modulo $\Lambda$:
\begin{equation} \label{eq:half-periods-normalization}
\int^{e_A}_{\infty} \frac{dx}{y} \equiv \frac{\omega_A}{2},\quad
\int^{e_B}_{\infty} \frac{dx}{y} \equiv \frac{\omega_B}{2},\quad
\int^{e_{AB}}_{\infty} \frac{dx}{y} \equiv \frac{\omega_A + \omega_B}{2}.
\end{equation}

There are five asymptotic directions $\arg x = 2\ell \pi/5$  ($\ell=0, \pm 1, \pm 2$) of 
Stokes curves near the infinity.  Denote by $s_\ell$ ($\ell=0, \pm 1, \pm 2$) 
the Stokes multipliers of $(L_{\rm I})$ corresponding to these directions, 
as indicated in Figure \ref{fig:Stokes-multipliers}. 
More precisely, these constants are determined by 
\begin{equation}
(\Psi^{(\ell)}_{+}, \Psi^{(\ell)}_{-}) = 
\begin{cases}
\displaystyle (\Psi^{(\ell+1)}_{+}, \Psi^{(\ell+1)}_{-}) \cdot 
\begin{pmatrix} 1 & 0 \\ s_\ell & 1 \end{pmatrix} 
& \quad \text{for $\ell=0, \pm 2$,} \\[+1.3em]
\displaystyle (\Psi^{(\ell+1)}_{+}, \Psi^{(\ell+1)}_{-}) \cdot 
\begin{pmatrix} 1 & s_\ell \\ 0 & 1 \end{pmatrix} 
& \quad \text{for $\ell=\pm 1$,} 
\end{cases}
\end{equation}
where
\begin{equation} \label{eq:Borel-resummed-Psi}
\Psi_{\pm}^{(\ell)} = \frac{
\sum_{k \in {\mathbb Z}} e^{2\pi i k \rho/\hbar} 
\cdot Z(t,\nu+k \hbar; \hbar) \cdot \psi_{\pm}^{(\ell)}(x,t,\nu+k \hbar; \hbar)
}{
\sum_{k \in {\mathbb Z}} e^{2\pi i k \rho/\hbar} \cdot Z(t,\nu+k \hbar; \hbar) 
}
\end{equation}
with 
$\psi_{\pm}^{(\ell)}$ being the Borel sum of $\psi_\pm$ on the Stokes region $\ell$ 
specified in Figure \ref{fig:Stokes-multipliers}.
($Z$ in the right hand-side of \eqref{eq:Borel-resummed-Psi} is also understood 
to be the Borel sum. We use the same symbol for simplicity.)

To compute $s_\ell$, we employ the connection formulas 
\eqref{eq:Voros-formula-plus}-\eqref{eq:Voros-formula-minus} 
and Proposition \ref{prop:detoured-continuation}. 
For the convenience, we assign labels to the Stokes curves 
in Figure \ref{fig:Stokes-multipliers} by the following rule: 
The Stokes curve $C_{\ast, \ell}$ for $\ast \in \{A, B, AB \}$ 
and $\ell \in \{0, \pm1, \pm2\}$ is the one emanating 
from the branch point $e_{\ast}$ and flowing to $\infty$ 
with the asymptotic direction $\arg x = 2\ell \pi/5$.
The pairs of integers $(m_C, n_C)$ which are assigned 
for the Stokes curves by \eqref{eq:half-period-along-C} 
are summarized in Table \ref{table:pair-of-integers-for-C}.
These integers can be computed from the intersection numbers 
of the detoured path (regarded as a relative homology class 
on the spectral curve) and the $A$-cycle and $B$-cycle.

\begin{figure}[t]
  \begin{center}
   \includegraphics[width=100mm]{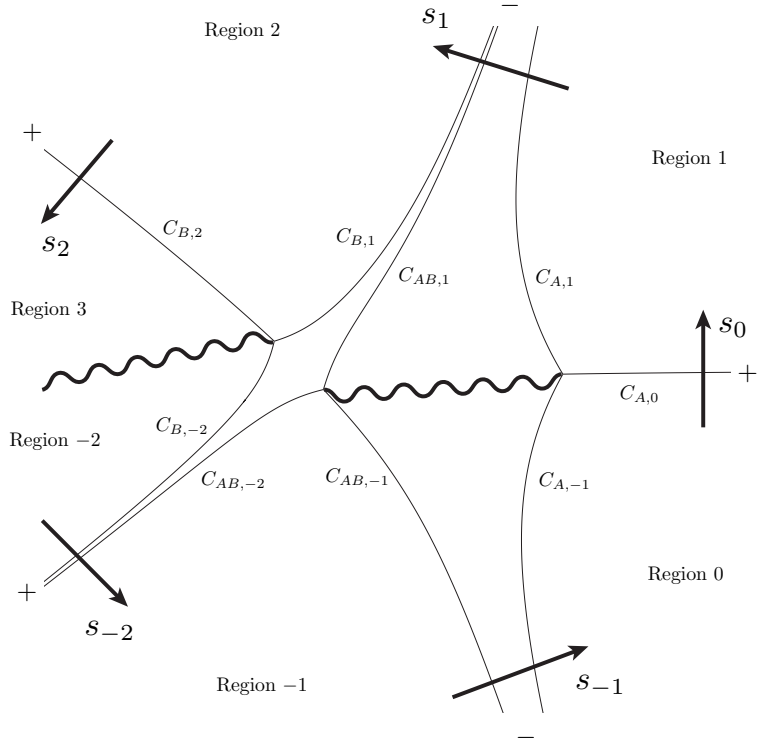} 
  \end{center}
   \caption{The Stokes multipliers of $(L_{\rm I})$ around $x=\infty$. }
  \label{fig:Stokes-multipliers}
\end{figure}

\begin{table}[t]
  \begin{tabular}{|c||c|c|c|c|c|c|c|c|c|c|} \hline
    Stokes curve $C$
    & $C_{A,0}$ & $C_{A,1}$ & $C_{A,-1}$ & $C_{B,1}$ & $C_{B,2}$ & $C_{B,-2}$ 
    & $C_{AB,1}$ & $C_{AB,-1}$ & $C_{AB,-2}$ \\ \hline \hline
    $(m_C, n_C)$
    & $(1,0)$ & $(1,0)$ & $(1,0)$ & $(0,-1)$ & $(0,-1)$ & $(0,1)$
    & $(1,-1)$ & $(1,1)$ & $(1,1)$ \\ \hline
  \end{tabular} 
   \vspace{+1.em}
     \caption{The pair $(m_C, n_C)$ for Stokes curves in Figure \ref{fig:Stokes-multipliers}.}
     \label{table:pair-of-integers-for-C}
  \end{table}

\smallskip
Under the above preparation, 
let us compute the Stokes multipliers.

\vspace{+.5em} \noindent
$\bullet$ \underline{\bf Computation of $s_0$.} \, 
Since the sign of the Stokes curve $C_{A,0}$ is plus, the WKB solution 
$\psi_+$ is dominant. Hence, the Voros formula implies
\begin{equation}
\begin{cases}
\psi^{(0)}_{+}(x,t,\nu;\hbar) & = \quad 
\psi^{(1)}_{+}(x,t,\nu;\hbar) + \tilde{\psi}^{(1)}_{-}(x,t,\nu;\hbar) \\[+.5em]
\psi^{(0)}_{-}(x,t,\nu;\hbar) & = \quad \psi^{(1)}_{-}(x,t,\nu;\hbar).
\end{cases}
\end{equation}
Since $(m_{C_{A,0}}, n_{C_{A,0}}) = (1,0)$ as shown 
in Table \ref{table:pair-of-integers-for-C}, 
Proposition \ref{prop:detoured-continuation} implies 
that the analytic continuation $\tilde{\psi}_{-}$ 
along the detoured path encircling $e_A$ is given by 
\begin{equation}
\tilde{\psi}^{(1)}_{-}(x,t,\nu;\hbar) = 
i \cdot e^{2\pi i \nu/\hbar} \cdot \psi^{(1)}_{-}(x,t,\nu; \hbar). 
\end{equation} 
By shifting $\nu \mapsto \nu + k \hbar$ and taking 
the discrete Fourier transform, we have
\begin{align}
\Psi_{+}^{(0)}(x,t,\nu,\rho; \hbar) & = 
\frac{
\sum_{k \in {\mathbb Z}} e^{2\pi i k \rho/\hbar} 
\cdot Z(t,\nu+k \hbar; \hbar) \cdot \psi_{+}^{(0)}(x,t,\nu+k \hbar; \hbar)
}{
\sum_{k \in {\mathbb Z}} e^{2\pi i k \rho/\hbar} \cdot Z(t,\nu+k \hbar; \hbar) 
} \notag \\
& = 
\frac{
\sum_{k \in {\mathbb Z}} e^{2\pi i k \rho/\hbar} 
\cdot Z(t,\nu+k \hbar; \hbar) \cdot \psi_{+}^{(1)}(x,t,\nu+k \hbar; \hbar)
}{
\sum_{k \in {\mathbb Z}} e^{2\pi i k \rho/\hbar} \cdot Z(t,\nu+k \hbar; \hbar) 
} \notag \\ 
& \quad +
\frac{
\sum_{k \in {\mathbb Z}} e^{2\pi i k \rho/\hbar} 
\cdot Z(t,\nu+k \hbar; \hbar) \cdot i \cdot 
e^{2 \pi i (\nu+k \hbar)/\hbar } \cdot  \psi_{-}^{(1)}(x,t,\nu+k \hbar; \hbar)
}{
\sum_{k \in {\mathbb Z}} e^{2\pi i k \rho/\hbar} \cdot Z(t,\nu+k \hbar; \hbar) 
} \notag \\ 
& = \Psi_{+}^{(1)}(x,t,\nu,\rho; \hbar) 
+ i \cdot e^{2 \pi i \nu/\hbar} \cdot \Psi_{-}^{(1)}(x,t,\nu,\rho; \hbar),
\end{align}
and 
\begin{equation}
\Psi_{-}^{(0)}(x,t,\nu,\rho; \hbar)  = \Psi_{-}^{(1)}(x,t,\nu,\rho; \hbar).
\end{equation}
We have used the periodicity 
$e^{2 \pi i (\nu+k \hbar)/\hbar} = e^{2\pi i \nu /\hbar }$ of the exponential function.
From this computation, we conclude 
\begin{equation}
s_0 = i \cdot e^{2\pi i \cdot \nu/\hbar}.
\end{equation}

\vspace{+.5em}\noindent
$\bullet$ \underline{Computation of $s_{-1}$.}  
In this case, we need to cross two Stokes curves 
$C_{AB, -1}$ and $C_{A, -1}$ passing an intermediate 
Stokes region between them. 
Both Stokes curves have the minus sign, so the WKB solution 
$\psi_{-}$ is dominant. 

Let us denote by $\psi_{\pm}^{(-1/2)}$ the Borel sum 
of the WKB solution $\psi_{\pm}$ on the intermediate 
Stokes region. Then, the Voros formula and 
Proposition \ref{prop:detoured-continuation} imply 
\begin{equation}
\begin{cases}
\psi^{(-1)}_{+}(x,t,\nu;\hbar) &=\quad \psi^{(-1/2)}_{+}(x,t,\nu;\hbar)
\\[+.5em]
\psi^{(-1)}_{-}(x,t,\nu;\hbar) &=\quad \psi^{(-1/2)}_{-}(x,t,\nu;\hbar) 
+ \tilde{\psi}^{(-1/2)}_{+}(x,t,\nu;\hbar)
\end{cases}
\end{equation} 
holds on the Stokes curve $C_{AB,-1}$ with 
\begin{equation}
\tilde{\psi}^{(-1/2)}_{+}(x,t,\nu;\hbar) =  
- i \cdot e^{- 2 \pi i \nu/\hbar} \cdot 
\frac{Z(t,\nu-\hbar;\hbar)}{Z(t,\nu; \hbar)} \cdot 
\psi^{(-1/2)}_{+}(x,t,\nu- \hbar; \hbar).
\end{equation}
Similarly, on the Stokes curve $C_{A,-1}$, we have
\begin{equation}
\begin{cases}
\psi^{(-1/2)}_{+}(x,t,\nu;\hbar) &=\quad \psi^{(0)}_{+}(x,t,\nu;\hbar) 
\\[+.5em]
\psi^{(-1/2)}_{-}(x,t,\nu;\hbar) &=\quad \psi^{(0)}_{-}(x,t,\nu;\hbar) 
+ \tilde{\psi}^{(0)}_{+}(x,t,\nu;\hbar)
\end{cases}
\end{equation} 
with 
\begin{equation}
\tilde{\psi}^{(0)}_{+}(x,t,\nu;\hbar) =  
i \cdot e^{- 2 \pi i \nu/\hbar} \cdot 
\psi^{(0)}_{+}(x,t,\nu; \hbar).
\end{equation}
Summarizing these formulas, we have
\begin{align}
\psi^{(-1)}_{+}(x,t,\nu;\hbar) & = \psi^{(0)}_{+}(x,t,\nu;\hbar) \\
\psi^{(-1)}_{-}(x,t,\nu;\hbar) & = \psi^{(0)}_{-}(x,t,\nu;\hbar)
- i \cdot e^{- 2 \pi i \nu/\hbar} \cdot 
\frac{Z(t,\nu-\hbar;\hbar)}{Z(t,\nu; \hbar)} \cdot 
\psi^{(0)}_{+}(x,t,\nu - \hbar; \hbar)
\notag  \\ 
& \qquad 
+ i \cdot e^{- 2 \pi i \nu/\hbar} \cdot 
\psi^{(0)}_{+}(x,t,\nu; \hbar).
\end{align}
Again, by taking the discrete Fourier transform, we have
\begin{equation}
\Psi^{(-1)}_{+}(x,t,\nu,\rho;\hbar) = \Psi^{(0)}_{+}(x,t,\nu,\rho;\hbar)
\end{equation}
and 
\begin{align}
\Psi^{(-1)}_{-}(x,t,\nu,\rho;\hbar) & = 
\frac{
\sum_{k \in {\mathbb Z}} e^{2\pi i k \rho/\hbar} 
\cdot Z(t,\nu+k \hbar; \hbar) \cdot \psi_{-}^{(0)}(x,t,\nu+k \hbar; \hbar)
}{
\sum_{k \in {\mathbb Z}} e^{2\pi i k \rho/\hbar} \cdot Z(t,\nu+k \hbar; \hbar) 
}  
\notag \\
& \quad 
- \frac{
\sum_{k \in {\mathbb Z}} e^{2\pi i k \rho/\hbar}  
\cdot Z(t,\nu+k \hbar-\hbar; \hbar) 
\cdot  i \cdot e^{-2\pi i (\nu+k \hbar)/ \hbar} 
\cdot \psi_{+}^{(0)}(x,t,\nu + k \hbar -\hbar; \hbar)
}{
\sum_{k \in {\mathbb Z}} e^{2\pi i k \rho/\hbar} \cdot Z(t,\nu+k \hbar; \hbar) 
}  \notag \\
& \quad 
+ 
\frac{
\sum_{k \in {\mathbb Z}} e^{2\pi i k \rho/\hbar} 
\cdot Z(t,\nu+k \hbar; \hbar) \cdot i \cdot 
e^{-2\pi i (\nu+k \hbar)/ \hbar} \cdot  \psi_{+}^{(0)}(x,t,\nu+k \hbar; \hbar)
}{
\sum_{k \in {\mathbb Z}} e^{2\pi i k \rho/\hbar} \cdot Z(t,\nu+k \hbar; \hbar) 
}  \notag \\
& = \Psi^{(0)}_{-}(x,t,\nu,\rho;\hbar) + 
i \cdot ( - e^{- 2 \pi i (\nu - \rho)/\hbar} + e^{- 2 \pi i \nu / \hbar} )
\cdot \Psi^{(0)}_{+}(x,t,\nu,\rho;\hbar). 
\end{align}
Thus we have
\begin{equation}
s_{-1} = 
i \cdot ( - e^{- 2 \pi i (\nu - \rho)/\hbar} + e^{- 2 \pi i \nu / \hbar} ).
\end{equation}

\vspace{+.5em}\noindent
$\bullet$ \underline{\bf Computation of other Stokes multipliers.} \, 
The above computations imply that the Voros formulas and 
Proposition \ref{prop:detoured-continuation} are enough to calculate 
the Stokes multipliers. The general formula 
is given as follows:
\begin{equation} \label{eq:general-formula-for-Stokes-multipliers}
s_\ell = i \cdot \sum_{C} \, (-1)^{m_C n_C} \cdot
e^{ {\rm sign}(C) \cdot 2 \pi i (m_C \nu - n_C \rho)/ \hbar}. 
\end{equation}
Here the summation is taken over all Stokes curves which 
have $\arg x = 2 \ell \pi/5$ as the asymptotic direction, 
and ${\rm sign}(C) \in \{ \pm 1 \}$  is the sign for the Stokes curve $C$
which specifies the dominant WKB solution on $C$.
In summary, we have the following table of the Stokes multipliers 
in Figure \ref{fig:Stokes-multipliers}:
\begin{align} \label{eq:Stokes-multipliers-1}
\begin{cases}
s_{-2}  =  i \cdot ( e^{- 2 \pi i \rho/\hbar} - e^{2\pi i (\nu - \rho)/\hbar}), \\
s_{-1}  = i \cdot ( - e^{- 2 \pi i (\nu - \rho)/\hbar} + e^{- 2 \pi i \nu / \hbar} ), \\
s_0 \hspace{+.6em} = i \cdot e^{2 \pi i \nu/\hbar}, \\
s_1 \hspace{+.6em} = i \cdot 
( e^{-2 \pi i \nu/\hbar} - e^{-2 \pi i (\nu+\rho)/\hbar} 
+ e^{-2 \pi i \rho/\hbar}), \\
s_2 \hspace{+.6em} = i \cdot e^{2\pi i \rho/\hbar}.
\end{cases}
\end{align}

Let us give two remarks. First, the Stokes multipliers we have computed 
are independent of $t$. This is consistent with the fact that 
$\Psi_{\pm}$ satisfies the isomonodromy system $(L_{\rm I})$-$(D_{\rm I})$
given in \eqref{eq:JM-1}-\eqref{eq:JM-2}. 
Second, $s_{\ell}$'s  satisfy the cyclic relation: 
\begin{equation} \label{eq:cyclic-relation}
\begin{pmatrix} 1 & 0 \\ s_{2} & 1 \end{pmatrix} \cdot 
\begin{pmatrix} 1 & s_{1} \\ 0 & 1 \end{pmatrix} \cdot
\begin{pmatrix} 1 & 0 \\ s_{0} & 1 \end{pmatrix} \cdot
\begin{pmatrix} 1 & s_{-1} \\ 0 & 1 \end{pmatrix} \cdot
\begin{pmatrix} 1 & 0 \\ s_{-2} & 1 \end{pmatrix}  
= 
\begin{pmatrix} 0 & i \\ i & 0 \end{pmatrix}. 
\end{equation}
In other words, they satisfy the relations 
\begin{equation}
1 + s_{\ell-1} \cdot s_{\ell} + i \cdot s_{\ell+2} = 0, \quad s_{\ell+5} = s_{\ell},
\end{equation}
which are essentially the cluster exchange relations 
in the $A_2$ cluster algebra, 
or the defining equation of the monodromy space, or the wild character variety
(c.f., \cite{FIKN, SvdP}). 
These are supporting evidence 
for the validity of our computation.

\section{Conclusion and open problems}
\label{section:conclusion}

In this article, we have constructed a 2-parameter 
family of (formal) $\tau$-function 
for the first Painlev\'e equation $(P_{\rm I})$
as the discrete Fourier transform of the 
topological recursion partition function 
applied to a family of genus $1$ spectral curves. 
We also obtained a solution of isomonodromy system 
associated with $(P_{\rm I})$ with the aid of 
techniques used in literature of quantum curves.

We summarize several questions and   
open problems in the following list. 

\begin{enumerate}
\renewcommand{\theenumiii}{\Roman{enumiii}}
\item 
We expect that the results obtained in this paper  
for Painlev\'e I can be generalized to 
other Painlev\'e equations including a class of 
higher order analogues.  
The Fourier series structure also appears 
in the higher order Painlev\'e equations 
(see \cite{Gav, GavIL} for example).
In a recent article \cite{MO19} by Marchal-Orantin, 
a systematic approach to rank 2 isomonodromy systems 
via the topological recursion is established. 
These works seem to be good references for 
the direction.

\item 
Aoki-Kawai-Takei also constructed a 
2-parameter family of WKB-theoretic formal solutions 
of Painlev\'e equations (\cite{AKT-P}); see also 
\cite{ASV, GIKM, Yoshida} for $\hbar=1$ case.
More generally, full-parameter formal solutions are also
constructed for a class of higher order 
Painlev\'e equations (Painlev\'e hierarchies) in \cite{AHU} etc. 
Their solution contains infinitely many exponential terms, 
but it is not directly related to $\theta$-functions or elliptic functions.
It is interesting to find a relationship between 
our 2-parameter solution and theirs. 

Kawai-Takei also introduced the notion of 
(non-linear version of)
turning points and Stokes curves for the 
Painlev\'e equations in \cite{KT-PI}, 
and proved that their 2-parameter formal solution 
of Painlev\'e II -- VI can be reduced to that 
of Painlev\'e I near each simple turning point 
by a certain change of variables
(\cite{KT-PI, KT-PIII}; see also \cite{I15}). 
These results should have a counterpart 
in the framework developed in this paper.

\item 
The computational method of the Stokes multipliers 
presented in \S \ref{section:Exact-WKB} contains several 
heuristic arguments: 
Borel summability of the WKB series, 
Voros connection formulas, convergence of the Fourier series, 
are the open problems to be solved. 

Borel summability is proved for some special cases:
Kamimoto-Koike (\cite{Kamimoto-Koike}) proved the Borel summability
of $0$-parameter solutions of classical six Painlev\'e equations 
on compliments of Stokes curves in the sense of Kawai-Takei (\cite{KT-PI}).
See also the work by Costin (\cite{Costin-P}) which includes 
the Borel summability of $1$-parameter (trans-series solution)
of Painlev\'e equations with $\hbar=1$.

Moreover, in articles \cite{ASV, David, FIKN, Kap, Kap-Kit, LR, Takei} etc., 
connection problem (non-linear Stokes phenomenon) of Painlev\'e equations 
are discussed. Our approach to the direct monodromy problem should be 
compared to the previous results.

\item 
The Stokes graphs and trajectories of quadratic differentials
are also used in the study of the space of Bridgeland stability 
conditions (\cite{Bri}) for a class of Calabi-Yau 3 categories 
of quiver representations (\cite{Br-Sm}). 
In \cite{Sutherland1} and \cite{Sutherland2},  
Sutherland studied the stability conditions 
for a class of quivers, called ``Painlev\'e quivers". 
In the case of Painlev\'e I, the relevant quadratic differential 
is equivalent to the one used in \S \ref{section:Exact-WKB} 
(whose associated quiver is of type $A_2$). 
In this picture, the variation of $t$
(i.e., deformation of the spectral curve) 
is related to the variation of stability conditions.
It seems to be interesting to investigate a relationship 
between the analytic continuation of Painlev\'e transcendents 
and the variation of stability conditions (wall-crossings) 
for the corresponding Painlev\'e quivers. 
The work \cite{LR} by Lisovyy-Roussillon pointed that 
the dilogarithm identity appears as a consistency condition
in the connection problem for Painlev\'e I.


\item
In the ($c=1$ Virasoro) conformal field theoretic 
construction of $\tau$-function for Painlev\'e VI 
by Gamayun-Iorgov-Lisovyy (\cite{GIL}),
the expansion coefficients of 2-parameter
$\tau$-function are described by an 
explicit and combinatorial formula, thanks to 
the AGT correspondence \cite{AGT} 
and Nekrasov' formula \cite{Nekrasov}. 
In the works \cite{GL, GL2} of Gavrylenko-Lisovyy, 
the same explicit formulas 
for Painlev\'e VI and III are obtained through 
an explicit calculation of the Fredholm determinant 
for the associated Riemann-Hilbert problem 
(see also \cite{CGL}). 
It would be great if we obtain such 
an explicit formula for Painlev\'e I.

\item
Conjectural expressions of 2-parameter 
$\tau$-functions for Painlev\'e II -- Painlev\'e V 
as the discrete Fourier transform of irregular 
conformal blocks are discussed by the works \cite{Nagoya1, Nagoya2} 
of Nagoya. However, finding such an expression is 
an open problem for Painlev\'e I case.


\item 
Theorem \ref{thm:tau-function-theorem} shows 
in particular that our formal $\tau$-function \eqref{eq:tau-function} 
satisfies the Hirota-type bilinear identity \eqref{eq:Hirota-Painleve-1}. 
In \cite{Ber-Shch, BS}, Bershtein-Shchechkin gave an interesting observation; 
the Hirota-type identities arising from a class of 
($q$-)Painlev\'e equations can be interpreted as the
Nakajima-Yoshioka's blow-up equation (\cite{NY03, NY04}).
It seems to be interesting to give such an interpretation 
(from the instanton counting) for the Hirota-type equation 
\eqref{eq:Hirota-Painleve-1}.


\end{enumerate}


\appendix

\section{Weierstrass functions and $\theta$-functions}
\label{appendix:Weierstrass}

Here we summarize properties of 
Weierstrass elliptic functions and 
$\theta$-functions which are relevant in this paper.
See \cite{WW} for more details.

\subsection{Weierstrass functions}

\subsubsection{Weierstrass $\wp$-function}

Let 
\begin{equation}
\omega_A := \oint_{A} \frac{dx}{\sqrt{4x^3 - g_2 x - g_3}},\quad  
\omega_B := \oint_{B} \frac{dx}{\sqrt{4x^3 - g_2 x - g_3}}
\end{equation}
be the periods of smooth elliptic curve $\Sigma$ defined by $y^2 = 4x^3 - g_2 x - g_3$.
Here ${A}, {B}$ are generators of 
the first homology group $H_1(\Sigma, \bZ)$. 
We assume that $\tau := \omega_B/\omega_A$ has a positive imaginary part.
The coefficients $g_2$, $g_3$ are related to these periods by 
\begin{equation}
g_2 = 60 \sum_{\omega \in \Lambda \setminus\{0\}} \frac{1}{\omega^4}, 
\qquad 
g_3 = 140 \sum_{\omega \in \Lambda \setminus\{0\}} \frac{1}{\omega^6}.
\end{equation}
Here $\Lambda = {\mathbb Z} \cdot \omega_A + {\mathbb Z} \cdot \omega_B$ 
be the lattice generated by the two complex numbers $\omega_A$ and $\omega_B$. 

Under the notations, the Weierstrass elliptic function
$\wp(z) (= \wp(z;g_2,g_3))$ is defined by 
\begin{equation}
\wp(z) := \frac{1}{z^2} + 
\sum_{\omega \in \Lambda \setminus \{0\}} 
\left( \frac{1}{(z-\omega)^2} - \frac{1}{\omega^2} \right).
\end{equation}
It is also constructed as the inverse function of the elliptic integral
\begin{equation} \label{eq:inverse-function-z}
z(x) := \int^{x}_{\infty} \frac{dx}{\sqrt{4x^3-g_2 x -g_3}},
\end{equation}
and hence $\wp(z)$ is doubly-periodic function 
with periods $\omega_A$ and $\omega_B$. 
It has double pole at 
$z = m \, \omega_A + n \, \omega_B$ for any $(m,n) \in \bZ^2$.
We also note that $\wp(z)$ is an even function; 
that is $\wp(-z) = \wp(z)$.

The Weierstarss $\wp$-function satisfies 
the following non-linear ODE:
\begin{equation} \label{eq:Weierstrass-equation}
\left( \frac{d\wp}{dz}(z) \right)^2 = 
4 \wp(z)^3 - g_2 \, \wp(z) - g_3.
\end{equation}
Thus the $\wp$-function is used to parametrize 
elliptic curves. Moreover, differentiating the relation, 
we also have
\begin{equation} \label{eq:Weierstrass-equation-2}
\frac{d^2\wp}{dz^2}(z) = 6 \wp(z)^2 - \frac{g_2}{2}.
\end{equation}

\subsubsection{Weierstrass $\zeta$-function}
We also introduce the Weierstrass $\zeta$-function
\begin{equation} \label{eq:zeta-def}
\zeta(z) := \frac{1}{z} + \sum_{\omega \in \Lambda \setminus \{0\}} 
\left( \frac{1}{z-\omega} + \frac{1}{\omega} + \frac{z}{\omega^2} \right),
\end{equation}
which satisfies $- \zeta'(z) = \wp(z)$. 
The function $\zeta(z)$ is not doubly-periodic, 
but it satisfies  
\begin{equation} \label{eq:zeta-periods}
\zeta(z + m \, \omega_A + n \, \omega_B) = 
\zeta(z) + m \, \eta_A + n  \, \eta_B
\quad (m,n \in {\mathbb Z}).
\end{equation}
The constants $\eta_{A}$ and $\eta_{B}$ are 
also expressed as elliptic integral (of the second kind):
\begin{equation}
\eta_{\ast} := - \oint_{\ast} 
\frac{x dx}{\sqrt{4x^3-g_2 x -g_3}}
= \frac{1}{2} \, \zeta\left( \frac{\omega_\ast}{2} \right)  \quad (\ast \in \{A, B \}).
\end{equation}
The Riemann bilinear identity shows
\begin{equation} \label{eq:R-B-Id}
\eta_A \cdot \omega_B - \eta_B \cdot \omega_A = 2\pi i.
\end{equation}

\subsubsection{Weierstrass $\sigma$-function}
The integral of the Weierstrass zeta function 
is expressed as the logarithm of the Weierstrass $\sigma$-function defined by 
\begin{equation} \label{eq:sigma-def}
\sigma(z) := z \cdot \prod_{\omega \in \Lambda \setminus \{ 0\}} 
\left(1 - \frac{z}{\omega} \right) \cdot
\exp\left( \frac{z}{\omega} + \frac{z^2}{2 \omega^2} \right).
\end{equation}
This satisfies $d \log \sigma(z) /dz = \zeta(z)$. 
$\sigma$-function possesses the following quasi-periodicity:
\begin{equation} \label{eq:quasi-periodicitiy-sigma-general}
\sigma(z+m\,\omega_A + n\, \omega_B) = 
(-1)^{m+n+mn} \cdot \exp\left( (m\,\eta_A + n\, \eta_B) \cdot 
\Bigl(z + \frac{m\,\omega_A + n\, \omega_B}{2} \Bigr) \right)
\cdot \sigma(z) \quad
(m,n \in {\mathbb Z}).
\end{equation}
It is also known that the $\sigma$-function also satisfies the addition formula:
\begin{equation}
\label{eq:addition-sigma}
\frac{\sigma(z+w) \cdot \sigma(z-w)}{\sigma(z)^2 \cdot \sigma(w)^2} = \wp(w) - \wp(z).
\end{equation}

\subsection{$\theta$-functions}
\label{subsection:theta-functions}

The Riemann $\theta$-function is defined by
\begin{equation}
\theta(v,\tau) := \sum_{k \in {\mathbb Z}} e^{2\pi i k v + \pi i k^2 \tau}.
\end{equation}
This is known to convergent uniformly on 
$(v,\tau) \in {\mathbb C} \times {\mathbb H}$. 
We also use the $\theta$-functions with characteristics:
\begin{align}
\theta_{00}(v,\tau) := \theta(v,\tau) 
& = \sum_{k \in {\mathbb Z}} e^{2\pi i k v + \pi k^2 \tau} \\
\theta_{01}(v,\tau) & := 
\sum_{k \in {\mathbb Z}} e^{2\pi i k \bigl( v+\frac{1}{2} \bigr) + \pi k^2 \tau} \\
\theta_{10}(v,\tau) & := 
\sum_{k \in {\mathbb Z}} e^{2\pi i \bigl(k+\frac{1}{2}\bigr) v + 
 \pi \bigl(n+\frac{1}{2}\bigr)^2 \tau} \\
\theta_{11}(v,\tau) & := 
\sum_{k \in {\mathbb Z}} e^{2\pi i \bigl(k+\frac{1}{2}\bigr)
 \bigl(v+\frac{1}{2}\bigr) + \pi \bigl(k+\frac{1}{2}\bigr)^2 \tau}. 
\end{align}
The parity of these functions are
\begin{equation}
\theta_{00}(-v,\tau) = \theta_{00}(v,\tau),\quad
\theta_{01}(-v,\tau) = \theta_{01}(v,\tau),\quad
\theta_{10}(-v,\tau) = \theta_{10}(v,\tau),\quad
\theta_{11}(-v,\tau) = -\theta_{11}(v,\tau).
\end{equation}
The relation to the Weierstrass $\sigma$-function is given as 
\begin{equation}
\sigma(z) = \exp\left( \frac{\eta_A}{2 \omega_A} \cdot z^2 \right) 
\cdot \frac{\omega_A}{\theta_{11}'(0,\tau)} \cdot 
\theta_{11}\left( \frac{z}{\omega_A},\tau \right).
\end{equation}
By taking the logarithm derivative, we have
\begin{equation} \label{eq:wp-and-theta-00}
\wp(z) = 
- \frac{\eta_A}{\omega_A} - \frac{1}{\omega_A^2} \cdot
\left[ \frac{\p^2}{\p v^2} \log  \theta_{11}(v,\tau) 
\right]_{v = \frac{z}{\omega_A}}
= 
- \frac{\eta_A}{\omega_A} - \frac{1}{\omega_A^2} \cdot
\left[\frac{\p^2}{\p v^2} \log  \theta_{00}(v,\tau) 
\right]_{v = \frac{z}{\omega_A} - \frac{1}{2} - \frac{\tau}{2}}.
\end{equation}
The last equality follows from the relation
\begin{equation}
\theta_{00}(v,\tau) = - e^{\pi i v + \frac{i \pi \tau}{4}} \cdot 
\theta_{11}\Bigl( v+\frac{1}{2} + \frac{\tau}{2},\tau \Bigr). 
\end{equation}

\section{Proof of Theorem \ref{thm:BPZ-type-equation}}
\label{appendix:quantum-curve-theorem}

Here we give a proof of Theorem \ref{thm:BPZ-type-equation} 
which plays an important role in the proof of our main result.
We use the same notation used in \S \ref{subsection:def-of-correlators}. 
(For example, the symbol $z_{[\hat{j}]}$ for $j = 1,\dots, n$ 
means the $(n-1)$-tuple of variables 
$(z_1, \dots, \hat{z}_{j}, \dots, z_n)$ without $j$-th entry.) 

\begin{lem} \label{lem:diff-rec}
The function $F_{g,n}(z_1, \dots, z_n)$ defined 
in \eqref{eq:open-Fgn}
satisfies the following equality for $2g-2+n \ge 1$:
\begin{equation}  \label{eq:diff-rec} %
\frac{\p F_{g,n}}{\p z_{1}}(z_{1},\dots,z_{n}) 
= 
G_{g,n}(z_1, \dots, z_n)
- \frac{1}{\omega_A} \oint_{z \in A} 
\frac{R_{g,n}(z,z_2,\dots,z_n)}{(y(z) - y(\bar{z})) \cdot dx(z)},
\end{equation}  
with $G_{g,n}$ and $R_{g,n}$ being given as follows:
\begin{itemize}
\item 
For $2g-2+n = 1$, we set 
\begin{align}
G_{0,3}(z_1,z_2,z_3) & := 
- \sum_{j=2}^{3} \bigl( P(z_1+z_j) - P(z_1-z_j) \bigr) \notag \\
& \qquad \times
\biggl( \frac{1}{2y(z_{1}) \cdot \frac{dx}{dz}(z_{1})} \cdot 
\frac{\p F_{0,2}}{\p z_{1}}(z_{[\hat{j}]}) 
 - \frac{1}{2y(z_{j}) \cdot \frac{dx}{dz}(z_{j})} \cdot
\frac{\p F_{0,2}}{\p z_{j}}(z_{[\hat{1}]}) \biggr) \notag   \\
& \quad + \frac{1}{y(z_1) \cdot \frac{dx}{dz}(z_1)} 
\cdot \frac{\p F_{0,2}}{\p z_1}(z_1, z_2) \cdot \frac{\p F_{0,2}}{\p z_1}(z_1, z_3), 
\\[+.5em]
G_{1,1}(z_1) & := - \frac{1}{2y(z_{1}) \cdot \frac{dx}{dz}(z_{1})} 
\cdot \frac{\p^2}{\p u_1 \p u_2} F_{0,2}(u_1, u_2) 
\biggl|_{u_{1}=u_{2}=z_{1}},
\end{align}
with $P(z)$ being given in \eqref{eq:Pz-function}, 
and for $2g-2+n \ge 2$, we set 
\begin{align}
& G_{g,n}(z_1, \dots, z_n) \notag \\ 
& \quad := - \sum_{j=2}^{n} \bigl( P(z_1+z_j) - P(z_1-z_j) \bigr) \notag \\
& \qquad \times
\biggl( \frac{1}{2y(z_{1}) \cdot \frac{dx}{dz}(z_{1})} \cdot 
\frac{\p F_{g,n-1}}{\p z_{1}}(z_{[\hat{j}]}) 
 - \frac{1}{2y(z_{j}) \cdot \frac{dx}{dz}(z_{j})} \cdot
\frac{\p F_{g,n-1}}{\p z_{j}}(z_{[\hat{1}]}) \biggr)  
\notag \\
& \qquad - \frac{1}{2y(z_{1}) \cdot \frac{dx}{dz}(z_{1})} \cdot 
\frac{\p^{2}}{\p u_{1} \p u_{2}}
\biggl( F_{g-1,n+1}(u_{1},u_{2},z_{[\hat{1}]}) 
\notag \\
& \hspace{+10.em}  + 
\sum_{\substack{g_{1}+g_{2}=g \\ I \sqcup J = [\hat{1}]}}^{\rm stable} 
F_{g_{1}, |I|+1}(u_{1},z_{I}) \cdot F_{g_{2}, |J|+1}(u_{2}, z_{J}) 
\biggr)\Biggr|_{u_{1}=u_{2}=z_{1}}. 
\end{align}

\item
$R_{g,n}(z,z_{2},\dots,z_{n})$ is a quadratic differential 
in the variable $z$ (and functions of other variables $z_2, \dots, z_n$) 
defined by
\begin{align}
R_{0,3}(z, z_2, z_3) & := 
\biggl(\int^{z_{2}}_{0} W_{0,2}(z,z_{2}) \biggr) \cdot 
\biggl(\int^{z_{3}}_{0} W_{0,2}(\bar{z},z_{3}) \biggr) \notag \\
& \qquad + 
\biggl(\int^{z_{3}}_{0} W_{0,2}(z,z_{3}) \biggr) \cdot 
\biggl(\int^{z_{2}}_{0} W_{0,2}(\bar{z},z_{2}) \biggr), 
\\
R_{1,1}(z) & := W_{0,2}(z, \bar{z}), 
\end{align}
for $2g-2+n = 1$, and 
\begin{align}  
& R_{g,n}(z,z_{2},\dots,z_{n})  \notag \\ 
& := \sum_{j=2}^{n} 
\Biggl[ \biggl(\int^{z_{j}}_{0} W_{0,2}(z,z_{j}) \biggr) \cdot
\biggl(\int^{z_{[\hat{1}, \hat{j}]}}_{0}
W_{g,n-1}(\bar{z}, z_{[\hat{1},\hat{j}]}) \biggr) 
\notag \\ 
& \hspace{+7.em} + \biggl(\int^{z_{j}}_{0}W_{0,2}(\bar{z},z_{j}) \biggr)  \cdot
\biggl(\int^{z_{[\hat{1}, \hat{j}]}}_{0}
W_{g,n-1}(z, z_{[\hat{1},\hat{j}]}) 
\biggr) \Biggr] 
\notag \\
& \quad +  \int^{z_{[\hat{1}]}}_{0}
W_{g-1,n+1}(z,\bar{z},z_{[\hat{1}]})  
 +  \sum_{\substack{g_{1}+g_{2}=g \\ 
 I \sqcup J = [\hat{1}]}}^{\text{\rm stable}} 
\biggl( \int^{z_{I}}_{0} 
W_{g_{1}, |I|+1}(z,z_{I}) \biggr) \cdot
\biggl( \int^{z_{J}}_{0} 
W_{g_{2}, |J|+1}(\bar{z}, z_{J})
\biggr) \hspace{-3.em} \notag \\[-1.em]
\end{align}
for $2g-2+n \ge 2$. 
Here, for a set $L =\{\ell_{1}, \dots, \ell_{k} \} 
\subset \{1,\dots,n \}$ of indices, we have used the notation
\begin{equation}
\int^{z_{L}}_{0} W_{g,n}(z_{1},\dots,z_{n}) := 
\int^{z_{\ell_{1}}}_{0} \cdots 
\int^{z_{\ell_{k}}}_{0} W_{g,n}(z_{1},\dots,z_{n}).
\end{equation}
\end{itemize}
\end{lem}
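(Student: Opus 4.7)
The plan is to derive \eqref{eq:diff-rec} from the topological recursion \eqref{eq:top-rec} by integrating in the outer variables and then deforming the contours via the residue theorem on the torus. Integrating \eqref{eq:top-rec} against $\int_{0}^{z_2}\cdots\int_{0}^{z_n}$ turns the bracket on the right-hand side into the quadratic differential $R_{g,n}(z,z_2,\dots,z_n)$ of the statement (this is precisely how $R_{g,n}$ is defined), and extracting the $dz_1$-coefficient yields
\begin{equation}
 dz_1 \cdot \frac{\partial F_{g,n}}{\partial z_1}(z_1,\dots,z_n) = \sum_{j=1}^{3} \frac{1}{2\pi i} \oint_{z \in \gamma_j} K(z_1,z) \cdot R_{g,n}(z,z_2,\dots,z_n).
\end{equation}
A small preliminary point is that the Bergman-kernel integrals inside $R_{g,n}$ do not introduce extra poles inside the small disks bounded by $\gamma_j$, which allows the exchange of integrations. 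This follows from the regularity of $W_{g,n}$ ($2g-2+n \geq 1$) away from ramification points and of $W_{0,2}$ away from the diagonal.

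The core step is a residue computation for the $1$-form $K(z_1,z) \cdot R_{g,n}(z,\dots)$ in $z$ on the fundamental domain $\Omega$. Using the quasi-periods $P(z+\omega_A) = P(z)$ and $P(z+\omega_B) = P(z) + 2\pi i/\omega_A$ from \eqref{eq:zeta-periods}, one verifies that $K(z_1,z)$ is $A$-periodic in $z$ while
\begin{equation}
 K(z_1,z+\omega_B) - K(z_1,z) = \frac{2\pi i/\omega_A}{(y(z)-y(\bar z)) \cdot dx(z)} \cdot dz_1.
\end{equation}
Since $R_{g,n}$ is doubly periodic in $z$, the $\omega_B$-direction edges of $\partial \Omega$ cancel by $A$-periodicity, and the remaining pair contributes a single $A$-cycle term $-\tfrac{2\pi i}{\omega_A} dz_1 \oint_A R_{g,n}/((y-\bar y)dx)$. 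Applying the residue theorem then converts $\sum_j \oint_{\gamma_j}$ into the negative of the residues at the remaining poles of the integrand plus this $A$-cycle correction, which is exactly the structural shape of \eqref{eq:diff-rec}.

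The final step is to match $-\sum_p \Res_{z = p}(K \cdot R_{g,n})$ with $G_{g,n}(z_1, \dots, z_n) \cdot dz_1$. The non-ramification poles inside $\Omega$ split into two families: the simple poles of $K$ at $z = z_1$ and $z = \bar z_1 = -z_1$ coming from $P(z_1 - z)$ and $P(z_1 - \bar z) = P(z_1 + z)$; and, for each $j \geq 2$, the simple poles of $R_{g,n}$ at $z = \pm z_j$ coming from the Bergman-kernel factors $\int_0^{z_j} W_{0,2}(z, z_j)$ and $\int_0^{z_j} W_{0,2}(\bar z, z_j)$. The residues at $z = \pm z_1$, paired through $y(\bar z_1) = -y(z_1)$, collapse to a double-derivative of $F_{g-1,n+1}$ and of the stable-splitting products evaluated at $u_1 = u_2 = z_1$, producing the second block of $G_{g,n}$. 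The residues at $z = \pm z_j$ evaluate $K(z_1, z)$ at $z = \pm z_j$, generating the combinations $P(z_1 - z_j) - P(z_1 + z_j) = -[P(z_1 + z_j) - P(z_1 - z_j)]$, and pair with the simple-pole residue of the Bergman integrand together with a derivative of $F_{g,n-1}$, reproducing the sum over $j$ in $G_{g,n}$. The unstable cases $(g,n) = (0,3)$ and $(g,n) = (1,1)$ run through the same mechanism, with the stable-splittings block replaced by the single remaining contribution as recorded in the separate definitions of $G_{0,3}$ and $G_{1,1}$.

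The main obstacle will be the bookkeeping in this residue calculation, especially when $2g-2+n \geq 2$: the quadratic differential $R_{g,n}$ is a sum over stable splittings, each factor of which depends on $z$ or on $\bar z$, and matching the collected residue contributions with the explicit $G_{g,n}$ (including the precise signs of the $P(z_1 + z_j) - P(z_1 - z_j)$ terms and the correct pairing of derivative indices) requires careful tracking of the Galois involution. By comparison, the verification of the $B$-quasi-periodicity of $K$ and the justification of exchanging residue with iterated integration are routine.
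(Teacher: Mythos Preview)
Your proposal is correct and follows essentially the same route as the paper: integrate the topological recursion in $z_2,\dots,z_n$ to obtain $R_{g,n}$, apply the residue theorem on the fundamental domain $\Omega$, match the residues at $z=\pm z_1$ (from $K$) and $z=\pm z_j$, $j\ge 2$ (from $R_{g,n}$) with $G_{g,n}$, and evaluate the boundary term via the quasi-periodicity of $K$ in $z$. In fact you spell out the residue bookkeeping (which residues produce which pieces of $G_{g,n}$) in more detail than the paper, which simply asserts the match by citing \cite[Theorem 4.7]{DM14}; your handling of the unstable cases $(0,3)$ and $(1,1)$ and of the boundary integral is likewise the same.
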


\begin{proof}
First we show the claim in the case $2g-2+n \ge 2$. 
We employ a similar technique used in the proof of \cite[Theorem 3.11]{IS}. 

Integrating the topological recursion relation \eqref{eq:top-rec} 
with respect to $z_{2}, \dots, z_{n}$, we have 
\begin{align} 
\frac{\p}{\p z_{1}} F_{g,n}(z_{1},\dots,z_{n}) \, dz_1 & = 
\int_{0}^{z_{2}} \cdots 
\int_{0}^{z_{n}} W_{g,n}(z_{1},z_{2},\dots,z_{n}) \notag \\
& =  
\frac{1}{2\pi i} \sum_{j=1}^{3}\oint_{\gamma_{j}} 
K(z_1,z) \cdot R_{g,n}(z,z_{1},\dots,z_{n}).
\label{eq:pre-diff-recursion}
\end{align}
Note that, as a differential of $z$, $R_{g,n}$ has a simple pole at 
$z \equiv z_{1}, \bar{z}_{1}, \dots, z_{n}, \bar{z}_n$ modulo $\Lambda$, 
and no other poles except for the ramification points.  
Hence, the residue theorem implies
\begin{align} 
& \frac{\p}{\p z_{1}} F_{g,n}(z_{1},\dots,z_{n}) \, dz_1 \notag \\ 
& = - \sum_{i=1}^n 
\sum_{p=z_{i}, \bar{z}_{i}}
\Res_{z=p} 
K(z_1,z) \cdot R_{g,n}(z,z_{2},\dots,z_{n}) 
+ \frac{1}{2\pi i} \oint_{z \in \partial D}
K(z_1,z) \cdot R_{g,n}(z,z_{2},\dots,z_{n}) \notag \\
& = %
\sum_{i=1}^n 
\sum_{p=z_{i}, \bar{z}_{i}}
\Res_{z=p} 
\frac{P(z_1+z) - P(z_1-z)}{4y(z) \cdot dx(z)} \cdot 
\Biggl( 
\sum_{j=2}^{n} 
\Biggl[ 
\Bigl( P_1(z+z_j) - P_1(z-z_j) \Bigr)
\cdot
\frac{\partial}{\partial u} F_{g,n-1}(u, z_{[\hat{1},\hat{j}]})
\Biggr] \Biggl|_{u=z}
\notag \\
& \quad +
\frac{\partial^2}{\partial u_1 \partial u_2} 
\biggl[
F_{g-1, n+1}(u_1, u_2, z_{[\hat{1}]}) 
 + \sum_{\substack{g_{1}+g_{2}=g \\ 
 I \sqcup J = [\hat{1}]}}^{\text{stable}} 
 F_{g_{1}, |I|+1}(u_1, z_{I}) \cdot
 F_{g_{2}, |J|+1}(u_2, z_{J})
\biggr]\Biggl|_{u_1=u_2=z}
\Biggr) \, dz_1 
\notag \\
& \quad 
+ \frac{1}{2\pi i}\oint_{z \in \partial \Omega} 
K(z_1,z) \cdot R_{g,n}(z,z_{2},\dots,z_{n}).
\label{eq:integrating-TR}
\end{align}
The last term is the integration along 
the boundary of the fundamental domain $\Omega$
of the elliptic curve. 

The first two lines of the right hand-side of \eqref{eq:integrating-TR}
coincides with $G_{g,n}(z_1, \dots, z_n)$ 
in the desired equality \eqref{eq:diff-rec}
(c.f., \cite[Theorem 4.7]{DM14}).
On the other hand, since
\begin{equation}
K(z_1, z+\omega_\ast) - K(z_1,z) = 
\begin{cases}
0 & \text{for $\ast = A$} \\
\displaystyle 
\frac{1}{(y(z) - y(\bar{z})) \cdot dx(z)} 
\cdot \frac{2 \pi i}{\omega_A} \cdot dz_1
& \text{for $\ast = B$},
\end{cases}
\end{equation}
the integration along $\partial \Omega$ is computed as follows:
\begin{align}
& \frac{1}{2\pi i}\int_{z \in \partial D} 
K(z_1,z) \cdot R_{g,n}(z,z_{2},\dots,z_{n}) \\
& \quad =  
\frac{1}{2 \pi i} 
\int_{z \in A} (K(z_1,z) - K(z_1, z+\omega_B)) \cdot
R_{g,n}(z,z_{2},\dots,z_{n}) \\
& \qquad -
\frac{1}{2 \pi i} 
\int_{z \in B} (K(z_1,z) - K(z_1, z+\omega_A)) \cdot
R_{g,n}(z,z_{2},\dots,z_{n})
\\
& \quad = 
-\frac{dz_1}{\omega_A} \oint_{z \in A}
\frac{R_{g,n}(z,z_{2},\dots,z_{n})}
{(y(z) - y(\bar{z})) \cdot dx(z)}.
\end{align}
Thus we have proved \eqref{eq:diff-rec} for $2g-2+n \ge 2$.

The exceptional two cases $(g,n)=(0,3)$ and $(1,1)$ can be checked similarly
by using the identity 
\begin{equation}
\frac{\p}{\p z_1} F_{0,2}(z_1, z_2) \cdot dz_1 = \bigl( P(z_1+z_2) - P(z_1) \bigr) \cdot dz_1 
= - \int^{z_2}_{0} W_{0,2}(\bar{z}_1, z_2)
\end{equation}
which immediately follows from the definition \eqref{eq:def-F02} of $F_{0,2}$.
\end{proof}

The following formula will be used to 
relate the $A$-cycle integral in the
right hand-side of \eqref{eq:diff-rec} 
with the $t$-derivatives.

\begin{lem} \label{lem:diff-rec-and-t-dervative}
For $2g-2+n \ge 1$, we have
\begin{equation}
\frac{\p}{\p t} F_{g,n-1}(z(x_{1}),\dots,z(x_{n-1})) = 
E_{g,n-1}(z(x_1), \dots, z(x_{n-1})), 
\end{equation}
where
\begin{align}
& 
E_{g,n-1}(z_1, \dots, z_{n-1}) \notag \\
& \quad := 
- \frac{1}{\omega_A} \oint_{z \in A} 
\frac{R_{g,n}(z,z_1,\dots,z_{n-1})}{(y(z) - y(\bar{z})) \cdot dx(z)} 
+ \sum_{j=1}^{n-1}   
\frac{P(z_j)}{y(z_{j}) \cdot \frac{dx}{dz}(z_{j})} \cdot
\frac{\p F_{g,n-1}}{\p z_{j}}(z_1,\dots,z_{n-1}).
\end{align}
\end{lem}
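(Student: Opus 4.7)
The plan is to combine Proposition \ref{prop:variation}(i) with Lemma \ref{lem:diff-rec}: I write the $t$-derivative as a residue at $w = 0$ that splits naturally into the two pieces defining $E_{g,n-1}$.

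Since $F_{g,n-1}(z(x_1),\dots,z(x_{n-1}))$ can be written as the iterated integral $\int_{\infty}^{x_1}\cdots\int_{\infty}^{x_{n-1}} W_{g,n-1}(z(x_1'),\dots,z(x_{n-1}'))$ along $t$-independent contours with $t$-independent endpoints, the $t$-derivative commutes with the integration. Applying Proposition \ref{prop:variation}(i) to the integrand and exchanging the residue at $w=0$ with the iterated integration yields
\[
\frac{\p}{\p t} F_{g,n-1}(z(x_1),\dots,z(x_{n-1})) \;=\; \Res_{w=0} \frac{\widetilde{W}_{g,n}(z(x_1),\dots,z(x_{n-1}),w)}{w},
\]
where $\widetilde{W}_{g,n}(z_1,\dots,z_{n-1},w) := \int_0^{z_1}\cdots\int_0^{z_{n-1}} W_{g,n}(z_1',\dots,z_{n-1}',w) = \tfrac{\p F_{g,n}}{\p w}(z_1,\dots,z_{n-1},w)\,dw$ is the $(n-1)$-fold partial primitive of $W_{g,n}$, viewed as a $1$-form in $w$. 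Applying Lemma \ref{lem:diff-rec} to $F_{g,n}$ with $w$ singled out (legitimate by the symmetry of $F_{g,n}$) gives
\[
\widetilde{W}_{g,n}(z_1,\dots,z_{n-1},w) \;=\; G_{g,n}(w,z_1,\dots,z_{n-1})\,dw \;-\; \frac{dw}{\omega_A}\oint_{z\in A}\frac{R_{g,n}(z,z_1,\dots,z_{n-1})}{(y(z)-y(\bar z))\,dx(z)}.
\]
The $A$-cycle integral is $w$-independent, so dividing by $w$ and taking $\Res_{w=0}$ reproduces precisely the first term of $E_{g,n-1}$, while the $G$-contribution equals $G_{g,n}(0,z_1,\dots,z_{n-1})$ provided $G_{g,n}$ is regular at $w=0$, which I verify below.

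It remains to evaluate $G_{g,n}(0,z_1,\dots,z_{n-1})$. The key observation is that $y(w)\cdot\tfrac{dx}{dz}(w) = (\wp'(w))^2 \sim 4/w^6$ near $w = 0$, so the prefactor $1/(y(w)\cdot\tfrac{dx}{dz}(w))$ behaves like $w^6/4$. I will check case-by-case (including the exceptional $(g,n) = (0,3)$ and $(1,1)$) that this $w^6$-zero dominates the mild poles $\p_w F_{0,2}(w,z) = -\zeta(w+z) + \zeta(w) + \tfrac{\eta_A}{\omega_A}z \sim 1/w$ and $\p^2_{u_1 u_2} F_{0,2}(u_1,u_2)\big|_{u_1=u_2=w} = \wp(2w) + \tfrac{\eta_A}{\omega_A} \sim 1/(4w^2)$ arising from unstable $F_{0,2}$-pieces; consequently every term in $G_{g,n}$ carrying this prefactor vanishes at $w=0$. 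The only surviving contribution is the bracket featuring $\tfrac{1}{2 y(z_j)\frac{dx}{dz}(z_j)}\tfrac{\p F_{g,n-1}}{\p z_j}(z_{[\hat w]})$ (which has no $w$-dependent singular prefactor), where $P(w+z_j) - P(w-z_j) \to 2P(z_j)$ at $w = 0$ by oddness of $P$. Summing over $j$ gives
\[
G_{g,n}(0,z_1,\dots,z_{n-1}) \;=\; \sum_{j=1}^{n-1} \frac{P(z_j)}{y(z_j)\cdot\tfrac{dx}{dz}(z_j)} \cdot \frac{\p F_{g,n-1}}{\p z_j}(z_1,\dots,z_{n-1}),
\]
which is exactly the $P$-sum in $E_{g,n-1}$. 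Substituting $z_i = z(x_i)$ completes the proof. The main obstacle is precisely this bookkeeping of unstable $F_{0,2}$-contributions in the $w \to 0$ limit of $G_{g,n}$; once their $w^6$-vanishing is confirmed in each case, Proposition \ref{prop:variation}(i) and Lemma \ref{lem:diff-rec} assemble the formula cleanly.
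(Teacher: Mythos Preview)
Your proof is correct and follows essentially the same route as the paper's: relabel the distinguished variable in Lemma~\ref{lem:diff-rec}, take the residue at the origin, and identify the result with $\partial_t F_{g,n-1}$ via Proposition~\ref{prop:variation}. Your case-by-case verification that $G_{g,n}(0,z_1,\dots,z_{n-1})$ collapses to the $P$-sum is more explicit than what the paper records (it simply states the outcome of the residue computation); the only small remark is that for $n=1$ the identification of the residue with $\partial_t F_g$ uses part~(ii) of Proposition~\ref{prop:variation} rather than~(i).
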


\begin{proof}
Replacing the label $z_1 \leftrightarrow z_n$ in \eqref{eq:diff-rec} 
and taking the following residue around $z_n = 0$, we have 
\begin{multline}
\Res_{z_{n} = 0} 
\left( 
z_n^{-1} \cdot \frac{\p F_{g,n}}{\p z_{n}}(z_{1},\dots,z_{n}) \cdot dz_n
\right)
\\
= - \frac{1}{\omega_A} \oint_{z \in A} 
\frac{R_{g,n}(z,z_1,\dots,z_{n-1})}{(y(z) - y(\bar{z}))dx(z)} 
+ \sum_{j=1}^{n-1}   
\frac{P(z_j)}{y(z_{j}) \cdot \frac{dx}{dz}(z_{j})} \cdot 
\frac{\p F_{g,n-1}}{\p z_{j}}(z_1,\dots,z_{n-1}). 
\end{multline}
Proposition \ref{prop:variation} shows that 
the left hand-side is nothing but the $t$-derivation 
of $F_{g,n-1}$, if we use the $x$-coordinates. 
This completes the proof.  
\end{proof}

Let us set 
$\tilde{G}_{g,n}(z_1,\dots,z_n) := 
\p_{z_1} F_{g,n}(z_1,\dots, z_n) - G_{g,n}(z_1,\dots,z_n)$.
By a similar computation in \cite[Theorem 6.5]{DM14}, we have
\begin{align}
& \sum_{\substack{g \ge 0,~ n \ge 1 \\ 2g-2+n = m}}
\left[ 
\frac{2\tilde{G}_{g,n}(z_1,\dots,z_n)}{{(n-1)!}}
 \right]_{z_1 = \cdots = z_n = z(x)} \notag \\[+.5em]
&\quad = 
\begin{cases} \displaystyle
 2 y\bigl( z(x) \bigr) \cdot \frac{\p S_m(x)}{\p x}
 + \left( - \frac{\p_x y\bigl(z(x)\bigr)}{y\bigl(z(x)\bigr)} \right)
 \cdot \frac{\p S_{m-1}(x)}{\p x}  &  \\[+1.5em]
 \hspace{+5.em} 
 \displaystyle + 
 \sum_{\substack{m_1, m_2 \ge 1 \\ m_1 + m_2 = m-1}} 
 \frac{\p S_{m_1}(x)}{\p x} \cdot \frac{\p S_{m_2}(x)}{\p x} 
 +  \frac{\p^2 S_{m-1}(x)}{\p x^2} & \quad \text{for $m \ge 2$}
 \\[+2.5em]
 \displaystyle
 2 y\bigl( z(x) \bigr) \cdot \frac{\p S_1(x)}{\p x}
 + \left( - \frac{\p_x y\bigl(z(x)\bigr)}{y\bigl(z(x)\bigr)} \right)
 \cdot \frac{\p S_{0}(x)}{\p x} 
 - \left( \frac{\p S_{0}(x)}{\p x} \right)^2 
 + \frac{\p^2 S_{0}(x)}{\p x^2} & \quad \text{for $m = 1$}.
 \end{cases}
 \label{eq:pre-Riccati-relation}
\end{align}
(Recall that $S_m(x)$'s are defined in \S \ref{subsec:WKB-BPZ}.)
Here we have used the identity 
$y\bigl( z(x) \bigr) = \frac{dx}{dz}\bigl( z(x) \bigr)$.

Next, using Lemma \ref{lem:diff-rec} and Lemma \ref{lem:diff-rec-and-t-dervative}, 
let us find another expression of the left hand-side of \eqref{eq:pre-Riccati-relation}.
First, we note 
\begin{align}
& 
\sum_{\substack{g \ge 0,~ n \ge 1 \\ 2g-2+n = m}}
\left[\frac{2\tilde{G}_{g,n}(z_1,\dots,z_n)}{{(n-1)!}}
\right]_{z_1 = \cdots = z_n = z(x)} \notag 
\\ 
& = 
\sum_{\substack{g \ge 0,~ n \ge 1 \\ 2g-2+n = m}}
\left[\frac{2\tilde{G}_{g,n}(z_1,\dots,z_n)}{{(n-1)!}}
\right]_{z_1 = \cdots = z_n = z(x)} 
- 
\sum_{\substack{g \ge 0,~ n \ge 2 \\ 2g-2+n = m}}
\left[\frac{2\tilde{G}_{g,n}(z_1,\dots,z_n)}{{(n-1)!}}
\right]_{z_1 = \cdots = z_n = z(x)} \notag  \\
& \qquad  + 
\sum_{\substack{g \ge 0,~ n \ge 2 \\ 2g-2+n = m}}
\left[ 
\frac{2\tilde{G}_{g,n}(z_1,\dots,z_n)}{{(n-1)!}}
- \frac{2 E_{g,{n-1}}(z_1,\dots,z_{n-1})}{{(n-1)!}}
\right]_{z_1 = \cdots = z_n = z(x)} \notag \\
& \qquad 
+ \sum_{\substack{g \ge 0,~ n \ge 2 \\ 2g-2+n = m}}
\left[\frac{2 E_{g,{n-1}}(z_1,\dots,z_{n-1})}{{(n-1)!}}
\right]_{z_1 = \cdots = z_n = z(x)}. 
\label{eq:sub-Riccati-0}
\end{align}
The first line of the right hand-side is 
\begin{align}
& \sum_{\substack{g \ge 0,~ n \ge 1 \\ 2g-2+n = m}}
\left[\frac{2\tilde{G}_{g,n}(z_1,\dots,z_n)}{{(n-1)!}}
\right]_{z_1 = \cdots = z_n = z(x)} 
- 
\sum_{\substack{g \ge 0,~ n \ge 2 \\ 2g-2+n = m}}
\left[\frac{2\tilde{G}_{g,n}(z_1,\dots,z_n)}{{(n-1)!}}
\right]_{z_1 = \cdots = z_n = z(x)} \notag  \\
& \qquad = 
\begin{cases}
0 & \text{if $m$ is even} \\[+.3em]
\displaystyle 
2 \frac{\p}{\p t} F_{\frac{m+1}{2}} & \text{if $m$ is odd}
\end{cases} 
\qquad = \quad
\left[ 2 \frac{\p F}{\p t} (t,\nu; \hbar) \right]_{\hbar^{m-1}}.
\label{eq:sub-Riccati-1}
\end{align}
(C.f., \cite[Lemma 4.5]{IS}.) 
The notation $[\bullet]_{\hbar^{k}}$ means 
the coefficient of $\hbar^k$ in a formal power series 
$\bullet$ of $\hbar$. 
The second and third lines are also expressed as
\begin{align}
& \sum_{\substack{g \ge 0,~ n \ge 2 \\ 2g-2+n = m}}
\left[ 
\frac{2\tilde{G}_{g,n}(z_1,\dots,z_n)}{{(n-1)!}}
- \frac{2 E_{g,{n-1}}(z_1,\dots,z_{n-1})}{{(n-1)!}}
\right]_{z_1 = \cdots = z_n = z(x)} \notag \\
& \quad = 
- \sum_{\substack{g \ge 0,~ n \ge 2 \\ 2g-2+n = m}} 
\left[\sum_{j=1}^{n-1} 
\frac{2P(z_j)}{y(z_{j}) \cdot \frac{dx}{dz}(z_{j})} \cdot 
\frac{\p F_{g,n-1}}{\p z_{j}}(z_1,\dots,z_{n-1}) 
\right]_{z_1 = \cdots = z_n = z(x)}
\notag \\
& \quad = 
- \frac{2P\bigl(z(x) \bigr)}{y\bigl( z(x) \bigr)} 
\cdot \frac{\p S_{m-1}(x)}{\p x}
\label{eq:sub-Riccati-2}
\end{align} 
and 
\begin{align} 
& \sum_{\substack{g \ge 0,~ n \ge 2 \\ 2g-2+n = m}}
\left[\frac{2 E_{g,{n-1}}(z_1,\dots,z_{n-1})}{{(n-1)!}}
\right]_{z_1 = \cdots = z_n = z(x)} 
\notag \\
& \quad  = 2 \sum_{\substack{g \ge 0,~ n \ge 2 \\ 2g-2+n = m}}
\frac{\p}{\p t} F_{g,n-1}(z(x),\dots, z(x))  
~=~ 
2 \frac{\p S_{m-1}(x)}{\p t},
\label{eq:sub-Riccati-3}
\end{align}
respectively (c.f., Lemma \ref{lem:diff-rec-and-t-dervative}).
Combining \eqref{eq:pre-Riccati-relation}--\eqref{eq:sub-Riccati-3}, 
we have the following recursion relation satisfied by $S_m$'s: 
\begin{equation}
\label{eq:the-Riccati-recursion-Sm}
\sum_{\substack{m_1, m_2 \ge -1 \\ m_1 + m_2 = m-1}} 
 \frac{\p S_{m_1}(x)}{\p x} \cdot \frac{\p S_{m_2}(x)}{\p x} 
 +  \frac{\p^2 S_{m-1}(x)}{\p x^2} 
 = 2  \frac{\p S_{m-1}(x)}{\p t} + 
 \left[ 2 \frac{\p F}{\p t} (t,\nu; \hbar) \right]_{\hbar^{m-1}}.
\end{equation}
Here we used \eqref{eq:def-of-S-1} and \eqref{eq:derivative-of-S0} 
to obtain the above expression. 
Although \eqref{eq:the-Riccati-recursion-Sm} 
is valid for $m \ge 1$ a-priori (because it is derived from 
\eqref{eq:pre-Riccati-relation} etc.), 
we can verify that it is also valid for $m=0$ 
thanks to the property \eqref{eq:t-derivative-S-1}. 
Together with the equation 
\begin{equation}
\left( \frac{\p S_{-1}(x)}{\p x} \right)^2 
= y\bigl( z(x) \bigr)^2 = 4x^3 + 2t x + u(t,\nu)
\end{equation}
for the leading term, the recursion relations are 
summarized into a single PDE 
\begin{equation} \label{eq:PDE-Riccati-quantum-curve}
\hbar^{2}\left( \left(\frac{\p S}{\p x}\right)^{2} + 
\frac{\p^{2} S}{\p x^{2}} \right) 
 =  
2\hbar^{2} \frac{\p S}{\p t} + 
\left( 4x^{3}+2t x +  2\hbar^2 \frac{\p F}{\p t}(t,\nu;\hbar) \right)
\end{equation}
for $S$ given in \eqref{eq:WKB-series-B}.
The last equation is equivalent to the PDE \eqref{eq:BPZ-equation}, 
and hence, we have proved Theorem \ref{thm:BPZ-type-equation}.




\end{document}